\def\NAT@def@citea{\def\@citea{\NAT@separator}}
\DeclareMathOperator*{\argmax}{arg\,max}
\theoremstyle{plain}
\newtheorem{theorem}{Theorem}[section]
\newtheorem{lemma}{Lemma}[section]
\newtheorem{corollary}{Corollary}[theorem]
\newtheorem{proposition}{Proposition}[section]
\newtheorem{remark}{Remark}[section]
\newcommand{\numbereqn}{\addtocounter{equation}{1}\tag{\theequation}} 
\begin{document}


\title{Large-Scale Multiple Hypothesis Testing with the Normal-Beta Prime Prior}

\author{
\name{Ray Bai \textsuperscript{a}\thanks{CONTACT Ray Bai. Email: Ray.Bai@pennmedicine.upenn.edu} and Malay Ghosh \textsuperscript{b}}
\affil{\textsuperscript{a} Department of Biostatistics, Epidemiology, and Informatics,  University of Pennsylvania, Philadelphia, PA 19104 USA; \textsuperscript{b} Department of Statistics, University of Florida, Gainesville, FL 32611 USA}
}

\maketitle

\begin{abstract}
We revisit the problem of simultaneously testing the means of $n$ independent normal observations under sparsity. We take a Bayesian approach to this problem by studying a scale-mixture prior known as the normal-beta prime (NBP) prior. To detect signals, we propose a hypothesis test based on thresholding the posterior shrinkage weight under the NBP prior. Taking the loss function to be the expected number of misclassified tests, we show that our test procedure asymptotically attains the optimal Bayes risk when the signal proportion $p$ is known. When $p$ is unknown, we introduce an empirical Bayes variant of our test which also asymptotically attains the Bayes Oracle risk in the entire range of sparsity parameters $p \propto n^{-\epsilon}, \epsilon \in (0, 1)$. Finally, we also consider restricted marginal maximum likelihood (REML) and hierarchical Bayes approaches for estimating a key hyperparameter in the NBP prior and examine multiple testing under these frameworks.
\end{abstract}

\begin{keywords}
Bayes oracle, empirical Bayes, multiple testing, shrinkage prior, sparsity
\end{keywords}

\section{Introduction}

\subsection{Large-Scale Testing of Normal Means} \label{background}
\noindent
Suppose we observe an $n$-component random observation $(X_1, \ldots, X_n) \in \mathbb{R}^n$, such that
\begin{equation} \label{X=theta+eps}
X_i \sim \mathcal{N}(\theta_i, 1), \hspace{.2cm} i = 1, \ldots, n.
\end{equation}
This simple framework is the basis for a number of high-dimensional problems, including genetics, wavelet analysis, and image reconstruction \cite{JohnstoneSilverman2004}. Under model (\ref{X=theta+eps}), we are primarily interested in identifying the few signals ($\theta_i \neq 0$). This amounts to performing $n$ simultaneous tests, $H_{0i}: \theta_i = 0$ vs. $H_{1i}: \theta_i \neq 0$, $i = 1, \ldots, n$.  

In the high-dimensional setting where $n$ is very large, sparsity is a very common phenomenon. In genetics, for example, the $X_i$'s may represent thousands of gene expression data points, but only a few genes are significantly associated with the phenotype of interest. For instance, \cite{WellcomeTrust2007} has confirmed that only seven genes have a non-negligible association with Type I diabetes.

\subsection{Scale-Mixture Shrinkage Priors}
Scale-mixture shrinkage priors are widely used for obtaining (nearly) sparse estimates of $\bm{\theta}$ in (\ref{X=theta+eps}). These priors  take the form,
\begin{equation} \label{scalemixture}
\theta_i | \sigma_i^2 \sim \mathcal{N}(0, \sigma_i^2), \hspace{.2cm} \sigma_i^2 \sim \pi(\sigma_i^2), \hspace{.2cm} i = 1, \ldots, n,
\end{equation}
where $\pi: [0, \infty) \rightarrow [0, \infty)$ is a density on the positive reals. These priors typically contain heavy mass around zero, so that the posterior density is heavily concentrated around $\bm{0} \in \mathbb{R}^n$.  However, they also retain heavy enough tails in order to correctly identify and prevent overshrinkage of the true signals. Examples of (\ref{scalemixture}) include the popular horseshoe prior \cite{CarvalhoPolsonScott2010} and the Bayesian lasso \cite{ParkCasella2008}. Priors of the type (\ref{scalemixture}) have also been considered by numerous other authors: see, e.g. \cite{Strawderman1971, Berger1980, GriffinBrown2013, BhattacharyaPatiPillaiDunson2015, ArmaganDunsonLee2013, BhadraDattaPolsonWillard2017}.

 From (\ref{scalemixture}), we see that the posterior mean of $\theta_i$ under these priors is given by
\begin{equation} \label{posteriormean}
\mathbb{E} \{ \mathbb{E} ( \theta_i | X_i, \sigma_i^2 ) \} = \left\{ \mathbb{E} ( 1 - \kappa_i ) | X_1, \ldots, X_n \right\} X_i, 
\end{equation}
where $\kappa_i = 1/(1 + \sigma_i^2)$. By (\ref{posteriormean}), it is clear that the shrinkage weight $\kappa_i$ plays a crucial role in the amount of posterior shrinkage under these priors. 

\subsection{Multiple Testing Under Sparsity}

Assuming that the true data-generating model is a two-components mixture density, \cite{BogdanChakrabartiFrommletGhosh2011} studied the risk properties of a large number of multiple testing rules. Specifically, \cite{BogdanChakrabartiFrommletGhosh2011} considered a symmetric 0-1 loss function taken to be the expected total number of misclassified tests. Under mild conditions, \cite{BogdanChakrabartiFrommletGhosh2011} arrived at a simple closed form for the asymptotic Bayes risk under this loss. They termed this as the asymptotically Bayes optimal risk under sparsity (ABOS), or the Bayes Oracle risk. They then provided necessary and sufficient conditions for which a number of classical multiple test procedures (e.g. the Bonferroni correction or the Benjamini-Hochberg \cite{BenjaminiHochberg1995} procedure) could asymptotically equal the Bayes Oracle risk. A thorough discussion of this decision theoretic framework is presented in Section \ref{ABOS}.

Testing rules induced by scale-mixture shrinkage priors have also been studied within this decision theoretic framework. Since scale-mixture shrinkage priors of the form (\ref{scalemixture}) are absolutely continuous, they place zero mass at exactly zero. Thus, in order to classify means as either signal or noise, some thresholding rule must be applied. One method of doing this is by thresholding the posterior shrinkage weight $\kappa_i$ in (\ref{posteriormean}) as follows. For the $i$th component, the test procedure based on $\kappa_i$ is:
\begin{equation} \label{genericthreshold}
\textrm{Reject } H_{0i} \textrm{ if } \mathbb{E}(1-\kappa_i | X_1, \ldots, X_n) > \frac{1}{2}.
\end{equation}
Depending on how conservative the test must be, the fraction $1/2$ can be replaced by any $\alpha \in (0, 1)$, and then the final results will depend on $\alpha$. However, for most practical applications, it seems as though this `half-thresholding' rule of 1/2 is sensible \cite{CarvalhoPolsonScott2010, DattaGhosh2013, GhoshTangGhoshChakrabarti2016}.

Assuming that the $\theta_i$'s come from a two-components model, \cite{DattaGhosh2013} showed that rule (\ref{genericthreshold}) under the horseshoe prior asymptotically attains the Bayes Oracle risk up to a multiplicative constant.  \cite{GhoshTangGhoshChakrabarti2016} generalized this result to a general class of shrinkage priors of the form,
\begin{equation} \label{globallocal}
\theta_i | \tau, \lambda_i \sim \mathcal{N}(0,  \lambda_i \tau), \hspace{.2cm} \lambda_i \sim \pi(\lambda_i) = K \lambda_i^{-a-1} L(\lambda_i),
\end{equation}
where $\tau > 0$ is a variance rescaling parameter, $K$ is the constant of proportionality, $a > 0$, and $L(\cdot)$ is a measurable, nonconstant, slowly varying function. \cite{GhoshChakrabarti2017} later showed that thresholding rule (\ref{genericthreshold}) for this same class of priors (\ref{globallocal}) could even asymptotically attain the exact Bayes Oracle risk.  \cite{BhadraDattaPolsonWillard2017} also extended the same rule for the horseshoe+ prior, showing that rule (\ref{genericthreshold}) based on the horseshoe+ prior asymptotically attains the Bayes Oracle risk up to a multiplicative constant. 

Recently, \cite{Salomond2017} studied testing rule (\ref{genericthreshold}) under an even broader class of normal scale-mixture shrinkage priors (\ref{scalemixture}) which subsumes priors of the form (\ref{globallocal}). In this class, the prior on the scale parameter $\sigma_i^2$, $\pi(\sigma_i^2$), satisfies the three properties given in \cite{VanDerPasSalomondSchmidtHieber2016}. The properties in \cite{VanDerPasSalomondSchmidtHieber2016} are sufficient for scale-mixture priors to obtain the minimax posterior contraction rate under the sparse normal means model (\ref{X=theta+eps}). For priors satisfying these conditions, \cite{Salomond2017} derived upper bounds on the asymptotic Bayes risk for both non-adaptive and data-adaptive testing rules. He showed that the upper bound on the Bayes risk for this general class of priors is of the same order as the Bayes Oracle risk up to a multiplicative constant. 

The results in this manuscript were developed independently of \cite{Salomond2017} and give sharper bounds than those of \cite{Salomond2017}.  \cite{Salomond2017} did not obtain the exact asymptotic Bayes Oracle risk nor did he derive asymptotic lower bounds on the Type I and Type II errors or the Bayes risk. In contrast, our paper establishes tight upper \textit{and} lower bounds. To further highlight the distinction, we refer to testing rules as having the Bayes Oracle property if and only if they can be shown to asymptotically obtain the exact Bayes Oracle risk in \cite{BogdanChakrabartiFrommletGhosh2011}.  Further, the prior that we propose in this paper departs from the family of priors (\ref{globallocal}) considered by \cite{GhoshChakrabarti2017} because it does not require a variance rescaling parameter $\tau > 0$. Therefore, our results also do not automatically follow from those of \cite{GhoshChakrabarti2017}.

In this article, we consider a Bayesian scale-mixture shrinkage prior with the beta prime density as its scale parameter and no variance rescaling parameter $\tau$. We call our model the normal-beta prime (NBP) model. We highlight some of our contributions:

\begin{enumerate}
\item
We investigate the properties of the NBP model with \textit{varying} hyperparameters $(a, b)$.  Since we allow the hyperparameters to vary with the sample size, the concentration inequalities for the beta prime hierarchical model established in Section \ref{NBPPrior} are new, and thus, may be of independent interest for Bayesian inference involving the beta prime density as a prior. 

\item
We derive both lower and upper bounds on Type I and Type II probabilities under thresholding rules based on the NBP's posterior shrinkage factor. We show that with appropriate choices of $(a, b)$, our method asymptotically achieves the Bayes Oracle risk \textit{exactly}, both when the true number of signals $p$ is known and when it is unknown but is estimated with an appropriate empirical Bayes estimator. 

\item
Inspired by the recent work of \cite{VanDerPasSzaboVanDerVaart2017}, we introduce two other data-adaptive methods for estimating the hyperparameter $a$ in the NBP model based on restricted marginal maximum likelihood (REML) and hierarchical Bayes estimation. We study multiple testing procedures under these methods for a variety of shrinkage priors and show that they mimic oracle performance. 

\end{enumerate}
The organization of this paper is as follows. In Section \ref{NBPPrior}, we introduce the normal-beta prime (NBP) prior and establish new concentration inequalities for the beta prime density when it is employed as a scale parameter in Bayesian hierarchical models. In Section \ref{NBPTesting}, we consider two different testing rules -- one non-adaptive and one data-adaptive -- based on thresholding the posterior shrinkage weight and illustrate that they both possess the Bayes Oracle property. In Section \ref{DataAdaptiveMethods}, we introduce a restricted marginal maximum likelihood approach and a hierarchical Bayes approach for estimating the sparsity parameter in the NBP prior. In Section \ref{Simulations}, we present simulation results to validate our theoretical findings. Finally, in Section \ref{DataAnalysis}, we utilize the NBP prior to analyse a prostate cancer data set. 

Proofs for the propositions and theorems in this article are available in the Supplementary Materials.

\subsection{Notation}
We use the following notations for the rest of the paper. Let $\{ a_n \}$ and $\{ b_n \}$ be two  non-negative sequences of real numbers indexed by $n$, where $b_n \neq 0$ for sufficiently large $n$.  If $\lim_{n \rightarrow \infty} a_n/b_n = 1$, we write $a_n \sim b_n$. If $| a_n/b_n | \leq M$ for all sufficiently large $n$ where $M > 0$ is a positive constant independent of $n$, then we write $a_n = O(b_n)$. If $\lim_{n \rightarrow \infty} a_n/b_n = 0$, we write $a_n = o(b_n)$. Thus, $a_n = o(1)$ if $\lim_{n \rightarrow \infty} a_n = 0$.  

Throughout the paper, we also use $Z$ to denote a standard normal $\mathcal{N}(0, 1)$ random variable having cumulative distribution function and probability density function $\Phi( \cdot )$ and $\phi ( \cdot)$, respectively.

\section{The Normal-Beta Prime (NBP) Prior} \label{NBPPrior}

Suppose we observe $\bm{X} \sim \mathcal{N} (\bm{\theta}, \bm{I}_n)$, and our task is to perform signal detection on the $n$-dimensional vector, $\bm{\theta}$. Consider putting the normal-beta prime (NBP) prior on each $\theta_i, i = 1, \ldots, n$, as follows:
\begin{equation} \label{NBPhier}
\begin{array}{c}
\theta_i | \sigma_i^2 \sim \mathcal{N}(0, \sigma_i^2), i = 1, \ldots, n, \\
\sigma_i^2 \sim \beta'(a, b), i = 1, \ldots, n,
\end{array} 
\end{equation} 
where $\beta'(a,b)$ denotes the beta prime density,
\begin{equation} \label{TPBN}
\pi(\sigma_i^2) = \frac{\Gamma(a+b)}{\Gamma(a) \Gamma(b)} (\sigma_i^2)^{a-1} (1 + \sigma_i^2)^{-(a+b)}, i = 1, ..., n,
\end{equation}
and $a > 0, b>0$. We point out that \cite{ArmaganClydeDunson2011} also considered the beta prime prior as a prior in a normal scale-mixture model. Specifically,  \cite{ArmaganClydeDunson2011} proposed the prior, $\theta_i \sim \mathcal{N}(0,  \lambda_i \tau$) with (\ref{TPBN}) as the prior for the local scale parameters, $\lambda_i \sim \pi(\lambda_i)$, and an additional variance rescaling parameter $\tau > 0$. They called their model the three parameter beta normal (TPBN) prior. Thus, the NBP model can be thought of as a special case of the TPBN prior with $\tau = 1$.   Our work differs from \cite{ArmaganClydeDunson2011} in that \cite{ArmaganClydeDunson2011} recommended fixing the hyperparameters $(a,b)$ \textit{a priori} and controlling the sparsity of the model through the variance rescaling parameter $\tau$. In contrast, we recommend fixing $\tau = 1$ and controlling the sparsity in our model through the hyperparameters $(a, b)$.

Under the NBP model, the priors are \textit{a priori} independent, so the posterior mean of $\theta_i$ under (\ref{NBPhier}) is given by
\begin{equation} \label{postmeanNBP}
\mathbb{E} \{ \mathbb{E} ( \theta_i | X_i, \sigma_i^2 ) \} = \left\{ \mathbb{E} ( 1 - \kappa_i ) | X_i \right\} X_i, 
\end{equation}
where $\kappa_i = 1/(1 + \sigma_i^2)$. Using a simple transformation of variables, we also see that the posterior density of the shrinkage factor $\kappa_i$ is proportional to
\begin{equation} \label{kappadensity}
\pi(\kappa_i | X_i) \propto \exp \left( - \frac{\kappa_i X_i^2}{2} \right) \kappa_i^{b - 1/2} (1 - \kappa_i)^{a - 1}, \hspace{.3cm} \kappa_i \in (0, 1).
\end{equation} 
From (\ref{postmeanNBP}) and (\ref{kappadensity}), it is clear that the amount of posterior shrinkage is controlled by the shrinkage factor $\kappa_i$. For example, with $a = b = 0.5$, we obtain the standard half-Cauchy density $\mathcal{C}^+ (0,1)$ for $\sigma_i$. As noted by \cite{CarvalhoPolsonScott2010} and \cite{PolsonScott2012}, when $\mathcal{C}^+ (0,1)$ is used as the prior for $\sigma_i$ in (\ref{scalemixture}), the marginal density for a single $\theta$ is unbounded at zero. In the next proposition, we show that for \textit{any} choice of $a \in (0, 1/2]$, the marginal distribution for $\theta$ under the NBP prior also has a singularity at zero.

\begin{proposition}
\label{Prop:1}
 Let $\theta$ be an individual unknown population mean in (\ref{X=theta+eps}). If $\theta$ is endowed with the NBP prior (\ref{NBPhier}), then the marginal distribution of $\theta$  is unbounded with a singularity at zero for any $0 < a \leq 1/2$. 
\end{proposition}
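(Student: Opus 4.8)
The plan is to write the marginal density of $\theta$ explicitly as a one-dimensional integral over the scale parameter, and then show that this integral diverges as $\theta \to 0$ precisely when $a \le 1/2$. First I would marginalize out $\sigma^2$ in the hierarchy (\ref{NBPhier}): combining the $\mathcal{N}(0,\sigma^2)$ kernel with the beta prime density (\ref{TPBN}) and writing $u = \sigma^2$ (so that the power of $u$ in the combined integrand is $u^{-1/2}\cdot u^{a-1}=u^{a-3/2}$), the marginal density becomes, up to the positive normalizing constant $C = \Gamma(a+b)/[\Gamma(a)\Gamma(b)\sqrt{2\pi}]$,
\begin{equation*}
\pi(\theta) = C \int_0^\infty u^{a - 3/2} \exp\left( -\frac{\theta^2}{2u} \right) (1+u)^{-(a+b)}\, du .
\end{equation*}

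Next I would let $\theta \to 0$. For each fixed $u > 0$ the factor $\exp(-\theta^2/(2u))$ increases monotonically to $1$ as $\theta^2 \downarrow 0$, so the integrand increases monotonically to the nonnegative limit $u^{a-3/2}(1+u)^{-(a+b)}$. By the monotone convergence theorem I can interchange the limit and the integral, giving
\begin{equation*}
\lim_{\theta \to 0} \pi(\theta) = C \int_0^\infty u^{a - 3/2} (1+u)^{-(a+b)}\, du .
\end{equation*}

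Then I would analyze the convergence of this limiting integral by examining its two endpoints. Near $u = \infty$ the integrand behaves like $u^{-b - 3/2}$, which is integrable for every $b > 0$, so the tail always contributes a finite amount. Near $u = 0$ the integrand behaves like $u^{a - 3/2}$, and $\int_0^1 u^{a-3/2}\,du$ is finite if and only if $a - 3/2 > -1$, i.e. $a > 1/2$. Hence for $0 < a \le 1/2$ the integral diverges to $+\infty$, which by the display above shows $\lim_{\theta \to 0}\pi(\theta) = +\infty$; that is, the marginal density of $\theta$ has a singularity at zero.

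The only delicate point is the justification of exchanging the limit with the integral, and monotone convergence handles this cleanly because the integrand is nonnegative and monotone in $\theta^2$. The remaining work is the routine endpoint comparison of exponents, which isolates the threshold $a = 1/2$; since $b > 0$ the large-$u$ behavior never causes a problem, so the singularity is driven entirely by the small-$u$ exponent $a - 3/2$.
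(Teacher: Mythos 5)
Your proof is correct. Where it differs from the paper's is in how the marginal is represented: the paper first invokes the factorization of the beta prime as a product of independent gamma and inverse-gamma variables ($\sigma^2 = \lambda\xi$ with $\lambda \sim \mathcal{G}(a,1)$, $\xi \sim \mathcal{IG}(b,1)$, following Armagan, Clyde, and Dunson), integrates out $\xi$ in closed form, and only then analyzes the remaining integral $\int_0^\infty \bigl(\tfrac{\theta^2}{2\lambda}+1\bigr)^{-(b+1/2)}\lambda^{a-3/2}e^{-\lambda}\,d\lambda$ via a lower bound and monotone convergence. You instead marginalize directly against the beta prime density, obtaining $C\int_0^\infty u^{a-3/2}e^{-\theta^2/(2u)}(1+u)^{-(a+b)}\,du$, and apply monotone convergence and an endpoint comparison to that single integral. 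The analytic core is identical in both cases — the singularity is driven by the exponent $a-3/2$ at the origin, with divergence exactly when $a \le 1/2$ — but your route is more self-contained (no appeal to the gamma/inverse-gamma reparametrization) and, as a bonus, shows the threshold is sharp: for $a > 1/2$ your limiting integral is finite, so the marginal is bounded at zero, a direction the paper's lower-bound argument does not address. The paper's reparametrization is not wasted effort in context, however, since the same $\sigma^2 = \lambda\xi$ decomposition is reused to build the Gibbs sampler in Appendix D.
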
 

\begin{figure}[t!]
  \centering
  \includegraphics[width=3.8in]{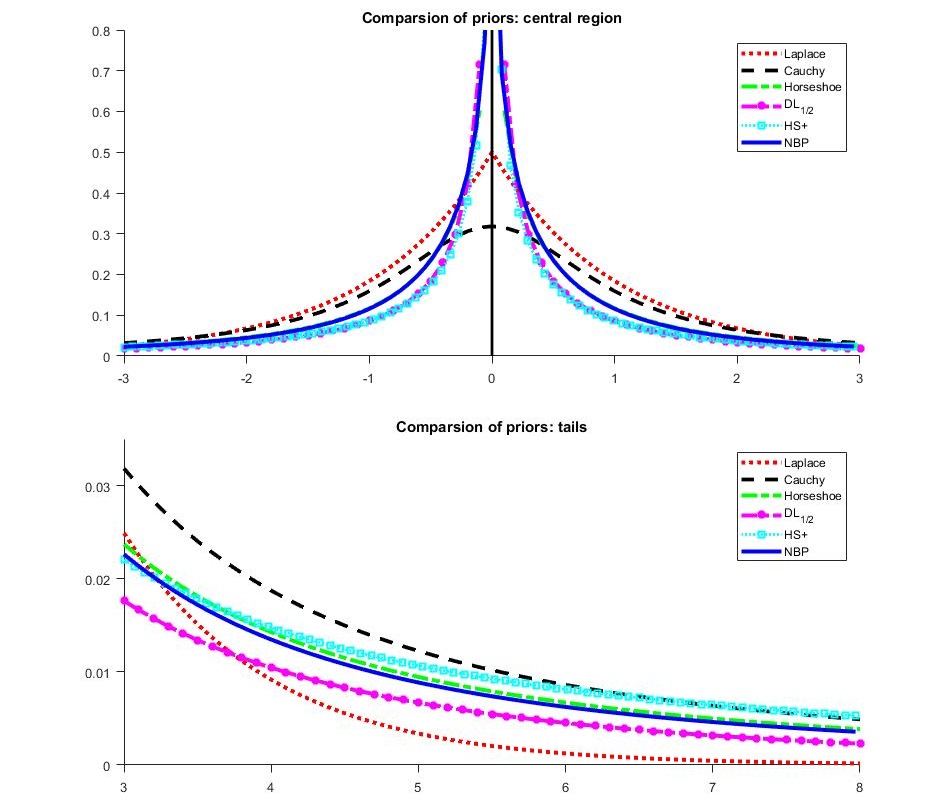}
\caption{Marginal density of the NBP prior (\ref{NBPhier}) with hyperparameters $a = 0.48, b = 0.52$, in comparison to other shrinkage priors. The HS+ prior is the marginal density of the horseshoe+, and the $\textrm{ DL}_{1/2}$ prior is the marginal density for the Dirichlet-Laplace density with $\mathcal{D}(1/2, ...., 1/2)$ specified as a prior in the Bayesian hierarchy.}
\label{fig:1}
\end{figure}

Proposition \ref{Prop:1} gives us some insight into how we should choose the hyperparameters in (\ref{NBPhier}). Namely, we see that for small values of $a$, the NBP prior has a singularity at zero, similar to the horseshoe and the Dirichlet-Laplace \cite{BhattacharyaPatiPillaiDunson2015} priors. Thus, small values of $a$ enable the NBP to obtain sparse estimates of the $\theta_i$'s by shrinking most observations to zero. As we will illustrate in Section \ref{ConcentrationProperties}, the tails of the NBP prior are still heavy enough to identify signals that are significantly far away from zero.

Figure \ref{fig:1} gives a plot of the marginal density $\pi(\theta)$ for the NBP prior (\ref{NBPhier}), with $a = 0.48$ and $b = 0.52$. Figure \ref{fig:1} shows that with small values of $a$ and $b$, the NBP has a singularity at zero, but it maintains the same tail robustness as other well-known shrinkage priors.

\subsection{Concentration Properties of the NBP Prior} \label{ConcentrationProperties}
Consider the NBP prior given in (\ref{NBPhier}), but suppose that we allow the hyperparameter $a \equiv a_n$ to vary with $n$ as $n \rightarrow \infty$. Namely, we allow $0 <a_n <1$ for all $n$, but $a_n \rightarrow 0$ as $n \rightarrow \infty$ so that even more mass is placed around zero as $n \rightarrow \infty$. We also fix $b$ to lie in the interval $(1/2, \infty)$. To emphasize that the hyperparameter $a_n$ depends on $n$, we rewrite the prior (\ref{NBPhier}) as
\begin{equation} \label{NBPn}
\begin{array}{c}
\theta_i | \sigma_i^2 \sim \mathcal{N}(0, \sigma_i^2), i = 1, \ldots, n, \\
\sigma_i^2 \sim \beta' (a_n, b), i = 1, \ldots, n, \\
\end{array}
\end{equation} 
where $a_n \in ( 0, 1)$ with $a_n = o(1)$ and $b \in (1/2, \infty)$. For the rest of the paper, we label this particular variant of the NBP prior as the $\textrm{NBP}_{n}$ prior. 

As described in Section \ref{NBPPrior}, the shrinkage factor $\kappa_i = 1/(1+ \sigma_i^2)$ plays a critical role in the amount of shrinkage of each observation $X_i$.  In this section, we further characterize the tail properties of the posterior distribution $\pi (\kappa_i | X_i)$. Our theoretical results demonstrate that the $\textrm{NBP}_{n}$ prior (\ref{NBPn}) shrinks most estimates of $\theta_i$'s to zero but still has heavy enough tails to identify true signals. In the following results, we assume the $\textrm{NBP}_{n}$ prior on $\theta_i$ for $X_i \sim \mathcal{N}(\theta_i, 1)$.

\begin{theorem} 
\label{Th:1}
For any $a_n, b \in (0, \infty)$,
\begin{equation*}
\mathbb{E} (1 - \kappa_i | X_i) \leq e^{X_i^2/2} \left( \frac{a_n}{a_n + b + 1/2} \right).
\end{equation*}
\end{theorem}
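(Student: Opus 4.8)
The plan is to work directly with the explicit posterior density of the shrinkage factor in (\ref{kappadensity}), which for the $\textrm{NBP}_n$ prior reads $\pi(\kappa_i \mid X_i) \propto \exp(-\kappa_i X_i^2/2)\, \kappa_i^{b-1/2}(1-\kappa_i)^{a_n-1}$ on $(0,1)$. Writing the conditional expectation as the ratio
\begin{equation*}
\mathbb{E}(1-\kappa_i \mid X_i) = \frac{\int_0^1 \exp(-\kappa_i X_i^2/2)\, \kappa_i^{b-1/2}(1-\kappa_i)^{a_n}\, d\kappa_i}{\int_0^1 \exp(-\kappa_i X_i^2/2)\, \kappa_i^{b-1/2}(1-\kappa_i)^{a_n-1}\, d\kappa_i},
\end{equation*}
I observe that the only effect of the extra factor $(1-\kappa_i)$ is to raise the exponent of $(1-\kappa_i)$ from $a_n-1$ in the denominator to $a_n$ in the numerator; the normalizing constant cancels, and both integrals are of Beta type once the exponential is dealt with.

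The key idea is to bound the exponential factor in \emph{opposite} directions in the numerator and denominator so that the factor $e^{X_i^2/2}$ is extracted cleanly. Since $\kappa_i \in (0,1)$, in the numerator I would use $\exp(-\kappa_i X_i^2/2) \leq 1$, giving the upper bound $\int_0^1 \kappa_i^{b-1/2}(1-\kappa_i)^{a_n}\,d\kappa_i = B(b+1/2,\, a_n+1)$. In the denominator I would use the reverse monotone bound $\exp(-\kappa_i X_i^2/2) \geq e^{-X_i^2/2}$, giving the lower bound $e^{-X_i^2/2} \int_0^1 \kappa_i^{b-1/2}(1-\kappa_i)^{a_n-1}\,d\kappa_i = e^{-X_i^2/2}\, B(b+1/2,\, a_n)$. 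Dividing the two bounds yields $\mathbb{E}(1-\kappa_i \mid X_i) \leq e^{X_i^2/2}\, B(b+1/2, a_n+1)/B(b+1/2, a_n)$.

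It remains to evaluate the Beta-function ratio, which is the clean algebraic payoff. Writing $B(x,y)=\Gamma(x)\Gamma(y)/\Gamma(x+y)$ and applying the recursion $\Gamma(z+1)=z\Gamma(z)$ to both $\Gamma(a_n+1)=a_n\Gamma(a_n)$ and $\Gamma(a_n+b+3/2)=(a_n+b+1/2)\Gamma(a_n+b+1/2)$, the ratio collapses to $a_n/(a_n+b+1/2)$, which delivers the stated inequality for every $a_n, b > 0$. Convergence of both Beta integrals is guaranteed by $b+1/2 > 0$ and $a_n > 0$.

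Honestly, there is no deep obstacle here; the entire content is the asymmetric bounding of the exponential, which trades the intractable Gaussian-weighted integrals for exact Beta integrals while isolating the blow-up factor $e^{X_i^2/2}$. The one point requiring care is the direction of each bound (an upper bound on the ratio needs a larger numerator and a smaller denominator), and it is worth flagging that the bound is crude in $X_i$ but, as the subsequent analysis exploits, sufficient because $a_n \to 0$ controls the multiplicative constant $a_n/(a_n+b+1/2)$.
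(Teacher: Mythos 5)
Your proof is correct and is essentially identical to the paper's own argument: the paper also writes $\mathbb{E}(1-\kappa_i \mid X_i)$ as the ratio of the two Gaussian-weighted Beta-type integrals, applies exactly the asymmetric bound you describe (dropping the exponential in the numerator and pulling out $e^{-X_i^2/2}$ in the denominator, stated there in a single compact step), and then collapses the resulting ratio of Beta functions to $a_n/(a_n+b+1/2)$ via the Gamma recursion. No differences worth noting.
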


\begin{corollary}
\label{Corollary:1}
If $a_n \rightarrow 0$  as $n \rightarrow \infty$ and $b>0$ is fixed, then $\mathbb{E}(1 - \kappa_i | X_i ) \rightarrow 0$ as $n \rightarrow \infty$.
\end{corollary}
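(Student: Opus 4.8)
The plan is to work directly with the explicit posterior density of the shrinkage factor $\kappa_i$ given in (\ref{kappadensity}), specializing the exponent to $a = a_n$. Writing $\mathbb{E}(1-\kappa_i \mid X_i)$ as a ratio of two integrals over $(0,1)$, the numerator and denominator share the same form and differ only in the power of $(1-\kappa_i)$, since inserting the extra factor $(1-\kappa_i)$ into the numerator raises its exponent by one. Explicitly, the numerator is
\begin{equation*}
N = \int_0^1 \exp\!\left(-\frac{\kappa X_i^2}{2}\right) \kappa^{b-1/2}(1-\kappa)^{a_n} \, d\kappa,
\end{equation*}
and the denominator $D$ is the same integrand with exponent $a_n - 1$ on $(1-\kappa)$. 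The normalizing constant from (\ref{kappadensity}) cancels in the ratio $N/D$.

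The key step is to bound the exponential factor \emph{asymmetrically} in the two integrals. In the numerator I would use the crude bound $\exp(-\kappa X_i^2/2) \le 1$, valid for all $\kappa \in (0,1)$, giving $N \le B(b+1/2,\, a_n+1)$, where $B$ denotes the beta function. In the denominator I would instead use the opposite-direction bound $\exp(-\kappa X_i^2/2) \ge \exp(-X_i^2/2)$, which holds because $\kappa \le 1$, so that $D \ge e^{-X_i^2/2}\, B(b+1/2,\, a_n)$. Dividing then yields
\begin{equation*}
\mathbb{E}(1-\kappa_i \mid X_i) \le e^{X_i^2/2}\, \frac{B(b+1/2,\, a_n+1)}{B(b+1/2,\, a_n)}.
\end{equation*}

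The final step is to simplify this ratio of beta functions. Writing each in terms of gamma functions and applying the recurrence $\Gamma(x+1) = x\Gamma(x)$ to both $\Gamma(a_n+1)/\Gamma(a_n)$ and $\Gamma(b+1/2+a_n)/\Gamma(b+1/2+a_n+1)$ collapses the ratio to exactly $a_n/(a_n+b+1/2)$, which delivers the stated bound.

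I expect the only subtlety, rather than a genuine obstacle, to be choosing these two one-sided bounds on the exponential so that the residual beta-function ratio telescopes cleanly: a symmetric or tighter bound would leave behind an incomplete-gamma expression that does not simplify to a closed form. One should also verify that both beta integrals converge, which requires $b+1/2 > 0$ and $a_n > 0$; both are guaranteed by the hypothesis $a_n, b \in (0,\infty)$, so the argument goes through for the full parameter range claimed.
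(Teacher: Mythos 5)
Your proposal is correct and takes essentially the same route as the paper: the paper deduces Corollary~\ref{Corollary:1} directly from Theorem~\ref{Th:1}, whose proof is precisely your argument --- bound the exponential factor by $1$ in the numerator and by $e^{-X_i^2/2}$ in the denominator, then telescope the resulting ratio of beta functions to $a_n/(a_n+b+1/2)$. The only thing to add is the one-line conclusion the corollary actually asserts: since $X_i$ is held fixed, $e^{X_i^2/2}$ is a constant, and $a_n/(a_n+b+1/2) \leq a_n/(b+1/2) \rightarrow 0$ as $a_n \rightarrow 0$ with $b>0$ fixed, so the nonnegative quantity $\mathbb{E}(1-\kappa_i \mid X_i)$ tends to zero.
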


\begin{theorem} 
\label{Th:2}
Fix $\epsilon \in (0, 1)$. For any $a_n \in (0, 1)$, $b \in (1/2, \infty)$,
\begin{equation*}
\Pr (\kappa_i < \epsilon | X_i ) \leq  e^{X_i^2/2} \frac{    a_n \epsilon }{ \left( b + 1/2 \right) (1 - \epsilon)}.
\end{equation*}
\end{theorem}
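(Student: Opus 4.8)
The plan is to compute $\Pr(\kappa_i < \epsilon \mid X_i)$ directly from the posterior density (\ref{kappadensity}) of the shrinkage factor, expressing it as the ratio of a truncated integral to the full normalizing integral,
\[
\Pr(\kappa_i<\epsilon\mid X_i)=\frac{\int_0^\epsilon e^{-\kappa X_i^2/2}\,\kappa^{\,b-1/2}(1-\kappa)^{a_n-1}\,d\kappa}{\int_0^1 e^{-\kappa X_i^2/2}\,\kappa^{\,b-1/2}(1-\kappa)^{a_n-1}\,d\kappa},
\]
and then to bound the numerator from above and the denominator from below. Since $a_n\in(0,1)$, the factor $(1-\kappa)^{a_n-1}$ is increasing on $(0,\epsilon)$, so it is at most $(1-\epsilon)^{a_n-1}$ there; combining this with $e^{-\kappa X_i^2/2}\le 1$ and integrating $\kappa^{b-1/2}$ gives the numerator bound $(1-\epsilon)^{a_n-1}\epsilon^{\,b+1/2}/(b+1/2)$. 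For the denominator I would use the crude but sufficient lower bound $e^{-\kappa X_i^2/2}\ge e^{-X_i^2/2}$ on $(0,1)$, which produces the factor $e^{-X_i^2/2}$ (and hence the $e^{X_i^2/2}$ in the statement) and leaves behind the complete beta integral $B(b+1/2,a_n)=\int_0^1\kappa^{\,b-1/2}(1-\kappa)^{a_n-1}\,d\kappa$.

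Assembling these two bounds yields
\[
\Pr(\kappa_i<\epsilon\mid X_i)\le e^{X_i^2/2}\,\frac{(1-\epsilon)^{a_n-1}\,\epsilon^{\,b+1/2}}{(b+1/2)\,B(b+1/2,a_n)},
\]
so that reaching the stated bound reduces to the single elementary inequality
\[
(1-\epsilon)^{a_n}\,\epsilon^{\,b-1/2}\le a_n\,B(b+1/2,a_n).
\]
This last comparison is the crux of the argument, and I expect it to be the main obstacle, since a naive simplification such as $B(b+1/2,a_n)\ge 1/a_n$ is false and the $\epsilon$-dependent factors on the left cannot simply be discarded. The clean way through is to observe that $a_n(1-\kappa)^{a_n-1}$ integrates to one on $(0,1)$, so that $a_n B(b+1/2,a_n)=\int_0^1\kappa^{\,b-1/2}\,a_n(1-\kappa)^{a_n-1}\,d\kappa$; restricting this integral to $[\epsilon,1]$ and using $\kappa^{\,b-1/2}\ge\epsilon^{\,b-1/2}$ there (valid because $b>1/2$) gives
\[
a_n B(b+1/2,a_n)\ge \epsilon^{\,b-1/2}\int_\epsilon^1 a_n(1-\kappa)^{a_n-1}\,d\kappa=\epsilon^{\,b-1/2}(1-\epsilon)^{a_n},
\]
which is exactly the required inequality. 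Substituting back then completes the proof.

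I would also note in passing that applying Markov's inequality, $\Pr(\kappa_i<\epsilon\mid X_i)=\Pr(1-\kappa_i>1-\epsilon\mid X_i)\le \mathbb{E}(1-\kappa_i\mid X_i)/(1-\epsilon)$, together with Theorem \ref{Th:1}, yields a bound of the right flavour but without the additional factor of $\epsilon$; it is therefore weaker than the claimed estimate, which is precisely why the direct integral analysis above (and the truncated-beta lower bound) is needed to obtain the sharp constant.
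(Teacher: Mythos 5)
Your proof is correct and is essentially the paper's own argument, just reorganized: the paper discards the $(0,\epsilon)$ portion of the normalizing integral at the outset, bounding $\Pr(\kappa_i<\epsilon\mid X_i)$ by the ratio of the integral over $(0,\epsilon)$ to the integral over $(\epsilon,1)$, whereas you carry the full beta integral through and perform that same restriction to $[\epsilon,1]$ only at the end, when proving your ``crux'' inequality $(1-\epsilon)^{a_n}\epsilon^{b-1/2}\le a_n B(b+1/2,a_n)$. Every individual estimate — the monotonicity of $\kappa^{b-1/2}$ and of $(1-\kappa)^{a_n-1}$, the bounds $e^{-\kappa X_i^2/2}\le 1$ and $e^{-\kappa X_i^2/2}\ge e^{-X_i^2/2}$, and the two elementary integrals — coincides with the paper's, and the resulting constant is identical.
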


\begin{corollary} 
\label{Corollary:2}
If $a_n \rightarrow 0$ as $n \rightarrow \infty$ and $b \in ( 1/2, \infty)$ is fixed, then by Theorem \ref{Th:2}, $\Pr(\kappa_i \geq \epsilon | X_i) \rightarrow 1$ for any fixed $\epsilon \in (0, 1)$.
\end{corollary}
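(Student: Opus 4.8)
The plan is to deduce this directly from Theorem \ref{Th:2} by passing to the complementary event and then letting $n \to \infty$. First I would write $\Pr(\kappa_i \geq \epsilon \mid X_i) = 1 - \Pr(\kappa_i < \epsilon \mid X_i)$, so that it suffices to show $\Pr(\kappa_i < \epsilon \mid X_i) \to 0$. This reduction is the only structural move in the argument; everything else is a limit computation using the bound already in hand.

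Next, since the hypothesis $a_n = o(1)$ guarantees $a_n \in (0,1)$ for all sufficiently large $n$, Theorem \ref{Th:2} applies and yields, for such $n$, the two-sided estimate $0 \leq \Pr(\kappa_i < \epsilon \mid X_i) \leq e^{X_i^2/2}\, a_n \epsilon / [(b + 1/2)(1 - \epsilon)]$. For a fixed realized observation $X_i$ and fixed $\epsilon \in (0,1)$, $b \in (1/2, \infty)$, the factor $e^{X_i^2/2}\, \epsilon / [(b + 1/2)(1 - \epsilon)]$ is a finite, strictly positive constant: here $\epsilon \in (0,1)$ ensures $1 - \epsilon > 0$ and $b > 1/2$ ensures $b + 1/2 > 0$, so the denominator is nonzero and the whole expression is a constant multiple of $a_n$.

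Finally, since $a_n \to 0$ by assumption, the right-hand side of the estimate tends to $0$, and the squeeze (sandwich) theorem gives $\Pr(\kappa_i < \epsilon \mid X_i) \to 0$. Taking complements yields $\Pr(\kappa_i \geq \epsilon \mid X_i) \to 1$, as claimed.

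There is essentially no analytic obstacle here; the substantive work has already been carried out in establishing Theorem \ref{Th:2}, and this corollary is a one-line consequence. The only points requiring a moment of care are interpretive rather than technical: the convergence should be read conditionally, i.e. pointwise in the value of $X_i$, with $e^{X_i^2/2}$ treated as a fixed finite constant, and one should note explicitly that the standing assumptions $\epsilon \in (0,1)$ and $b > 1/2$ are exactly what keep the bounding constant finite and positive, so that the sandwich argument is legitimate.
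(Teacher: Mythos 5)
Your proof is correct and follows exactly the route the paper intends: the paper gives no separate proof of Corollary \ref{Corollary:2}, treating it as an immediate consequence of the upper bound in Theorem \ref{Th:2}, and your argument---passing to the complement, applying that bound with $X_i$, $\epsilon$, $b$ fixed so the bound is a constant multiple of $a_n$, and letting $a_n \to 0$---is precisely that consequence written out.
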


\begin{theorem} 
\label{Th:3}
Fix $\eta \in (0,1), \delta \in (0,1)$. Then for any $a_n \in (0,1)$ and $b \in (1/2, \infty)$,
\begin{equation*}
\Pr( \kappa_i > \eta | X_i) \leq  \frac{ \left( b + \frac{1}{2} \right) (1 - \eta)^{a_n} }{ a_n ( \eta \delta)^{b + \frac{1}{2}} } \exp \left( - \frac{\eta (1 - \delta)}{2} X_i^2 \right).
\end{equation*}
\end{theorem}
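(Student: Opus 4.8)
The plan is to work directly from the closed-form posterior density of the shrinkage factor in~(\ref{kappadensity}), which under the $\textrm{NBP}_n$ prior has $a = a_n$. Writing the posterior probability as the ratio of the normalizing integral restricted to $(\eta,1)$ over the full normalizing integral on $(0,1)$,
\begin{equation*}
\Pr(\kappa_i > \eta \mid X_i) = \frac{\int_\eta^1 \exp(-\kappa X_i^2/2)\,\kappa^{b-1/2}(1-\kappa)^{a_n-1}\,d\kappa}{\int_0^1 \exp(-\kappa X_i^2/2)\,\kappa^{b-1/2}(1-\kappa)^{a_n-1}\,d\kappa},
\end{equation*}
the entire argument reduces to bounding the numerator from above and the denominator from below by elementary, exactly integrable expressions.

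For the numerator, on the region $(\eta,1)$ I would apply three monotonicity bounds: since $\kappa \ge \eta$ the Gaussian factor satisfies $\exp(-\kappa X_i^2/2) \le \exp(-\eta X_i^2/2)$; since $b > 1/2$ and $\kappa \le 1$ we have $\kappa^{b-1/2} \le 1$; and the remaining factor integrates exactly, $\int_\eta^1 (1-\kappa)^{a_n-1}\,d\kappa = (1-\eta)^{a_n}/a_n$, using $a_n > 0$. This yields the numerator bound $\exp(-\eta X_i^2/2)\,(1-\eta)^{a_n}/a_n$.

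For the denominator, the key idea --- and the only genuinely nonroutine step --- is to discard most of the mass and integrate only over the subinterval $(0,\eta\delta)$, which is where the free parameter $\delta \in (0,1)$ enters. On this subinterval two signs work in our favour: since $\kappa < \eta\delta$ the Gaussian factor is bounded below by $\exp(-\eta\delta X_i^2/2)$, and since $a_n < 1$ makes the exponent $a_n - 1$ negative, $(1-\kappa)^{a_n-1} \ge 1$ throughout $(0,1)$. The surviving integral $\int_0^{\eta\delta}\kappa^{b-1/2}\,d\kappa = (\eta\delta)^{b+1/2}/(b+1/2)$ is finite and explicit precisely because $b > 1/2$, giving the denominator lower bound $\exp(-\eta\delta X_i^2/2)\,(\eta\delta)^{b+1/2}/(b+1/2)$.

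Dividing the two bounds, the Gaussian factors combine as $\exp(-\eta X_i^2/2)/\exp(-\eta\delta X_i^2/2) = \exp(-\eta(1-\delta)X_i^2/2)$, which is exactly the exponential decay in the statement, while the algebraic factors assemble into $(b+1/2)(1-\eta)^{a_n}/[a_n(\eta\delta)^{b+1/2}]$, completing the proof. I expect no serious obstacle: the estimate is a one-sided Laplace-type bound, and the real content lies in the deliberately lossy lower bound on the denominator over $(0,\eta\delta)$. The parameter $\delta$ is introduced solely to trade a larger constant $(\eta\delta)^{-(b+1/2)}$ against a faster exponential rate $\eta(1-\delta)$, and it can be tuned later when this inequality is applied to control the Type II error in the testing analysis.
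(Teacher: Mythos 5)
Your proof is correct and follows essentially the same route as the paper's: bound the numerator on $(\eta,1)$ via $e^{-\kappa X_i^2/2}\le e^{-\eta X_i^2/2}$, $\kappa^{b-1/2}\le 1$, and exact integration of $(1-\kappa)^{a_n-1}$, and lower-bound the denominator by restricting to $(0,\eta\delta)$ with $(1-\kappa)^{a_n-1}\ge 1$. The only cosmetic difference is that the paper first bounds the probability by the ratio of the tail integral over $\int_0^\eta$ before restricting to $(0,\eta\delta)$, whereas you lower-bound the full normalizing integral directly; the two are mathematically identical.
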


\begin{corollary}
\label{Corollary:3} 
For any fixed $n$ where $a_n \in (0, 1), b \in (1/2, \infty)$, and for every fixed $\eta \in (0,1)$, $\Pr( \kappa_i \leq \eta | X_i ) \rightarrow 1$ as $X_i \rightarrow \infty$.
\end{corollary}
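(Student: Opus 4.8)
The plan is to derive this corollary directly from Theorem~\ref{Th:3} by passing to the complementary event and exploiting the Gaussian-type decay of the upper bound. First I would write
\[
\Pr(\kappa_i \le \eta \mid X_i) = 1 - \Pr(\kappa_i > \eta \mid X_i),
\]
so that establishing the claim is equivalent to showing $\Pr(\kappa_i > \eta \mid X_i) \to 0$ as $X_i \to \infty$.

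Next, since $n$ is held fixed, the hyperparameter $a_n \in (0,1)$ is a fixed constant, and $b \in (1/2,\infty)$ and $\eta \in (0,1)$ are fixed as well. I would then select any convenient value of the free parameter $\delta \in (0,1)$ appearing in Theorem~\ref{Th:3} --- for instance $\delta = 1/2$ --- and invoke that theorem to obtain
\[
\Pr(\kappa_i > \eta \mid X_i) \le \frac{\left(b + \tfrac{1}{2}\right)(1-\eta)^{a_n}}{a_n (\eta\delta)^{b+1/2}} \exp\left(-\frac{\eta(1-\delta)}{2} X_i^2\right).
\]

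The key observation is that, with $n, a_n, b, \eta, \delta$ all held fixed, the leading factor is a finite positive constant $C := \left(b + \tfrac{1}{2}\right)(1-\eta)^{a_n} / \bigl(a_n (\eta\delta)^{b+1/2}\bigr)$ that does not depend on $X_i$, while the exponent satisfies $\eta(1-\delta)/2 > 0$ because $\eta > 0$ and $\delta < 1$. Hence the right-hand side has the form $C\exp(-c\, X_i^2)$ with $c = \eta(1-\delta)/2 > 0$, which tends to $0$ as $X_i \to \infty$. Combining this with the complement identity above yields $\Pr(\kappa_i \le \eta \mid X_i) \to 1$, as claimed.

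There is no genuine obstacle here: the corollary is an immediate consequence of the exponential-in-$X_i^2$ decay already captured by the bound in Theorem~\ref{Th:3}. The only points requiring mild care are to fix $\delta$ strictly less than $1$ so that the decay rate $c$ is strictly positive, and to note that the positivity and finiteness of the prefactor $C$ follow from $a_n > 0$, $b + 1/2 > 0$, and $\eta, \delta \in (0,1)$ together with $n$ being held fixed (so that $a_n$ and hence $C$ do not degenerate as the limit in $X_i$ is taken).
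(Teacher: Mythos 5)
Your proposal is correct and matches the paper's intended argument: the paper states Corollary~\ref{Corollary:3} as an immediate consequence of Theorem~\ref{Th:3}, exactly as you derive it, since for fixed $n$ the prefactor $\bigl(b+\tfrac{1}{2}\bigr)(1-\eta)^{a_n}/\bigl(a_n(\eta\delta)^{b+\frac{1}{2}}\bigr)$ is a finite constant and the bound decays like $\exp\bigl(-\tfrac{\eta(1-\delta)}{2}X_i^2\bigr)$ as $X_i \to \infty$. Your attention to fixing $\delta \in (0,1)$ so that the decay rate is strictly positive is precisely the only point of care needed.
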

\begin{corollary}
\label{Corollary:4}
For any fixed $n$ where $a_n \in (0, 1), b \in (1/2, \infty)$, and for every fixed $\eta \in (0,1)$, $\mathbb{E}(1 - \kappa_i | X_i ) \rightarrow 1$ as $X_i \rightarrow \infty$.
\end{corollary}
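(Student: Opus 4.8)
The plan is to reduce the claim to the convergence-in-probability statement already established in Corollary \ref{Corollary:3}. Note first that $\mathbb{E}(1 - \kappa_i \mid X_i) = 1 - \mathbb{E}(\kappa_i \mid X_i)$, so the assertion is equivalent to showing that $\mathbb{E}(\kappa_i \mid X_i) \to 0$ as $X_i \to \infty$. The key structural fact I will exploit is that $\kappa_i$ takes values in the bounded interval $(0,1)$; Corollary \ref{Corollary:3} tells us that the posterior law of $\kappa_i$ collapses toward a point mass at $0$ as $X_i \to \infty$, and all I need to do is upgrade that convergence in probability to convergence in mean, which is legitimate precisely because $\kappa_i$ is bounded.

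Concretely, I would introduce an auxiliary threshold $t \in (0,1)$ and split the conditional expectation according to whether $\kappa_i \le t$ or $\kappa_i > t$. On the event $\{\kappa_i \le t\}$ the integrand is at most $t$, while on $\{\kappa_i > t\}$ we have $\kappa_i < 1$, so that portion of the expectation is dominated by $\Pr(\kappa_i > t \mid X_i)$. Combining the two pieces yields
\begin{equation*}
\mathbb{E}(\kappa_i \mid X_i) \;\le\; t \;+\; \Pr(\kappa_i > t \mid X_i).
\end{equation*}
Applying Theorem \ref{Th:3} with $\eta = t$ gives an explicit upper bound on $\Pr(\kappa_i > t \mid X_i)$ that decays like $\exp\bigl(-\tfrac{t(1-\delta)}{2}X_i^2\bigr)$ for a suitable $\delta \in (0,1)$, and hence (equivalently, by Corollary \ref{Corollary:3}) $\Pr(\kappa_i > t \mid X_i) \to 0$ as $X_i \to \infty$. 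Taking $\limsup$ over $X_i \to \infty$ therefore leaves $\limsup_{X_i \to \infty} \mathbb{E}(\kappa_i \mid X_i) \le t$.

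The final step is to send the threshold $t \downarrow 0$: since the displayed bound holds for \emph{every} $t \in (0,1)$ and the quantity $\limsup_{X_i \to \infty}\mathbb{E}(\kappa_i \mid X_i)$ does not depend on $t$, it must equal $0$, whence $\mathbb{E}(\kappa_i \mid X_i) \to 0$ and $\mathbb{E}(1 - \kappa_i \mid X_i) \to 1$. I expect the only genuinely delicate point to be bookkeeping around the threshold: the $\eta$ appearing in the statement is inessential to the conclusion (the limiting value $1$ carries no dependence on $\eta$), so the proof must be driven by a \emph{free} splitting threshold $t$ that one is allowed to shrink to zero, rather than by the single fixed $\eta$. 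Once this is recognized, the argument is an elementary instance of the bounded convergence theorem dressed up as a two-term split, with Theorem \ref{Th:3} supplying the quantitative decay needed to control the tail event.
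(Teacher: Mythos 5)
Your proof is correct and follows essentially the same route the paper relies on: the threshold split $\mathbb{E}(\kappa_i \mid X_i) \le t + \Pr(\kappa_i > t \mid X_i)$ combined with the exponential decay from Theorem \ref{Th:3} is exactly the device the paper itself uses (e.g., in the proof of Theorem \ref{Theorem:5}), and Corollary \ref{Corollary:4} is intended as an immediate consequence of that bound. Your additional care in letting the splitting threshold $t \downarrow 0$, and noting that the fixed $\eta$ in the statement plays no real role, is a correct and slightly more explicit rendering of the same argument.
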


\noindent Since $\mathbb{E} ( \theta_i | X_i) = \left\{ \mathbb{E} (1 - \kappa_i)  | X_i \right\} X_i$, Corollaries \ref{Corollary:1} and \ref{Corollary:2} illustrate that all observations will be shrunk towards the origin under the $\textrm{NBP}_n$ prior (\ref{NBPn}). However, Corollaries \ref{Corollary:3} and \ref{Corollary:4} demonstrate that if $X_i$ is big enough, then the posterior mean $ \left\{ \mathbb{E} (1 - \kappa_i)  | X_i \right\} X_i \approx X_i$. This ensures the tails of the NBP prior are still sufficiently heavy to detect true signals.

A referee has pointed out that the conditions on the hyperparameter $a_n$ in Corollaries \ref{Corollary:1}-\ref{Corollary:4} closely resemble conditions on the rescaling (or the `global') parameter $\tau \equiv \tau_n$ in priors of the form (\ref{globallocal}) in the literature. Indeed, if $\tau_n \in (0,1)$ and $\tau_n \rightarrow 0$ in (\ref{globallocal}), then one obtains analogous results for priors of the form (\ref{globallocal}). See, e.g. \cite{DattaGhosh2013, GhoshTangGhoshChakrabarti2016}. This is because, as seen in Proposition \ref{Prop:1}, the hyperparameter $a_n$ controls the amount of mass around zero for the NBP (with smaller values leading to heavier mass in the neighborhood of zero). At the same time, keeping $b$ fixed in the range $(1/2, \infty)$ ensures that the NBP has heavy enough tails to prevent overshrinkage of large signals. Thus, the hyperparameter $b$ also plays a similar role as the `local' parameter $\lambda_i$ in (\ref{globallocal}).

\section{Multiple Testing with the NBP Prior} \label{NBPTesting}
\subsection{Asymptotic Bayes Optimality Under Sparsity} \label{ABOS}
Suppose we observe $\bm{X} = (X_1, \ldots, X_n)$, such that $X_i \sim \mathcal{N}(\theta_i, 1)$, for $i = 1, \ldots, n.$ To identify the true signals in $\bm{X}$, we conduct $n$ simultaneous tests: $H_{0i}: \theta_i = 0$ against $H_{1i}: \theta_i \neq 0$, for $i = 1, \ldots, n$. For each $i$, $\theta_i$ is assumed to come from the model,
\begin{equation} \label{thetaspikeslab}
\theta_i \overset{i.i.d.}{\sim} (1-p) \delta_{ \{ 0 \} } + p \mathcal{N}(0, \psi^2), i = 1, \ldots, n,
\end{equation}
where $\psi^2 > 0$ represents a diffuse `slab' density. This point mass mixture model is often considered to be a data generating mechanism for sparse vectors $\bm{\theta}$ in the statistical literature. \cite{CarvalhoPolsonScott2009} referred to model (\ref{thetaspikeslab}) as a `gold standard' for sparse problems.

Model (\ref{thetaspikeslab}) is equivalent to assuming that for each $i$, $\theta_i$ follows a random variable whose distribution is determined by the latent binary random variable $\nu_i$, where $\nu_i = 0$ denotes the event that $H_{0i}$ is true, while $\nu_i =1 $ corresponds to the event that $H_{0i}$ is false. Here $\nu_i$'s are assumed to be i.i.d. Bernoulli($p$) random variables, for some $p$ in $(0,1)$. Under $H_{0i}$, i.e. $\theta_i \sim \delta_{ \{ 0 \} }$, the distribution having a mass 1 at 0, while under $H_{1i}$, $\theta_i \neq 0$ and is assumed to follow an $\mathcal{N}(0, \psi^2)$ distribution with $\psi^2 > 0$. The marginal distributions of the $X_i$'s are then given by the following two-groups model:
\begin{equation} \label{marginalsspikeandslab}
X_i \overset{i.i.d.}{\sim} (1-p) \mathcal{N}(0, 1) + p \mathcal{N}(0, 1+ \psi^2), i = 1, \ldots, n.
\end{equation}
Our testing problem is now equivalent to testing simultaneously
\begin{equation} \label{simultaneoustests}
H_{0i}: \nu_i = 0 \textrm{ versus } H_{1i}: \nu_i = 1 \textrm{ for } i = 1, \ldots, n.
\end{equation}
We consider a symmetric 0-1 loss for each individual test. The total loss of a multiple testing procedure is assumed to be the sum of the individual losses incurred in each test. Letting $t_{1i}$ and $t_{2i}$ denote the probabilities of type I and type II errors of the $i$th test respectively, the Bayes risk of a multiple testing procedure under the two-groups model (\ref{marginalsspikeandslab}) is then given by
\begin{equation} \label{twogroupsrisk}
R = \displaystyle\sum_{i=1}^{n} \{ (1-p) t_{1i} + p t_{2i}  \}.
\end{equation}
\cite{BogdanChakrabartiFrommletGhosh2011} showed that the rule which minimizes the Bayes risk in (\ref{twogroupsrisk}) is the test which, for each $i = 1, \ldots, n$, rejects $H_{0i}$ if 
\begin{equation} \label{BayesOracle}
\frac{ f(x_i | \nu_i = 1)}{ f(x_i | \nu_i = 0) } > \frac{1-p}{p}, \textrm{i.e. } X_i^2 > c^2, 
\end{equation}
where $f(x_i | \nu_i = 1)$ denotes the marginal density of $X_i$ under $H_{1i}$, while $f(x_i | \nu_i = 0)$ denotes that under $H_{0i}$ and 
\begin{equation*}
c^2 \equiv c_{\psi, f}^2 = \frac{1+ \psi^2}{\psi^2} ( \log(1+\psi^2) + 2 \log (f)),
\end{equation*}
with $f = (1- p)/p$. The above rule is known as the Bayes Oracle, because it makes use of unknown parameters $\psi$ and $p$. By reparametrizing as $u = \psi^2$ and $v = u f^2$, the above threshold becomes
\begin{equation*}
c^2 \equiv c_{u,v}^2 = \left( 1 + \frac{1}{u} \right) \left( \log v + \log \left( 1 + \frac{1}{u} \right) \right).
\end{equation*}
\cite{BogdanChakrabartiFrommletGhosh2011} considered the following asymptotic scheme.
\vspace{-.5cm}
\subsection*{Assumption 1} 
The sequences of vectors $( \psi_n, p_n )$ satisfies the following conditions:
\begin{enumerate} 
\item
$p_n \rightarrow 0$ as $n \rightarrow \infty$. \label{Assumption:1}
\item
$u_n = \psi_n^2 \rightarrow \infty$ as $n \rightarrow \infty$.
\item
$v_n = u_n f^2 \rightarrow \infty$ as $n \rightarrow \infty$.
\item
$\frac{ \log v_n}{u_n} \rightarrow C \in (0, \infty)$ as $n \rightarrow \infty$.
\end{enumerate}
The first condition in Assumption 1 assumes that the underlying $\bm{\theta}$ becomes more sparse as $n$ approaches infinity, while the second condition ensures that true signals can still be identified. \noindent \cite{BogdanChakrabartiFrommletGhosh2011} provided detailed insight  on the threshold $C$ arising from the third and fourth conditions. Summarizing briefly, if $C = 0$, then the probability of a Type I error is one and the probability of a Type II error is zero. If $C = \infty$, then the probability of a Type I error is zero and the probability of a Type II error is one. Under Assumption \ref{Assumption:1},  \cite{BogdanChakrabartiFrommletGhosh2011} showed that the corresponding asymptotic Bayes Oracle risk has a particularly simple form, which is given by
\begin{equation} \label{ABOSrisk}
R_{Opt}^{BO} = n((1-p) t_1^{BO} + p t_2^{BO} ) = np(2 \Phi (\sqrt{C}) - 1)(1+ o(1)),
\end{equation}
where the $o(1)$ terms tend to zero as $n \rightarrow \infty$. A testing procedure with risk $R$ is said to be asymptotically Bayes optimal under sparsity (ABOS) if 
\begin{equation} \label{ABOSmult}
\frac{R}{R_{Opt}^{BO}} \rightarrow 1 \textrm{ as } n \rightarrow \infty.
\end{equation}
\begin{remark}
	\cite{BogdanChakrabartiFrommletGhosh2011} consider the more general case where under the null hypothesis, $H_{0i}: \theta_i \sim \mathcal{N}(0, \zeta^2)$ with $0 \leq \zeta \ll \psi$. That is, \cite{BogdanChakrabartiFrommletGhosh2011} assumed that the true data-generating mechanism for $\bm{\theta}$ is given by
	\begin{equation} \label{thetaspikeslab2}
\theta_i \overset{i.i.d.}{\sim} (1-p) \mathcal{N}(0, \zeta^2) + p \mathcal{N} (0, \psi^2), i = 1, \ldots, n.
\end{equation}
	The point mass mixture (\ref{thetaspikeslab}) is obtained as a special case of (\ref{thetaspikeslab2}) by setting $\zeta = 0$. \cite{BogdanChakrabartiFrommletGhosh2011} showed that the asymptotic Bayes Oracle risk under (\ref{thetaspikeslab2}) is the same as (\ref{ABOSrisk}) if we replace $u = \psi^2$ in Assumption \ref{Assumption:1} with $u = (\psi / (\zeta + 1) )^2$. If we assume that the true $\bm{\theta}$ comes from (\ref{thetaspikeslab2}) with $\zeta > 0$ and we similarly set $u = (\psi / (\zeta + 1))^2$ in Assumption \ref{Assumption:1}, then all the results in this manuscript will continue to hold. Thus, when $\bm{\theta}$ is `nearly' (but not exactly) sparse, thresholding rule (\ref{thresholdingrule}) for classifying signals under the NBP prior is also ABOS. 
\end{remark}

In Sections \ref{NBPTestingNonAdaptive} and \ref{NBPTestingAdaptive}, we consider two thresholding rules based on the NBP model. In the first case, we assume the sparsity level $p$ under the true data-generating model (\ref{thetaspikeslab}) to be known. For the more realistic scenario where $p$ is unknown, we base our test procedure on a data-driven estimate of $p$. Since we estimate the unknown proportion of signals from the data, we refer to this latter procedure as a \textit{data-adaptive} testing rule. 

\subsection{Related Work for Scale-Mixture Shrinkage Priors} \label{RelatedWork}

We briefly survey related work on multiple testing under normal scale-mixture shrinkage priors (\ref{scalemixture}) to demonstrate the novelty of our results. \cite{GhoshChakrabarti2017} showed that for priors of the form (\ref{globallocal}), thresholding rule (\ref{thresholdingrule}) is ABOS provided that: a) the variance rescaling parameter $\tau > 0$ decays at an appropriate rate or is estimated by an appropriate plug-in estimator, and b) the slowly varying component $L(\cdot)$ in the scale prior, $\pi(\lambda_i) \propto \lambda_i^{-a-1} L(\lambda_i)$, is uniformly bounded above and below on the interval $\lambda_i \in (0, \infty)$. The NBP prior (\ref{NBPhier}) does not require a rescaling parameter $\tau > 0$ in the normal variance in the first level of the Bayes hierarchy. Thus, our results cannot be obtained from those in \cite{GhoshChakrabarti2017}.

Under certain conditions on the prior for the scale parameter $\sigma_i^2$ in (\ref{scalemixture}), \cite{Salomond2017} derived asymptotic upper bounds on Type I and Type II errors and the Bayes risk (\ref{twogroupsrisk}) for both non-adaptive and data-adaptive test procedures induced by scale-mixture shrinkage priors. Under these conditions, the upper bound on the Bayes risk for scale-mixture priors is of the same order as the Bayes Oracle risk. Specifically, \cite{Salomond2017} showed that the Bayes risk (\ref{twogroupsrisk}) for thresholding rule (\ref{genericthreshold})  can be bounded from above by $np[16 \sqrt{\pi} C / c + 2 \Phi ( \sqrt{2 K (u_0 +1 ) C}) - 1](1+o(1))$ for known $p$ and by $np[ 16 \sqrt{\pi} C D/c + 2 \Phi (\sqrt{2 K (u_0 + 1)(1+\xi)C }) - 1 ] (1+o(1))$ for unknown $p$, where $C$ is the constant from the fourth condition in Assumption 1 and $c > 0, K \geq 0, u_0 > 0, D>0, \xi \geq 0$ are appropriate constants that depend on the prior.

One can show that with appropriate conditions on the hyperparameters $(a, b)$, the NBP prior (\ref{NBPhier}) satisfies the conditions in \cite{Salomond2017}. Therefore, our prior can also obtain the upper bound on the risk derived by \cite{Salomond2017}. However, the results that we present in this paper do not immediately follow from \cite{Salomond2017} because: a) we provide \textit{lower} bounds on Type I and Type II errors under our prior, and b) we establish that the Bayes risk under the NBP prior is actually asymptotically the same as that of the Bayes Oracle risk given in (\ref{ABOSrisk}). Therefore, our bounds are provably sharper than those of \cite{Salomond2017}. 

\subsection{A Non-Adaptive Testing Rule Under the NBP Prior} \label{NBPTestingNonAdaptive}
As noted earlier, the posterior mean under the NBP prior depends heavily on the shrinkage factor, $\kappa_i = 1/ (1+\sigma_i^2)$. Because of the concentration properties of the NBP prior proven in Section \ref{ConcentrationProperties}, a sensible thresholding rule classifies observations as signals or as noise based on the posterior distribution of this shrinkage factor.  Consider the following testing rule for the $i$th observation $X_i$:
\begin{equation} \label{thresholdingrule}
\textrm{Reject } H_{0i} \textrm{ if } \mathbb{E} (1 - \kappa_i | X_i ) > \frac{1}{2},
\end{equation}
where $\kappa_i$ is the shrinkage factor based on the $\textrm{NBP}_n$ prior (\ref{NBPn}). Within the context of multiple testing, a good benchmark for our test procedure (\ref{thresholdingrule}) should be whether it is ABOS, i.e. whether its optimal risk is asymptotically equal to that of the Bayes Oracle risk (\ref{ABOSrisk}). Adopting the framework of \cite{BogdanChakrabartiFrommletGhosh2011}, we let $R_{NBP}$ denote the asymptotic Bayes risk of testing rule (\ref{thresholdingrule}). In the next four theorems, we derive sharp lower and upper bounds on the Type I and Type II error probabilities for test procedure (\ref{thresholdingrule}). These error probabilities are given by
\begin{equation} \label{errorprobabilities}
\begin{array}{rcl}
t_{1i} & = & \Pr \left[ \mathbb{E}(1 - \kappa_i | X_i) > \frac{1}{2} \bigg| H_{0i} \textrm{ is true} \right], \\
&&\\
t_{2i} & = & \Pr \left[ \mathbb{E}(1 - \kappa_i | X_i) \leq \frac{1}{2} \bigg| H_{1i} \textrm{ is true} \right].
\end{array}
\end{equation}

\begin{theorem}
\label{Theorem:4}
Suppose that $X_1, \ldots, X_n$ are i.i.d. observations having distribution (\ref{marginalsspikeandslab}) where the sequence of vectors $(\psi_n^2, p_n)$ satisfies Assumption \ref{Assumption:1}. Suppose we wish to test (\ref{simultaneoustests}) using the classification rule (\ref{thresholdingrule}) under the $\textrm{NBP}_n$ prior. Then for all $n$, an upper bound for the probability of a Type I error for the $i$th test is given by
\begin{equation*}
t_{1i} \leq \frac{2 \sqrt{2} a_n}{\sqrt{\pi} (a_n + b + 1/2)} \left[ \log \left( \frac{a_n + b + 1/2}{2 a_n } \right) \right]^{-1/2}.
\end{equation*} 
\end{theorem}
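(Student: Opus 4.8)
The plan is to leverage Theorem \ref{Th:1}, which furnishes a deterministic upper bound on the posterior quantity $\mathbb{E}(1-\kappa_i \mid X_i)$ purely in terms of $X_i^2$. The rejection region in rule (\ref{thresholdingrule}) is $\{\mathbb{E}(1-\kappa_i\mid X_i) > 1/2\}$, and Theorem \ref{Th:1} gives $\mathbb{E}(1-\kappa_i\mid X_i) \le e^{X_i^2/2}\,a_n/(a_n+b+1/2)$. Hence every realization lying in the rejection region must also satisfy $e^{X_i^2/2}\,a_n/(a_n+b+1/2) > 1/2$. Solving this last inequality for $X_i^2$ converts the rejection event into the simple tail event $\{X_i^2 > \gamma_n\}$, where $\gamma_n := 2\log\bigl((a_n+b+1/2)/(2a_n)\bigr)$. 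By monotonicity of probability this yields $t_{1i} \le \Pr(X_i^2 > \gamma_n \mid H_{0i})$, which is the key reduction.

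Before proceeding I would verify $\gamma_n > 0$, so that the tail event and the logarithm inside it are well defined; this reduces to $a_n+b+1/2 > 2a_n$, i.e.\ $b+1/2 > a_n$, which holds since $b > 1/2$ and $a_n \in (0,1)$. Under $H_{0i}$ we have $\theta_i = 0$, so $X_i \sim \mathcal{N}(0,1)$ and therefore $\Pr(X_i^2 > \gamma_n\mid H_{0i}) = \Pr(|X_i| > \sqrt{\gamma_n}) = 2\bigl(1 - \Phi(\sqrt{\gamma_n})\bigr)$.

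Next I would apply the standard Mills-ratio bound $1 - \Phi(t) \le \phi(t)/t = (t\sqrt{2\pi})^{-1} e^{-t^2/2}$ with $t = \sqrt{\gamma_n}$, valid for all $t > 0$ and hence applicable here by the positivity of $\gamma_n$. The two resulting factors simplify cleanly: by the definition of $\gamma_n$ one has $e^{-\gamma_n/2} = 2a_n/(a_n+b+1/2)$, while $\sqrt{\gamma_n} = \sqrt{2}\,\bigl[\log((a_n+b+1/2)/(2a_n))\bigr]^{1/2}$. Substituting these into $2(1-\Phi(\sqrt{\gamma_n}))$ and collecting the numerical constants produces an upper bound of exactly the claimed form.

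The argument is essentially routine once Theorem \ref{Th:1} is available, so I do not expect a genuine obstacle; the real content is the reduction in the first paragraph, and the only care required is bookkeeping the constants and the residual $\log$ factor so that the final expression matches the stated bound. I would also emphasize that, because the Mills bound holds for every $t>0$ and $\gamma_n > 0$ for all $n$, the inequality is valid for all $n$ as asserted, with no separate large-$n$ argument needed.
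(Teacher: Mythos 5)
Your proposal is correct and follows essentially the same route as the paper's own proof: use Theorem \ref{Th:1} to show the rejection event implies the tail event $\{X_i^2 > 2\log\left((a_n+b+1/2)/(2a_n)\right)\}$, note $X_i \sim \mathcal{N}(0,1)$ under $H_{0i}$, and apply the Mills-ratio bound. One minor remark: carrying out the substitution exactly yields the constant $2/\sqrt{\pi}$ rather than the stated $2\sqrt{2}/\sqrt{\pi}$, i.e.\ a bound tighter by a factor of $\sqrt{2}$, so the theorem's bound follows a fortiori (the paper's own final simplification contains the same harmless slack).
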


\begin{theorem}
\label{Theorem:5}
Assume the same setup of Theorem \ref{Theorem:4}. Suppose we wish to test (\ref{simultaneoustests}) using the classification rule (\ref{thresholdingrule}) under the $\textrm{NBP}_n$ prior. Suppose further that $a_n \in (0, 1)$, with $a_n \rightarrow 0$ as $n \rightarrow \infty$ and  $b \in (1/2, \infty)$ is fixed. Then for any $\xi \in (0, 1/2)$ and $\delta \in (0, 1)$, a lower bound for the probability of a Type I error for the $i$th test as $n \rightarrow \infty$ is given by
\begin{equation*}
t_{1i} \geq 1 - \Phi \left( \sqrt{ \frac{2}{\xi ( 1 -\delta)} \left[ \log \left( \frac{ \left( b + \frac{1}{2} \right) (1 - \xi)^{a_n} }{ a_n ( \xi \delta)^{b + \frac{1}{2}} } \right) \right] } \right).
\end{equation*}
\end{theorem}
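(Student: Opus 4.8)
The plan is to reduce the Type I error event to a statement about the magnitude of $X_i$ and then to feed in the tail bound of Theorem \ref{Th:3}. Under $H_{0i}$ we have $X_i \sim \mathcal{N}(0,1)$, and rule (\ref{thresholdingrule}) rejects exactly when $\mathbb{E}(1 - \kappa_i \mid X_i) > 1/2$, equivalently when $\mathbb{E}(\kappa_i \mid X_i) < 1/2$. First I would show that this rejection region contains a two-sided tail $\{ X_i^2 > \bar{t}_n^2 \}$ for a suitable threshold $\bar{t}_n$; intuitively such a containment is plausible because Corollary \ref{Corollary:4} already tells us that $\mathbb{E}(1 - \kappa_i \mid X_i) \to 1$ as $X_i \to \infty$, so large observations are rejected. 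Once the containment is established, monotonicity of probability under event inclusion gives
\begin{equation*}
t_{1i} = \Pr\left( \mathbb{E}(\kappa_i \mid X_i) < \tfrac{1}{2} \,\big|\, H_{0i} \right) \geq \Pr\left( X_i^2 > \bar{t}_n^2 \,\big|\, H_{0i} \right) = 2\left( 1 - \Phi(\bar{t}_n) \right) \geq 1 - \Phi(\bar{t}_n),
\end{equation*}
so the whole task becomes that of producing an explicit admissible $\bar{t}_n$.

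The key step is to convert Theorem \ref{Th:3} into a sufficient condition for rejection. Splitting the posterior expectation at the level $\xi$ and using $\kappa_i \in (0,1)$,
\begin{equation*}
\mathbb{E}(\kappa_i \mid X_i) \leq \xi\, \Pr(\kappa_i \leq \xi \mid X_i) + \Pr(\kappa_i > \xi \mid X_i) \leq \xi + \Pr(\kappa_i > \xi \mid X_i).
\end{equation*}
Applying Theorem \ref{Th:3} with $\eta = \xi$ bounds the last probability by $A_n\, \exp\!\big( -\tfrac{\xi(1-\delta)}{2} X_i^2 \big)$, where $A_n = (b+\tfrac12)(1-\xi)^{a_n} / \big[ a_n (\xi \delta)^{b + 1/2} \big]$ is exactly the quantity appearing inside the logarithm of the stated bound. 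Hence a sufficient condition for $\mathbb{E}(\kappa_i \mid X_i) < 1/2$ is $\xi + A_n \exp( -\tfrac{\xi(1-\delta)}{2} X_i^2 ) < 1/2$, and solving this inequality for $X_i^2$ produces the admissible threshold
\begin{equation*}
\bar{t}_n^2 = \frac{2}{\xi(1-\delta)} \log\!\left( \frac{A_n}{1/2 - \xi} \right),
\end{equation*}
which is why the hypothesis $\xi \in (0,1/2)$ is needed, ensuring $1/2 - \xi > 0$.

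Finally I would simplify $\bar{t}_n$ to the claimed closed form. Writing $\bar{t}_n^2 = \tfrac{2}{\xi(1-\delta)}\big[ \log A_n + \log\tfrac{1}{1/2 - \xi} \big]$, the additive term $\log \tfrac{1}{1/2-\xi}$ is a fixed constant, whereas $\log A_n \to \infty$ as $n \to \infty$ because $a_n \to 0$ forces $A_n \to \infty$. Consequently $\bar{t}_n \sim \sqrt{ \tfrac{2}{\xi(1-\delta)} \log A_n }$, and substituting into $t_{1i} \geq 1 - \Phi(\bar{t}_n)$ gives the asserted asymptotic lower bound.

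The main obstacle is precisely this last simplification. Since $1 - \Phi(\cdot)$ decays super-exponentially, one cannot naively delete the $O(1)$ additive term inside the logarithm: the discrepancy between $1 - \Phi(\bar{t}_n)$ and $1 - \Phi\big( \sqrt{ \tfrac{2}{\xi(1-\delta)} \log A_n } \big)$ has to be controlled with care, e.g.\ through the Mills-ratio asymptotics $1 - \Phi(z) \sim \phi(z)/z$ combined with the extra factor of $2$ gained from the two-sided tail in the first display, and the conclusion must be interpreted as an $n \to \infty$ statement. A secondary point requiring attention is verifying the event containment itself, i.e.\ confirming that the inequality on $\mathbb{E}(\kappa_i \mid X_i)$ runs in the direction that makes exceeding $\bar{t}_n$ genuinely force rejection.
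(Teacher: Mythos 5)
Your proposal is correct and takes essentially the same route as the paper's proof: bound $\mathbb{E}(\kappa_i \mid X_i) \le \xi + \Pr(\kappa_i > \xi \mid X_i)$, apply Theorem \ref{Th:3} with $\eta = \xi$ to obtain the containment of the rejection event, solve for the threshold $\bar t_n^2 = \tfrac{2}{\xi(1-\delta)}\log\bigl(A_n/(\tfrac12-\xi)\bigr)$, and use the $\mathcal{N}(0,1)$ tail under $H_{0i}$. The final ``obstacle'' you flag (removing the constant $\log\tfrac{1}{1/2-\xi}$ from inside the logarithm) is handled no more rigorously in the paper itself, whose proof stops at $2\bigl(1-\Phi\bigl(\sqrt{\,\cdot\,}\bigr)\bigr)$ with the $(\tfrac12-\xi)$ factor still present and passes to the stated bound only under the asymptotic $n \to \infty$ reading, so your treatment is on par with the paper's.
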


\begin{theorem}
\label{Theorem:6}
Assume the same setup as Theorem \ref{Theorem:4}. Suppose we wish to test (\ref{simultaneoustests}) using the classification rule (\ref{thresholdingrule}) under the $\textrm{NBP}_n$ prior. Suppose further that $a_n \in (0, 1)$, with $a_n \rightarrow 0$ as $n \rightarrow \infty$ in such a way that $\lim_{n \rightarrow \infty} a_n / p_n \in (0, \infty)$ and that $b \in (1/2, \infty)$ is fixed. Fix $\eta \in (0, 1), \delta \in (0, 1)$, and choose any $\rho > 2/(\eta (1 - \delta))$. Then as $n \rightarrow \infty$, an upper bound for the probability of a Type II error for the $i$th test is given by
\begin{equation*}
t_{2i} \leq \left[ 2 \Phi \left(\sqrt{ \frac{\rho C}{2}}\right) - 1 \right] (1+o(1)) \textrm{ as } n \rightarrow \infty,
\end{equation*}
where the $o(1)$ terms above go to 0 as $n \rightarrow \infty$.
\end{theorem}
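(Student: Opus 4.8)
The plan is to reduce the Type II error to a Gaussian tail probability by exploiting monotonicity, then to bound the implied threshold on $X_i^2$ using the concentration inequality of Theorem \ref{Th:3}, and finally to evaluate that threshold asymptotically through Assumption \ref{Assumption:1} together with the calibration $\lim_{n\to\infty} a_n/p_n \in (0,\infty)$.

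First I would observe that the posterior density (\ref{kappadensity}) has monotone likelihood ratio in $X_i^2$: since $\pi(\kappa_i \mid X_i) \propto \exp(-\kappa_i X_i^2/2)\,\kappa_i^{b-1/2}(1-\kappa_i)^{a_n-1}$, increasing $X_i^2$ shifts posterior mass toward smaller $\kappa_i$, so $\mathbb{E}(1-\kappa_i \mid X_i)$ is strictly increasing in $X_i^2$. Hence rule (\ref{thresholdingrule}) rejects $H_{0i}$ exactly when $X_i^2 > t_n^2$, where $t_n^2$ is the unique solution of $\mathbb{E}(1-\kappa_i\mid X_i)=1/2$ (finite by Corollary \ref{Corollary:4}). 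Since $X_i \mid H_{1i} \sim \mathcal{N}(0,1+\psi_n^2)$, this gives the exact identity
\begin{equation*}
t_{2i} = \Pr\left(X_i^2 \leq t_n^2 \,\middle|\, H_{1i}\right) = 2\,\Phi\!\left(\frac{t_n}{\sqrt{1+\psi_n^2}}\right) - 1 ,
\end{equation*}
so the problem reduces to producing a sharp upper bound on $t_n^2$.

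Next I would bound $t_n^2$ from above. Splitting the posterior mean as $\mathbb{E}(\kappa_i\mid X_i) \leq \eta + \Pr(\kappa_i > \eta \mid X_i)$, one sees that $\mathbb{E}(1-\kappa_i\mid X_i) > 1/2$ is guaranteed once $\Pr(\kappa_i>\eta\mid X_i) < 1/2-\eta$, and Theorem \ref{Th:3} converts this into the explicit sufficient condition
\begin{equation*}
X_i^2 > \bar{t}_n^2 := \frac{2}{\eta(1-\delta)} \log\!\left( \frac{\left(b+\tfrac12\right)(1-\eta)^{a_n}}{\left(\tfrac12-\eta\right) a_n (\eta\delta)^{b+1/2}} \right),
\end{equation*}
whence $t_n^2 \leq \bar{t}_n^2$. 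The leading term as $a_n \to 0$ is $\tfrac{2}{\eta(1-\delta)}\log(1/a_n)$, the remaining factors contributing only an additive constant and an $a_n\log(1-\eta)=o(1)$ term. To evaluate it, the calibration $\lim_{n\to\infty} a_n/p_n \in (0,\infty)$ gives $\log(1/a_n)\sim\log(1/p_n)$; writing $f=(1-p_n)/p_n$ and $v_n=u_nf^2$, condition (4) of Assumption \ref{Assumption:1} gives $\log v_n\sim Cu_n$, so $2\log(1/p_n)=\log v_n-\log u_n-2\log(1-p_n)\sim Cu_n$ (using $\log u_n=o(u_n)$ and $\log(1-p_n)\to0$), i.e. $\log(1/a_n)\sim\tfrac{C}{2}\psi_n^2$. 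Thus $\bar t_n^2 \sim C\psi_n^2/(\eta(1-\delta))$. Because $\rho > 2/(\eta(1-\delta))$, for all large $n$ we have $\bar t_n^2 \leq \tfrac{\rho C}{2}\psi_n^2 \leq \tfrac{\rho C}{2}(1+\psi_n^2)$, and monotonicity of $\Phi$ then gives $t_{2i}\leq 2\Phi(t_n/\sqrt{1+\psi_n^2})-1\leq 2\Phi(\sqrt{\rho C/2})-1$; absorbing the vanishing discrepancies into the multiplicative factor yields $t_{2i}\leq[2\Phi(\sqrt{\rho C/2})-1](1+o(1))$.

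The main obstacle is the asymptotic evaluation in the last step: one must establish $\log(1/a_n)\sim\tfrac{C}{2}u_n$ by correctly combining all four parts of Assumption \ref{Assumption:1} (in particular verifying that $\log u_n$ and $\log(1-p_n)$ are negligible against $\log(1/p_n)$), and one must track the strict slack between $2/(\eta(1-\delta))$ and $\rho$ carefully enough that it absorbs both the lower-order terms in $\bar t_n^2$ and the gap between $\psi_n^2$ and $1+\psi_n^2$. A secondary delicate point is the admissible range of the tuning parameter, since the crude split $\mathbb{E}(\kappa_i\mid X_i)\leq\eta+\Pr(\kappa_i>\eta\mid X_i)$ is only informative while the threshold parameter stays below $1/2$; pushing the constant $2/(\eta(1-\delta))$ toward its oracle value requires exploiting the sharper two-sided control furnished by Theorems \ref{Th:2}--\ref{Th:3}, and this is where the estimate must be handled with the greatest care.
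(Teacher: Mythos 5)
Your reduction of the problem is sound in two places: the monotone-likelihood-ratio observation (so that rule (\ref{thresholdingrule}) is exactly a threshold test $X_i^2 > t_n^2$) is correct, and your asymptotic evaluation $\log(1/a_n) \sim \log(1/p_n) \sim \tfrac{C}{2}\,u_n = \tfrac{C}{2}\psi_n^2$ is precisely the calculation the paper also makes (it appears verbatim in the paper's proof, using conditions 3--4 of Assumption \ref{Assumption:1} together with $a_n/p_n \rightarrow d > 0$). However, there is a genuine gap at the heart of your argument, and it is not the ``secondary delicate point'' you describe it as --- it is the central obstacle. The split $\mathbb{E}(\kappa_i \mid X_i) \leq \eta + \Pr(\kappa_i > \eta \mid X_i)$ can only certify $\mathbb{E}(\kappa_i \mid X_i) < 1/2$ when $\eta < 1/2$: for $\eta \geq 1/2$ the target $\tfrac{1}{2}-\eta$ is nonpositive and your threshold $\bar{t}_n^2$ is undefined. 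Consequently your proof only covers $\eta \in (0,1/2)$, i.e.\ only $\rho > 4$, whereas the theorem asserts the bound for every $\eta \in (0,1)$, $\delta \in (0,1)$, and $\rho > 2/(\eta(1-\delta))$ --- in particular for $\rho$ arbitrarily close to $2$. This is not cosmetic: the proof of Theorem \ref{Th:8} takes the infimum over admissible $\rho$, which must equal $2$ so that the upper bound $2\Phi(\sqrt{\rho C/2})-1$ collapses onto the lower bound $2\Phi(\sqrt{C})-1$ and yields the \emph{exact} ABOS property; with your constraint the risk ratio could only be bounded by $[2\Phi(\sqrt{2C})-1]/[2\Phi(\sqrt{C})-1] > 1$.

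Your proposed remedy --- ``exploiting the sharper two-sided control furnished by Theorems \ref{Th:2}--\ref{Th:3}'' --- does not close this gap, because those theorems bound tail \emph{probabilities} of $\kappa_i$, while what must be improved is the additive constant $\eta$ bounding the truncated mean $\mathbb{E}(\kappa_i 1\{\kappa_i < \eta\} \mid X_i)$. The paper's fix is a separate computation (its Lemma \ref{Lemma:1}): instead of the trivial bound $\eta$, one shows via the substitution $t = \kappa x^2$ that
\begin{equation*}
\mathbb{E}\bigl(\kappa\, 1\{\kappa < \eta\} \mid x\bigr) \;\leq\; (1-\eta)^{a_n - 1}\,\Gamma\!\left(b+\tfrac{3}{2}\right) 2^{b+3/2} \left[\, x^2 \int_0^{x^2} t^{\,b-1/2} e^{-t/2}\, dt \right]^{-1},
\end{equation*}
a quantity that tends to zero as $|x| \rightarrow \infty$ for fixed $n$, uniformly over $|x| > \sqrt{\rho \log(1/a_n)}$ as $n \rightarrow \infty$. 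Because the bound on the truncated mean now vanishes rather than contributing the constant $\eta$, the parameter $\eta$ may be taken anywhere in $(0,1)$, which is exactly what admits every $\rho > 2/(\eta(1-\delta))$ down to the infimum $2$. The paper then concludes not by your exact-threshold identity but by decomposing the Type II error over the events $\{|X_i| \gtrless \sqrt{\rho\log(1/a_n)}\}$, killing the first piece with Markov's inequality applied to $h_n(X_i)$, and evaluating the second as the Gaussian probability $[2\Phi(\sqrt{\rho C/2})-1](1+o(1))$. To repair your write-up, you would need to replace your crude split with a bound of this type; the rest of your argument (threshold reduction and asymptotics) can then be retained.
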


\begin{theorem}
\label{Theorem:7}
Assume the same setup as Theorem \ref{Theorem:4}. Suppose we wish to test (\ref{simultaneoustests}) using the classification rule (\ref{thresholdingrule}) under the $\textrm{NBP}_n$ prior. Suppose further that $a_n \in (0, 1)$, with $a_n \rightarrow 0$ as $n \rightarrow \infty$ in such as way that $\lim_{n \rightarrow \infty} a_n / p_n \in (0, \infty)$ and that $b \in (1/2, \infty)$ is fixed. Then as $n \rightarrow \infty$, a lower bound for the probability of a Type II error for the $i$th test is given by
\begin{equation*}
t_{2i} \geq \left[ 2 \Phi (\sqrt{C}) - 1 \right] (1+ o(1)) \textrm{ as } n \rightarrow \infty,
\end{equation*}
where the $o(1)$ terms tend to zero as $n \rightarrow \infty$.
\end{theorem}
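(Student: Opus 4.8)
The plan is to produce the lower bound by identifying an explicit range of observed values $X_i$ on which the test is \emph{guaranteed} not to reject $H_{0i}$, and then bounding $t_{2i}$ from below by the probability that $X_i$ lands in that range under the alternative. Recall that a Type II error occurs precisely when $\mathbb{E}(1-\kappa_i | X_i) \leq 1/2$, so any event on which this inequality is forced is a subset of the Type II error event and contributes to $t_{2i}$.

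The first step uses the deterministic upper bound from Theorem \ref{Th:1}, namely $\mathbb{E}(1-\kappa_i | X_i) \leq e^{X_i^2/2}\, a_n/(a_n+b+1/2)$. Setting the right-hand side equal to $1/2$ and solving gives the threshold
\begin{equation*}
\gamma_n^2 := 2\log\left( \frac{a_n+b+1/2}{2a_n} \right),
\end{equation*}
and since this bound is increasing in $X_i^2$, whenever $X_i^2 \leq \gamma_n^2$ we are assured that $\mathbb{E}(1-\kappa_i | X_i) \leq 1/2$ and hence that $H_{0i}$ is not rejected. Consequently $\{ X_i^2 \leq \gamma_n^2 \}$ is contained in the Type II error event, which yields $t_{2i} \geq \Pr(X_i^2 \leq \gamma_n^2 \mid H_{1i})$. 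Under $H_{1i}$ the marginal law of $X_i$ is $\mathcal{N}(0, 1+\psi_n^2)$ by (\ref{marginalsspikeandslab}), so $X_i/\sqrt{1+\psi_n^2}$ is standard normal and this lower bound equals exactly $2\Phi\left( \gamma_n/\sqrt{1+\psi_n^2} \right) - 1$. It then remains to show $\gamma_n/\sqrt{1+\psi_n^2} \to \sqrt{C}$, after which continuity of $\Phi$ and positivity of $2\Phi(\sqrt{C})-1$ deliver the claimed form $[2\Phi(\sqrt{C})-1](1+o(1))$.

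The main work, and the step I expect to be the crux, is this last limit, which is a careful reconciliation of the rate conditions in Assumption \ref{Assumption:1} with the hyperparameter scaling $a_n \asymp p_n$. Since $b$ is fixed and $a_n \to 0$, one has $\gamma_n^2 = -2\log a_n + O(1)$, so $\gamma_n^2 \sim -2\log a_n$; and because $\lim_{n\to\infty} a_n/p_n \in (0,\infty)$, we get $\log a_n \sim \log p_n$, whence $\gamma_n^2 \sim -2\log p_n$. On the other hand $1+\psi_n^2 \sim u_n$ as $u_n \to \infty$, and writing $\log v_n = \log u_n - 2\log p_n + 2\log(1-p_n)$ together with $\log u_n / u_n \to 0$ and $\log(1-p_n)\to 0$ converts the fourth condition $\log v_n / u_n \to C$ into $-2\log p_n / u_n \to C$. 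Combining these gives $\gamma_n^2/(1+\psi_n^2) \sim -2\log p_n / u_n \to C$, which is exactly what is needed. The only subtlety to watch is that the $O(1)$ error terms in $\gamma_n^2$ and the $+1$ in $1+\psi_n^2$ are genuinely negligible against the diverging quantities $-\log p_n$ and $u_n$, so that the asymptotic equivalences can be chained without loss.
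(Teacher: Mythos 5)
Your proposal is correct and follows essentially the same route as the paper's proof: use the bound of Theorem \ref{Th:1} to show that $\{X_i^2 \leq 2\log((a_n+b+1/2)/(2a_n))\}$ forces non-rejection, lower-bound $t_{2i}$ by the probability of this event under the $\mathcal{N}(0,1+\psi_n^2)$ alternative, and then convert $a_n \asymp p_n$ and Assumption \ref{Assumption:1} into the limit $\sqrt{C}$. In fact your limit computation is slightly more careful than the paper's write-up, which weakens the threshold to $\log\left(\tfrac{1}{2a_n}\right)$ (dropping a factor of $2$) before passing to the limit $\sqrt{C}$ — as literally written that step would only yield $2\Phi(\sqrt{C/2})-1$; you keep the full threshold and obtain the stated constant directly.
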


Theorems \ref{Theorem:4}-\ref{Theorem:5} show that for \textit{any} sequence of hyperparameters $a_n$ such that $a_n \rightarrow 0$ as $n \rightarrow \infty$, the probability of a Type I error for test (\ref{thresholdingrule}) is asymptotically vanishing under the $\textrm{NBP}_n$ prior.  Meanwhile, Theorems \ref{Theorem:6}-\ref{Theorem:7} show that if $a_n$ is the same order as the true signal proportion $p_n$, then the probability of Type II error for test (\ref{thresholdingrule}) can be bounded from above and below. Notice that in Theorem \ref{Theorem:6}, we are free to choose any $\rho$ arbitrarily close to 2 in the upper bound on the probability of a Type II error, with 2 being the infimum for $\rho$. Thus, the limit inferior of upper bound in Theorem \ref{Theorem:6} is the same as the lower bound established in Theorem \ref{Theorem:7}, and so these bounds are sharp. Altogether, Theorems \ref{Theorem:4}-\ref{Theorem:7} show that asymptotically, the Bayes risk (\ref{ABOSrisk}) is controlled entirely by the Type II error. If $C \approx 0$ in Assumption \ref{Assumption:1}, then the power of the $i$th test, $1-t_{{2i}}$, under the $\textrm{NBP}_n$ prior (\ref{NBPn}) will be close to one.

Having obtained appropriate upper and lower bounds on the Type I and Type II probabilities under thresholding rule (\ref{thresholdingrule}), we are ready to state our main theorem which proves that our method under the $\textrm{NBP}_n$ prior is asymptotically Bayes optimal under sparsity.
\begin{theorem}
\label{Th:8}
Suppose that $X_1, \ldots, X_n$ are i.i.d. observations having distribution (\ref{marginalsspikeandslab}) where the sequence of vectors $(\psi^2, p)$ satisfies Assumption \ref{Assumption:1}. Suppose we wish to test (\ref{simultaneoustests}) using the classification rule (\ref{thresholdingrule}). Suppose further that $a_n \in (0, 1)$, with $a_n \rightarrow 0$ as $n \rightarrow \infty$ in such a way that $\lim_{n \rightarrow \infty} a_n / p_n \in (0, \infty)$ and that $b \in ( 1/2, \infty)$ is fixed. Then
\begin{equation} \label{NBPriskratio}
\lim_{n \rightarrow \infty} \frac{R_{NBP}}{R_{Opt}^{BO}} = 1,
\end{equation}
i.e. rule (\ref{thresholdingrule}) based on the $\textrm{NBP}_n$ prior (\ref{NBPn}) is ABOS.
\end{theorem}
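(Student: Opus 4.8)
The plan is to exploit the sharp two-sided error bounds already established in Theorems \ref{Theorem:4}--\ref{Theorem:7}. Since $X_1, \ldots, X_n$ are i.i.d.\ from the two-groups model (\ref{marginalsspikeandslab}), the per-test error probabilities do not depend on $i$; write $t_1 = t_{1i}$ and $t_2 = t_{2i}$. Then the Bayes risk (\ref{twogroupsrisk}) of rule (\ref{thresholdingrule}) reduces to $R_{NBP} = n\{(1-p)t_1 + p\, t_2\}$. Dividing by the oracle risk (\ref{ABOSrisk}) gives
\begin{equation*}
\frac{R_{NBP}}{R_{Opt}^{BO}} = \frac{(1-p)\,t_1 + p\, t_2}{p\,(2\Phi(\sqrt{C}) - 1)\,(1+o(1))} = \frac{1}{1+o(1)}\left[ \frac{(1-p)}{p}\cdot\frac{t_1}{2\Phi(\sqrt{C})-1} + \frac{t_2}{2\Phi(\sqrt{C})-1}\right].
\end{equation*}
Because Assumption \ref{Assumption:1} gives $C \in (0,\infty)$, the constant $2\Phi(\sqrt{C})-1$ is strictly positive and bounded away from zero, so it suffices to prove two claims: first, that the Type I contribution $\tfrac{(1-p)}{p}\, t_1 \rightarrow 0$, and second, that $t_2 \rightarrow 2\Phi(\sqrt{C}) - 1$.

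For the Type I term, I would invoke the upper bound of Theorem \ref{Theorem:4}. As $a_n \rightarrow 0$ with $b$ fixed, the prefactor satisfies $a_n/(a_n+b+1/2) \sim a_n/(b+1/2)$ and the bracketed logarithm satisfies $\log\!\big(\tfrac{a_n+b+1/2}{2a_n}\big) \sim \log(1/a_n) \rightarrow \infty$, so that $t_1 = O\big(a_n\,[\log(1/a_n)]^{-1/2}\big)$. Since the hypothesis $\lim_{n\rightarrow\infty} a_n/p_n \in (0,\infty)$ forces $a_n \asymp p_n$, we obtain
\begin{equation*}
\frac{1-p}{p}\,t_1 = O\!\left( \frac{a_n}{p_n}\,[\log(1/a_n)]^{-1/2}\right) = O\!\left( [\log(1/a_n)]^{-1/2}\right) \longrightarrow 0,
\end{equation*}
using $1-p \rightarrow 1$. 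Hence the false positives contribute negligibly to the risk relative to the oracle, and the whole risk is asymptotically governed by the Type II error.

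For the Type II term I would sandwich $t_2$ between Theorems \ref{Theorem:6} and \ref{Theorem:7}. The lower bound of Theorem \ref{Theorem:7} already gives $\liminf_{n} t_2 \geq 2\Phi(\sqrt{C})-1$. For the matching upper bound, Theorem \ref{Theorem:6} yields, for each admissible triple $(\eta,\delta,\rho)$ with $\rho > 2/(\eta(1-\delta))$, that $\limsup_n t_2 \leq 2\Phi(\sqrt{\rho C/2})-1$. The main obstacle is that the $(1+o(1))$ factor in Theorem \ref{Theorem:6} depends on $(\eta,\delta,\rho)$, so one cannot let $\rho \downarrow 2$ and $n \rightarrow \infty$ simultaneously; the correct order is to fix $(\eta,\delta,\rho)$, send $n \rightarrow \infty$ to get the $\limsup$ bound, and only then take the infimum over the auxiliary constants. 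Letting $\eta \uparrow 1$, $\delta \downarrow 0$ allows $\rho$ to approach its infimum $2$, and by continuity of $\Phi$ the infimum of $2\Phi(\sqrt{\rho C/2})-1$ over $\rho>2$ equals $2\Phi(\sqrt{C})-1$. This gives $\limsup_n t_2 \leq 2\Phi(\sqrt{C})-1$, which combined with the lower bound yields $t_2 \rightarrow 2\Phi(\sqrt{C})-1$. Substituting both claims into the displayed ratio makes the bracketed expression converge to $0 + 1 = 1$, and since the $(1+o(1))^{-1}$ prefactor also tends to $1$, we conclude $R_{NBP}/R_{Opt}^{BO} \rightarrow 1$, i.e.\ rule (\ref{thresholdingrule}) is ABOS.
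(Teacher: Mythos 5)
Your proof is correct and follows essentially the same route as the paper's: both rely on the bounds of Theorems \ref{Theorem:4}--\ref{Theorem:7}, show the Type I contribution is negligible, sandwich the Type II error, and crucially fix $(\eta,\delta,\rho)$, send $n \rightarrow \infty$, and only then take the infimum over $\rho > 2$ (the paper performs this last optimization on the risk ratio rather than on $t_2$ itself, a purely cosmetic difference). If anything, your handling of the Type I term is slightly more careful than the paper's: the paper's proof records only $t_{1i} \rightarrow 0$ before asserting the risk bound $R_{NBP} \leq np[2\Phi(\sqrt{\rho C/2})-1](1+o(1))$, whereas that bound actually needs the rate $t_1 = o(p_n)$, which you verify explicitly via $t_1 = O\bigl(a_n[\log(1/a_n)]^{-1/2}\bigr)$ together with $a_n \asymp p_n$.
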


We have shown that our thresholding rule based on the $\textrm{NBP}_n$ prior asymptotically attains the Bayes Oracle risk exactly, provided that $a_n$ is of the same order as the sparsity level $p_n$. Since $p_n$ is typically unknown, it should ideally be estimated from the data, and our theoretical findings suggest how to build adaptive procedures for setting $a_n$, which we describe in Sections \ref{NBPTestingAdaptive} and \ref{DataAdaptiveMethods}.

\subsection{A Data-Adaptive Testing Rule Under the NBP Prior} \label{NBPTestingAdaptive}

As we found in Theorem \ref{Th:8}, our test procedure (\ref{thresholdingrule}) has the Bayes oracle property under the $\textrm{NBP}_n$ prior, provided that $a_n$ is of the same order as the true signal proportion $p_n$ and $b \in (1/2, \infty)$ is fixed. However, $p_n$ is typically unknown, and as a result, we must estimate it from the data. To this end, we use the estimator proposed by \cite{VanDerPasKleijnVanDerVaart2014}:
\begin{equation} \label{EBsparsity}
\widehat{a}_n^{ES} := \max \left\{ \frac{1}{n}, \frac{1}{c_2 n} \displaystyle \sum_{j=1}^n 1 \{ |X_j| > \sqrt{c_1 \log n} \} \right\},
\end{equation}
where $c_1 \geq 2$ and $c_2 \geq 1$ are fixed constants, and we use $ES$ to denote `estimated sparsity.' This choice is motivated by the so-called `universal threshold,' $\sqrt{2 \log n}$. It is well-known that signals which fall below this threshold are shrunk towards zero, and thus, they may not be detected. 

Based on the considerations above, we now introduce a data-adaptive testing rule under the $\textrm{NBP}_n$ prior. Setting $a_n \equiv \widehat{a}_n^{ES}$ and $b \in (1/2, \infty)$ as the hyperparameters in the NBP prior, our test for the $i$th observation $X_i$ is:
\begin{equation} \label{thresholdingruleEB}
\textrm{Reject } H_{0i} \textrm{ if } \mathbb{E} (1 - \kappa_i | X_i , \widehat{a}_n^{ES}) > \frac{1}{2},
\end{equation}
From a decision theoretic perspective, we now demonstrate that setting the hyperparameter $a_n$ equal to $\widehat{a}_n^{ES}$ is also justified. Letting $R_{NBP}^{ES}$ denote the asymptotic Bayes risk, we first derive sharp lower bounds and upper bounds on the Type I and Type II error probabilities, which we denote as $\widetilde{t}_{1i}$ and $\widetilde{t}_{2i}$ respectively. We then illustrate that testing rule (\ref{thresholdingruleEB}) is \textit{also} ABOS.

Following the notation of \cite{GhoshTangGhoshChakrabarti2016}, we denote
\begin{equation} \label{alphan}
\alpha_n = \Pr(|X_i| > \sqrt{c_1 \log n}), \textrm{ and } \beta = 1 - \Phi(c_1 C/2 \epsilon),
\end{equation}
where $\epsilon \in (0, 1)$, $c_1$ is the constant in (\ref{EBsparsity}), and $C$ is the constant from Assumption \ref{Assumption:1}. In \cite{GhoshTangGhoshChakrabarti2016}, it was shown that as long as the signal proportion $p_n \propto n^{-\epsilon}$ and Assumption \ref{Assumption:1} holds, then
\begin{equation} \label{alpharelation}
\alpha_n = 2 \beta p_n (1 + o(1)),
\end{equation}
under the two-groups model (\ref{marginalsspikeandslab}), where the $o(1)$ terms go to 0 as $n \rightarrow \infty$. We will use (\ref{alphan}) and (\ref{alpharelation}) to prove Theorems \ref{Theorem:9} and \ref{Theorem:10}, which provide asymptotic bounds on the Type I and Type II error probabilities under (\ref{thresholdingruleEB}).

\begin{theorem}
\label{Theorem:9}
Suppose that $X_1, \ldots, X_n$ are i.i.d. observations having distribution (\ref{marginalsspikeandslab}) where the sequence of vectors $(\psi_n^2, p_n)$ satisfies Assumption \ref{Assumption:1}, with $p_n \propto n^{-\epsilon}, \epsilon \in (0, 1)$. Fix $b \in (1/2, \infty)$, $c_1 \geq 2$, $c_2 \geq 1$, $\xi \in (0, 1/2)$, and $\delta \in (0, 1)$. Suppose we wish to test (\ref{simultaneoustests}) using the classification rule (\ref{thresholdingruleEB}).  Then as $n \rightarrow \infty$, bounds for the probability of a Type I error for the $i$th test, $\widetilde{t}_{1i}$, are given by

\begin{align*}
& 1 - \Phi \left( \sqrt{ \frac{2}{\xi ( 1 -\delta)} \left[ \log \left( \frac{ \left( b + \frac{1}{2} \right) (1 - \xi)^{2 \alpha_n} }{ 2 \alpha_n ( \xi \delta)^{b + \frac{1}{2}} } \right) \right] } \right) \leq \widetilde{t}_{1i} \\
& \qquad \leq \frac{4 \alpha_n}{\sqrt{\pi} ( 2 \alpha_n + b + 1/2)} \left[ \log \left( \frac{2 \alpha_n + b + 1/2}{4 \alpha_n} \right) \right]^{-1/2} (1 + o(1))  \\
& \qquad \qquad + \frac{1 / \sqrt{\pi}}{n^{c_1/2} \sqrt{\log n}} + e^{-2 (2 \log 2 - 1) \beta np_n(1+o(1))},
\end{align*} 
where $\alpha_n$ and $\beta$ are as in (\ref{alphan}).
\end{theorem}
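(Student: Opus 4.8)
The plan is to reduce the adaptive Type I error to the non-adaptive bounds of Theorems \ref{Theorem:4} and \ref{Theorem:5} by exploiting two facts: (i) the rule (\ref{thresholdingruleEB}) is monotone in the hyperparameter, and (ii) $\widehat a_n^{ES}$ concentrates tightly around a deterministic multiple of $\alpha_n$. For fixed $a$, the posterior quantity $\mathbb E(1-\kappa_i\mid X_i,a)$ is increasing in both $|X_i|$ and $a$ (smaller $a$ places more mass near zero and shrinks harder, in the spirit of Proposition \ref{Prop:1} and Corollary \ref{Corollary:1}), so for each $a$ there is a threshold $\tau_n(a)$, decreasing in $a$, with rejection equivalent to $|X_i|>\tau_n(a)$. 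Writing the adaptive Type I error as $\widetilde t_{1i}=\Pr(|X_i|>\tau_n(\widehat a_n^{ES})\mid H_{0i})$, the strategy is to trap $\widehat a_n^{ES}$ in a high-probability window $[\tfrac12\alpha_n,2\alpha_n]$ and evaluate the non-adaptive bounds at the endpoints.

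First I would establish the concentration of $\widehat a_n^{ES}$. The sum $S_n=\sum_{j=1}^n \mathbf 1\{|X_j|>\sqrt{c_1\log n}\}$ is a sum of independent Bernoulli$(\alpha_n)$ variables with mean $n\alpha_n$, and since $p_n\propto n^{-\epsilon}$ and $\alpha_n=2\beta p_n(1+o(1))$ by (\ref{alpharelation}), we have $n\alpha_n\to\infty$. A Chernoff upper-tail bound gives $\Pr(S_n\ge 2n\alpha_n)\le \exp(-n\alpha_n(2\log 2-1))=\exp(-2(2\log2-1)\beta np_n(1+o(1)))$, which is precisely the third term in the stated upper bound and forces $\widehat a_n^{ES}\le 2\alpha_n$ off an exponentially small event (recall $c_2\ge1$ only makes $\widehat a_n^{ES}$ smaller, and $1/n\ll\alpha_n$). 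A matching lower-tail Chernoff bound yields $\widehat a_n^{ES}\ge\tfrac12\alpha_n$ with probability tending to one, which I will use for the lower bound.

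For the upper bound I would condition on $H_{0i}$ and split on whether $X_i$ clears the universal threshold. The event $\{|X_i|>\sqrt{c_1\log n}\}$ has null probability $2(1-\Phi(\sqrt{c_1\log n}))\le (1/\sqrt\pi)\,n^{-c_1/2}(\log n)^{-1/2}$ by the Gaussian tail (Mills ratio) bound, which is the second term; on its complement $X_i$ does not enter $S_n$, so $\widehat a_n^{ES}$ is a function of $\{X_j\}_{j\ne i}$ alone and is independent of $X_i$. Intersecting further with $\{\widehat a_n^{ES}\le2\alpha_n\}$ and using monotonicity, the rejection region is contained in $\{|X_i|>\tau_n(2\alpha_n)\}$, whose null probability is exactly the non-adaptive Type I error at $a=2\alpha_n$; applying Theorem \ref{Theorem:4} with $a_n$ replaced by $2\alpha_n$ produces the first term. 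Adding back the two discarded events (the threshold-crossing event and the rare event from the Chernoff bound) assembles the three-term upper bound.

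The lower bound runs the monotonicity in reverse: on the high-probability event $\{\widehat a_n^{ES}\ge\tfrac12\alpha_n\}$ the rejection region contains $\{|X_i|>\tau_n(\cdot)\}$ at an effective hyperparameter of order $\alpha_n$, so the non-adaptive lower bound of Theorem \ref{Theorem:5} (whose $1-\Phi(\cdot)$ form comes from solving $\mathbb E(\kappa_i\mid X_i)<\tfrac12$ via the tail bound of Theorem \ref{Th:3} with $\eta=\xi$) carries over with $a_n$ replaced by the corresponding multiple of $\alpha_n$, giving the stated $1-\Phi$ expression up to the negligible probability of the complementary event. The main obstacle is precisely the coupling between the random, data-driven hyperparameter $\widehat a_n^{ES}$ and the test statistic $X_i$ it is applied to: $\widehat a_n^{ES}$ is built from all the data, including $X_i$, so one cannot simply substitute a deterministic $a_n$. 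The universal-threshold split neutralizes this under $H_{0i}$ (a null $X_i$ almost never contributes to $S_n$), and the exponential concentration of $S_n$ controls the fluctuation of $\widehat a_n^{ES}$; handling both while keeping the bounds sharp enough to preserve the $(1+o(1))$ factors is the delicate part. A final appeal to (\ref{alpharelation}) rewrites $n\alpha_n$ as $2\beta np_n(1+o(1))$ to land the error terms in the displayed form.
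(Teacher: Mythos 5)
Your upper-bound argument is, in substance, the paper's own: monotonicity of $\mathbb{E}(1-\kappa_i \mid X_i, a)$ in $a$ reduces the rejection event on $\{\widehat{a}_n^{ES} \leq 2\alpha_n\}$ to the non-adaptive Type I error at the deterministic value $2\alpha_n$ (Theorem \ref{Theorem:4}), and $\Pr(\widehat{a}_n^{ES} > 2\alpha_n \mid H_{0i})$ is handled by the universal-threshold split (Mills ratio giving the $n^{-c_1/2}(\log n)^{-1/2}$ term) plus a Chernoff bound on the binomial count over $j \neq i$, which after (\ref{alpharelation}) gives the exponential term. The paper outsources these last two terms to the proof of Theorem 10 of \cite{GhoshTangGhoshChakrabarti2016}; you have simply inlined that computation, so this half is correct and essentially identical to the paper. (One pedantic remark: quoting Theorem \ref{Theorem:4} verbatim at $2\alpha_n$ yields $4\sqrt{2}\,\alpha_n$ in the numerator rather than the displayed $4\alpha_n$; the extra $\sqrt{2}$ is a slip in the constant of Theorem \ref{Theorem:4} itself, not a defect of your logic.)

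The genuine gap is in the lower bound. First, your concentration claim that $\widehat{a}_n^{ES} \geq \frac{1}{2}\alpha_n$ with probability tending to one is false for general $c_2 \geq 1$: since $\widehat{a}_n^{ES}$ concentrates around $\alpha_n/c_2$, the event $\{\widehat{a}_n^{ES} \geq \frac{1}{2}\alpha_n\}$ is itself an upper-tail deviation whenever $c_2 > 2$, and its probability tends to zero; the correct high-probability lower endpoint is of order $\alpha_n/c_2$, say $\alpha_n/(2c_2)$. Second, and more consequentially, the monotonicity you invoke — which you orient correctly: rejection at the random $\widehat{a}_n^{ES}$ is implied by the Theorem \ref{Th:3}-based sufficient condition at a deterministic $a_0$ only when $\widehat{a}_n^{ES} \geq a_0$ — forces the final $1-\Phi$ expression to be evaluated at the \emph{lower} endpoint $a_0 \approx \alpha_n/(2c_2)$, not at $2\alpha_n$. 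Since that expression strictly decreases as the hyperparameter decreases (shifting the constant inside the logarithm changes the Gaussian tail by a constant multiplicative factor), your argument proves a constant-factor weaker inequality than the one displayed; and no high-probability lower-tail event can push the constant up to $2\alpha_n$, because $\widehat{a}_n^{ES} \leq \alpha_n(1+o(1))$ with probability tending to one.

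For comparison, the paper's own lower-bound derivation restricts to $\{\widehat{a}_n^{ES} \leq 2\alpha_n\}$ and asserts that, on this event, the sufficient condition evaluated at $2\alpha_n$ already implies rejection at $\widehat{a}_n^{ES}$ — that is how $2\alpha_n$ lands inside the $1-\Phi$ term. But since the bound in question is nonincreasing in $a$, that implication would in fact require $\widehat{a}_n^{ES} \geq 2\alpha_n$; your orientation of the monotonicity is the logically sound one. So the mismatch between your constant and the theorem's is not carelessness on your part: a corrected argument (yours, with the $c_2$-dependent window) yields a lower bound of the same form and the same asymptotic order, but with $\alpha_n/(2c_2)$ in place of $2\alpha_n$, and the displayed constant does not appear to be attainable by this route.
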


\begin{theorem}
\label{Theorem:10}
Assume the same setup as Theorem \ref{Theorem:9}, and assume that $p_n \propto n^{-\epsilon}, \epsilon \in (0, 1)$.  Fix $b \in (1/2, \infty)$, $c_1 \geq 2$, $c_2 \geq 1$, $\eta \in (0, 1)$, and $\delta \in (0, 1)$, and choose any $\rho > 2/(\eta (1 - \delta))$. Suppose we wish to test (\ref{simultaneoustests}) using the classification rule (\ref{thresholdingruleEB}).  Then as $n \rightarrow \infty$, bounds for the probability of a Type II error for the $i$th test, $\widetilde{t}_{1i}$, are given by
\begin{equation*}
\left[ 2 \Phi (\sqrt{C}) - 1 \right] (1+ o(1)) \leq \widetilde{t}_{2i} \leq  \left[ 2 \Phi \left( \sqrt{\frac{\rho C}{2}} \right) - 1 \right](1+o(1)) \textrm{ as } n \rightarrow \infty,
\end{equation*}
where the $o(1)$ terms tend to zero as $n \rightarrow \infty$.
\end{theorem}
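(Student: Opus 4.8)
The plan is to reduce the adaptive Type II analysis entirely to the non-adaptive bounds already established in Theorems \ref{Theorem:6} and \ref{Theorem:7}, exploiting the fact that rule (\ref{thresholdingruleEB}) differs from rule (\ref{thresholdingrule}) only in that the deterministic hyperparameter $a_n$ is replaced by the random estimator $\widehat{a}_n^{ES}$ of (\ref{EBsparsity}). The key structural observation is that, under $p_n \propto n^{-\epsilon}$ with $\epsilon \in (0,1)$, the estimator $\widehat{a}_n^{ES}$ concentrates on a deterministic quantity of the \emph{same order} as $p_n$, which is precisely the regime $\lim_{n\to\infty} a_n/p_n \in (0,\infty)$ demanded by Theorems \ref{Theorem:6} and \ref{Theorem:7}. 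I will therefore (i) trap $\widehat{a}_n^{ES}$ between two deterministic sequences, each proportional to $p_n$, on a high-probability event; (ii) transfer this trapping to the posterior mean $\mathbb{E}(1-\kappa_i \mid X_i, a)$ using its monotonicity in $a$; and (iii) invoke Theorems \ref{Theorem:6} and \ref{Theorem:7} on the two sandwiching deterministic rules.

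For step (i), recall from (\ref{alphan})--(\ref{alpharelation}) that the expected number of exceedances satisfies $n \alpha_n = 2 \beta\, n p_n (1 + o(1))$, and since $n p_n \propto n^{1-\epsilon} \to \infty$, the count $S_n = \sum_{j=1}^n 1\{|X_j| > \sqrt{c_1 \log n}\}$ has mean diverging to infinity. A Chernoff (or even Chebyshev) bound then yields $S_n / (n \alpha_n) \to 1$ in probability, so I can define a good event $A_n$ on which $\underline{a}_n \le \widehat{a}_n^{ES} \le \overline{a}_n$, where $\underline{a}_n$ and $\overline{a}_n$ are both of order $\alpha_n \asymp p_n$ and $\Pr(A_n^c) = o(1)$.

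Step (ii) contains the main obstacle: unlike the deterministic $a_n$ of the non-adaptive rule, $\widehat{a}_n^{ES}$ depends on \emph{all} coordinates, including $X_i$, so it is not independent of the $i$th test statistic. I resolve this by a leave-one-out decoupling. Writing $S_n^{(-i)} = \sum_{j \neq i} 1\{|X_j| > \sqrt{c_1 \log n}\}$, the bound $S_n^{(-i)} \le S_n \le S_n^{(-i)} + 1$ gives
\[
\max\Big\{\tfrac{1}{n}, \tfrac{S_n^{(-i)}}{c_2 n}\Big\} \;\le\; \widehat{a}_n^{ES} \;\le\; \max\Big\{\tfrac{1}{n}, \tfrac{S_n^{(-i)}}{c_2 n}\Big\} + \tfrac{1}{c_2 n},
\]
so $\widehat{a}_n^{ES}$ equals a quantity independent of $X_i$ up to an additive $O(1/n)$ term, which is negligible against $p_n \propto n^{-\epsilon}$ with $\epsilon < 1$. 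Hence the trapping sequences $\underline{a}_n, \overline{a}_n$ may be taken independent of $X_i$. I then use monotonicity: from the posterior density (\ref{kappadensity}), the family indexed by $a$ has monotone likelihood ratio, since for $a_1 < a_2$ the ratio of densities is proportional to $(1-\kappa_i)^{a_2 - a_1}$, which is decreasing in $\kappa_i$. Thus $\mathbb{E}(1-\kappa_i \mid X_i, a)$ is nondecreasing in $a$, and the rejection region $\{\mathbb{E}(1-\kappa_i \mid X_i, a) > 1/2\}$ enlarges as $a$ grows.

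For step (iii), on $A_n$ the inequality $\widehat{a}_n^{ES} \ge \underline{a}_n$ forces the inclusion $\{\text{fail to reject at } \widehat{a}_n^{ES}\} \subseteq \{\text{fail to reject at } \underline{a}_n\}$, so that $\widetilde{t}_{2i} \le t_{2i}(\underline{a}_n) + \Pr(A_n^c)$; applying Theorem \ref{Theorem:6} to the deterministic sequence $\underline{a}_n \asymp p_n$ and using $\Pr(A_n^c) = o(1)$ gives the upper bound $[2\Phi(\sqrt{\rho C/2}) - 1](1+o(1))$. Symmetrically, $\widehat{a}_n^{ES} \le \overline{a}_n$ yields $\widetilde{t}_{2i} \ge t_{2i}(\overline{a}_n) - \Pr(A_n^c)$, and Theorem \ref{Theorem:7} applied to $\overline{a}_n \asymp p_n$ produces the lower bound $[2\Phi(\sqrt{C}) - 1](1+o(1))$. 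I expect the delicate points to be the decoupling argument and verifying that the $O(1/n)$ perturbation does not degrade the limiting constants; the latter holds because the bounds in Theorems \ref{Theorem:6} and \ref{Theorem:7} depend on $a_n$ only through the limit of $a_n/p_n$, which the perturbation leaves in $(0,\infty)$.
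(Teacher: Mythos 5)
Your proof is correct, and it shares the paper's two central ingredients---monotonicity of $\mathbb{E}(1-\kappa_i \mid X_i, a)$ in $a$ and concentration of $\widehat{a}_n^{ES}$ at the order $\alpha_n \asymp p_n$---but the execution differs in a way worth noting. The paper conditions on the one-sided event $\{\widehat{a}_n^{ES} > \gamma\alpha_n\}$ with $\gamma \in (0, 1/c_2)$, shows $\Pr(\widehat{a}_n^{ES} \leq \gamma\alpha_n) \to 0$ by citing the binomial concentration arguments of \cite{GhoshTangGhoshChakrabarti2016}, and then re-runs the Lemma \ref{Lemma:1} machinery with $a_n$ replaced by $\gamma\alpha_n$ to get the upper bound; your upper bound is the same idea with Theorem \ref{Theorem:6} invoked as a black box on the deterministic lower envelope $\underline{a}_n$. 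The substantive difference is the lower bound. The paper's display bounds $\widetilde{t}_{2i}$ below by $\Pr(\mathbb{E}(\kappa_i \mid X_i, \widehat{a}_n^{ES}) \geq 1/2) - \Pr(\widehat{a}_n^{ES} \leq \gamma\alpha_n)$, but the first term is $\widetilde{t}_{2i}$ itself, so as written that chain is circular, and the one-sided event $\{\widehat{a}_n^{ES} > \gamma\alpha_n\}$ points the wrong way for a lower bound: on that event monotonicity only yields inclusions that bound the failure event from \emph{above}. Your two-sided trapping supplies exactly the missing piece: on $\{\widehat{a}_n^{ES} \leq \overline{a}_n\}$ with $\overline{a}_n \asymp \alpha_n$, failure to reject at $\overline{a}_n$ implies failure to reject at $\widehat{a}_n^{ES}$, so $\widetilde{t}_{2i} \geq t_{2i}(\overline{a}_n) - \Pr(\widehat{a}_n^{ES} > \overline{a}_n)$, and Theorem \ref{Theorem:7} applied to $\overline{a}_n$ finishes; the required upper-tail control $\Pr(\widehat{a}_n^{ES} > \overline{a}_n) = o(1)$ (e.g.\ with $\overline{a}_n = 2\alpha_n$) is the same quantity the paper already controls in the proof of Theorem \ref{Theorem:9}. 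Two small remarks: your likelihood-ratio argument from (\ref{kappadensity}) is a cleaner justification of monotonicity than the paper's appeal to Theorem \ref{Th:1}; and the leave-one-out decoupling is not strictly necessary for the sandwich itself, since the inclusions hold pathwise once the envelopes are deterministic---though it is a tidy way to verify that the concentration event retains probability $1-o(1)$ conditional on $H_{1i}$ being true.
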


We pause briefly to compare Theorems \ref{Theorem:9}-\ref{Theorem:10} with Theorems \ref{Theorem:4}-\ref{Theorem:7}. Theorems \ref{Theorem:4}-\ref{Theorem:5} demonstrated that for \textit{any} sequence of hyperparameters $a_n$ such that $a_n \rightarrow 0$ as $n \rightarrow \infty$, the probability of a Type I error under thresholding rule (\ref{thresholdingrule}) asymptotically vanishes. Theorem \ref{Theorem:9} shows that this will also be the case for plug-in estimator $\widehat{a}_n^{EB}$ as long as $\alpha_n := \Pr ( |X_i| > \sqrt{c_1 \log n} )$, goes to 0 as $n \rightarrow \infty$. This condition holds for any $p_n \propto n^{-\epsilon}, \epsilon \in (0, 1)$. In replacing the generic sequence $a_n$ with a specific plug-in value $\widehat{a}_n^{EB}$, the bounds in Theorem \ref{Theorem:9} differ in constants from the bounds derived in Theorem \ref{Theorem:4}. However, the bounds in Theorems \ref{Theorem:4} and \ref{Theorem:9} are ultimately of the same order if $a_n \rightarrow 0$ and $\alpha_n \rightarrow 0$. In addition, Theorem \ref{Theorem:10} shows that if $p_n \propto n^{-\epsilon}$ and we utilize the EB estimator $\widehat{a}_n^{ES}$ (\ref{EBsparsity}) in place of $a_n$, then the upper and lower bounds on probability of Type II error are the same as those in Theorems \ref{Theorem:6}-\ref{Theorem:7}.

Having obtained appropriate upper and lower bounds on the Type I and Type II probabilities under thresholding rule (\ref{thresholdingruleEB}), we are ready to state our main theorem which proves that our data-adaptive testing procedure (\ref{thresholdingruleEB}) also possesses the Bayes Oracle property in the entire range of sparsity parameters $p \propto n^{-\epsilon}, \epsilon \in (0, 1)$. 
\begin{theorem}
\label{Theorem:11}
Suppose that $X_1, \ldots, X_n$ are i.i.d. observations having distribution (\ref{marginalsspikeandslab}) where the sequence of vectors $(\psi^2, p)$ satisfies Assumption \ref{Assumption:1}. Further assume that $p \propto n^{-\epsilon}, \epsilon \in (0, 1)$. For the NBP prior (\ref{NBPhier}),  fix $b \in (1/2, \infty)$ and set $a = \widehat{a}_n^{ES}$, where $\widehat{a}_n^{ES}$ is as in (\ref{EBsparsity}), with fixed $(c_1, c_2)$ satisfying $c_1 \geq 2$ and $c_2 \geq 1$. Suppose that we wish to test (\ref{simultaneoustests}) using the classification rule (\ref{thresholdingruleEB}). Then
\begin{equation} \label{NBPEBriskratio}
\lim_{n \rightarrow \infty} \frac{R^{ES}_{NBP}}{R_{Opt}^{BO}} = 1,
\end{equation}
i.e. data-adaptive thresholding rule (\ref{thresholdingruleEB}) is ABOS.
\end{theorem}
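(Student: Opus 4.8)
The plan is to mirror the proof of Theorem \ref{Th:8}, substituting the adaptive error bounds of Theorems \ref{Theorem:9}--\ref{Theorem:10} for the non-adaptive bounds of Theorems \ref{Theorem:4}--\ref{Theorem:7}. Since the $X_i$ are i.i.d.\ under (\ref{marginalsspikeandslab}) and rule (\ref{thresholdingruleEB}) treats each coordinate identically, the error probabilities are common across $i$; write them as $\widetilde{t}_1$ and $\widetilde{t}_2$, so that $R_{NBP}^{ES} = n\{(1-p)\widetilde{t}_1 + p\widetilde{t}_2\}$. Dividing by the oracle risk (\ref{ABOSrisk}) gives
\begin{equation*}
\frac{R_{NBP}^{ES}}{R_{Opt}^{BO}} = \frac{(1-p)\widetilde{t}_1}{p\,(2\Phi(\sqrt{C})-1)(1+o(1))} + \frac{\widetilde{t}_2}{(2\Phi(\sqrt{C})-1)(1+o(1))},
\end{equation*}
so it suffices to show the first summand vanishes and the second tends to $1$.

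For the Type II summand I would invoke the sandwich in Theorem \ref{Theorem:10}: since $\widetilde{t}_2$ is pinned between $[2\Phi(\sqrt{C})-1](1+o(1))$ and $[2\Phi(\sqrt{\rho C/2})-1](1+o(1))$, and $\rho$ may be taken arbitrarily close to its infimum $2$, the ratio $\widetilde{t}_2/(2\Phi(\sqrt{C})-1)$ has both limit inferior and limit superior equal to $1$. This is exactly the argument already used for Theorem \ref{Th:8} and poses no new difficulty.

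The substance lies in showing $\widetilde{t}_1 = o(p_n)$, and here I would treat the three terms in the Theorem \ref{Theorem:9} upper bound separately; the last two, which arise from the plug-in nature of $\widehat{a}_n^{ES}$, are the main obstacle. For the first term the key is relation (\ref{alpharelation}), $\alpha_n = 2\beta p_n(1+o(1))$, which puts $\alpha_n$ at the same order as $p_n$; the term is then $O(\alpha_n)[\log(1/\alpha_n)]^{-1/2}$, so dividing by $p_n$ leaves the vanishing factor $[\log(1/p_n)]^{-1/2}\to 0$. The deterministic term $n^{-c_1/2}/(\sqrt{\pi\log n})$ divided by $p_n \propto n^{-\epsilon}$ is of order $n^{\epsilon - c_1/2}/\sqrt{\log n}$, which tends to $0$ because $c_1\geq 2$ forces $c_1/2 \geq 1 > \epsilon$. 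The exponential term $\exp(-2(2\log 2 - 1)\beta n p_n(1+o(1)))$ carries the positive constant $2\log 2 - 1 > 0$ in its exponent, and since $p_n \propto n^{-\epsilon}$ with $\epsilon\in(0,1)$ yields $np_n \to \infty$, it decays exponentially and dominates the polynomial factor $1/p_n$, so the quotient vanishes. Collecting these three estimates gives $(1-p)\widetilde{t}_1/p \to 0$; combined with the Type II analysis this yields the ratio (\ref{NBPEBriskratio}) equal to $1$, establishing that rule (\ref{thresholdingruleEB}) is ABOS.
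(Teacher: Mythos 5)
Your proposal is correct and follows essentially the same route as the paper: the paper's proof of Theorem \ref{Theorem:11} simply invokes the reasoning of Theorem \ref{Th:8} with the bounds of Theorems \ref{Theorem:9}--\ref{Theorem:10} substituted in, noting that $p_n \propto n^{-\epsilon}$ implies $\alpha_n \to 0$ via (\ref{alpharelation}). Your write-up is in fact more careful on the one point the paper glosses over --- that the Type I bound must be $o(p_n)$, not merely $o(1)$, for the $(1-p)\widetilde{t}_1$ term to be negligible against the oracle risk $np(2\Phi(\sqrt{C})-1)$ --- and your term-by-term verification of the three summands in Theorem \ref{Theorem:9} supplies exactly that.
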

\begin{proof}
This follows the same reasoning as the proof for Theorem \ref{Th:8}, except we replace the bounds for $t_{1i}$ and $t_{2i}$ with those of $\widetilde{t}_{1i}$ and $\widetilde{t}_{2i}$ from Theorems \ref{Theorem:9} and \ref{Theorem:10}. To prove that the bounds for $\widetilde{t}_{1i}$ in Theorem \ref{Theorem:9} tend to zero, note that $p_n \propto n^{-\epsilon}, \epsilon \in (0,1)$, and therefore, by (\ref{alpharelation}), $\alpha_n \rightarrow 0$ as $n \rightarrow \infty$.
\end{proof}

Note that this condition on $p$ is quite mild. For comparison, \cite{BogdanChakrabartiFrommletGhosh2011} showed that the widely used Benjamini-Hochberg (BH) \cite{BenjaminiHochberg1995} procedure for controlling the false discovery rate (FDR) is ABOS if and only if $p_n \propto n^{-\epsilon}, \epsilon \in (0, 1]$. Unlike the BH procedure, however, the NBP requires an estimate of the unknown sparsity level $p_n$ in order to achieve the Bayes Oracle property and is \textit{not} ABOS if $p_n = n^{-1}$ (in this case, the probability of a Type I error is not asymptotically vanishing). Nevertheless, there are several advantages of the NBP model over BH. The BH procedure cannot be used for estimation or uncertainty quantification of $\bm{\theta}$. In contrast, the NBP model not only admits a testing procedure that is ABOS, but the NBP posterior can also be used to obtain estimates and credible intervals for $\bm{\theta}$. Further, obtaining an estimate of unknown sparsity $p_n$, such as the one in (\ref{EBsparsity}) or the ones described in Section \ref{DataAdaptiveMethods}, is not computationally expensive; this adds only a single preprocessing step or an extra iteration in the Markov chain Monte Carlo (MCMC) algorithm. The assumption that the true sparsity level satisfies $p \propto n^{-\epsilon}, \epsilon \in (0,1)$ (i.e. that there is more than one signal in the data) is also very likely to be satisfied in practice.

\section{Two Other Data-Adaptive Approaches for Estimating the Sparsity Parameter} \label{DataAdaptiveMethods}
As we demonstrated in Section \ref{NBPTestingNonAdaptive} and \ref{NBPTestingAdaptive}, we can construct hypothesis tests based on the NBP prior which have the Bayes Oracle property by fixing $b \in (1/2, \infty)$ and by choosing $a$ to be comparable to the proportion of true signals. By Proposition \ref{Prop:1}, $a$ also controls the amount of mass around zero. Thus, $a$ can be interpreted as the sparsity parameter, and the ideal choice of $a$ should lie in the range $[1/n, 1]$.

In \cite{VanDerPasSzaboVanDerVaart2017}, the variance rescaling parameter $\tau$ in the horseshoe prior is estimated through restricted marginal maximum likelihood (REML) on the interval $[1/n, 1]$ or by placing a prior on $\tau$ with its support truncated to lie in the interval $[1/n, 1]$. The methods in \cite{VanDerPasSzaboVanDerVaart2017} enable the horseshoe to achieve near-minimax posterior contraction. 

\subsection{A Restricted Marginal Maximum Likelihood (REML) Approach} \label{REML}
Inspired by \cite{VanDerPasSzaboVanDerVaart2017}'s work, we first propose a REML approach to estimating $a$. We take our estimate $\widehat{a}_n^{REML}$ to be the marginal maximum likelihood estimate of $a$ restricted on the interval $[1/n, 1]$. That is, for a fixed $b$, we define $\widehat{a}_n^{REML}$ as 
\begin{equation} \label{remlNBP}
\widehat{a}_n^{REML} = \displaystyle \argmax_{a \in [1/n, 1]} \displaystyle \prod_{i=1}^{n} m(X_i),
\end{equation}
where $m(X_i)$ denotes the marginal density for a single observation $X_i$, i.e.,
\begin{equation} \label{NBPmarginal}
m(X_i) =  \displaystyle \int_{-\infty}^{\infty} \displaystyle \int_{0}^{\infty} \phi(X_i- \theta_i) \phi ( \theta_i / \sigma_i ) \pi(\sigma_i^2) d \sigma_i^2 d \theta_i,
\end{equation} 
and $\pi(\sigma_i^2)$ is the prior for beta prime density given in (\ref{TPBN}). A closed form solution to (\ref{remlNBP}) is unavailable, but it can be computed using numerical integration and optimization. 

We now introduce yet another data-adaptive testing rule under the NBP prior. Suppose that we set $(\widehat{a}_n^{REML}, b)$ as our hyperparameters in the NBP prior (\ref{NBPhier}), where $b \in (1/2, \infty)$. Then our test for the $i$th observation $X_i$ is:
\begin{equation} \label{thresholdingruleREML}
\textrm{Reject } H_{0i} \textrm{ if } \mathbb{E} (1 - \kappa_i | X_i , \widehat{a}_n^{REML}) > \frac{1}{2}.
\end{equation}

\subsection{A Hierarchical Bayes Approach} \label{fullyBayes}

Our results also suggest that if we adopt a fully Bayes approach for estimating the sparsity parameter $a$, the prior on $a$ should have its support truncated to $[1/n, 1]$. Suppose that we fix $b \in (1/2, \infty)$. Our hierarchical model is defined as
\begin{equation} \label{NBPhierFB}
\begin{array}{rl}
\theta_i | \sigma_i^2 & \sim \mathcal{N}(0, \sigma_i^2), i = 1, \ldots, n, \\
\sigma_i^2 &  \sim \beta'(a, b), i = 1, \ldots, n, \\
a & \sim  \pi (a),
\end{array} 
\end{equation} 
where the support of $\pi(a)$ is $[1/n, 1]$. Under (\ref{NBPhierFB}), our thresholding rule now becomes
\begin{equation} \label{thresholdingruleFB}
\textrm{Reject } H_{0i} \textrm{ if } \mathbb{E} (1 - \kappa_i | X_1 , \ldots, X_n ) > \frac{1}{2}.
\end{equation}
Note that because we have placed a prior on $a$, the priors for the $\theta_i$'s are no longer \textit{a priori} independent. Thus, the posterior densities of the $\theta_i$'s (and hence the $\kappa_i$'s) also depend on \textit{all} the data. For our simulation studies, we consider both a uniform prior for $a$, i.e. $a \sim \mathcal{U}(1/n, 1)$, and a standard Cauchy prior for $a$ truncated to $[1/n, 1]$, i.e. $\pi(a) = [\textrm{arctan}(1) -\textrm{arctan}(1/n)]^{-1} (1+a)^{-1} \mathbb{I} \{ 1/n < a < 1 \}$.

In Section \ref{Simulations}, we demonstrate that test procedures (\ref{thresholdingruleREML}) and (\ref{thresholdingruleFB}) both mimic the Bayes Oracle performance in simulations. We hope to provide theoretical justification for (\ref{thresholdingruleREML}) and (\ref{thresholdingruleFB}) in the future.  Following the work of \cite{VanDerPasSzaboVanDerVaart2017} for the horseshoe prior, we believe that useful bounds on (\ref{NBPmarginal}) and on the posterior $\pi(a | X_1, \ldots, X_n)$  under (\ref{NBPhierFB}) can be derived to facilitate theoretical analysis of the NBP prior when $a$ is estimated by REML or by a truncated prior.

\section{Simulation Studies} \label{Simulations}

\subsection{Implementation and Selection of the Hyperparameter $b$}
In the case where $a$ is fixed \textit{a priori} or estimated with a plug-in estimator, the NBP model (\ref{NBPhier}) can be implemented straightforwardly using Gibbs sampling. If a prior is placed on the hyperparameter $a$, as in (\ref{NBPhierFB}), then we use Metropolis-Hastings to update $a$. In the Supplementary Materials, we provide the full details on how to sample from models (\ref{NBPhier}) and (\ref{NBPhierFB}). For the hierarchical Bayes approach, we saw that the MCMC chains mixed well and converged very quickly (in less than 100 iterations), even if we initialized the values to be far away from the truth. This is also illustrated in the Supplementary Materials. We provide the implementation of the NBP model and the multiple testing procedures (\ref{thresholdingruleEB}), (\ref{thresholdingruleREML}), and (\ref{thresholdingruleFB}) in a comprehensive \textsf{R} package, \texttt{NormalBetaPrime}.

In order to use the NBP prior (\ref{NBPhier}) for multiple testing, we recommend setting $b$ to lie in the interval $(1/2, 1/2+\delta]$, for some small $\delta > 0$, and estimating $a$ from the data. We could also estimate $b$ from the data, but our theoretical results in Theorems \ref{Th:8} and Theorem \ref{Theorem:11} demonstrate that asymptotically, the specific choice of $b$ plays no role. As pointed out by \cite{PolsonScott2012}, smaller values of $b$ correspond to heavier tails, with values of $b$ close to 1/2 giving Cauchy-like tails. Based on these considerations, we suggest the default choice of $b = 1/2 + 1/n$, so that the theoretical results established earlier hold, while the tails are still quite heavy. 

\subsection{Simulation Study}
We adopt the simulation framework of \cite{DattaGhosh2013} and \cite{GhoshTangGhoshChakrabarti2016} and fix sparsity levels at $p \in \left\{ 0.01, 0.05, 0.10, 0.15, 0.2, 0.25, 0.3, 0.35, 0.4, 0.45, 0.5 \right\}$, for a total of 11 simulation settings. For sample size $n = 500$ and each $p$, we generate data from the two-groups model (\ref{marginalsspikeandslab}), with $\psi = \sqrt{2 \log n} = 3.53$. We fix $b = 1/2+1/n$ and implement the NBP model with each of the following estimates for $a$:
\begin{enumerate}
\item
NBP-ES: the estimated sparsity (ES) estimator $\widehat{a}^{ES}$, as in (\ref{EBsparsity}), with fixed constants $c_1 = 2, c_2 = 1$.
\item
NBP-REML: the REML estimator $\widehat{a}^{REML}$, as in (\ref{remlNBP}),
\item
NBP-UNIF: a uniform prior on $a$, i.e. $a \sim \mathcal{U}(1/n, 1)$ in (\ref{NBPhierFB}),
\item
NBP-TC: a truncated standard Cauchy prior on $a$, i.e. $\pi(a) = [\textrm{arctan}(1) -\textrm{arctan}(1/n)]^{-1} (1+a)^{-1} \mathbb{I} \{ 1/n < a < 1 \}$ in (\ref{NBPhierFB}). For shorthand notation, we denote this prior as $a \sim \mathcal{TC} (0, 1; 1/n, 1 )$.
\end{enumerate}
 For each of these models, we apply the appropriate thresholding rule: (\ref{thresholdingruleEB}) for NBP-ES, (\ref{thresholdingruleREML}) for NBP-REML, and (\ref{thresholdingruleFB}) for NBP-UNIF and NBP-TC to classify $\theta_i$'s in our model as either signals ($\theta_i \neq 0$) or noise ($\theta_i = 0$). We estimate the average misclassification probability (MP) for these thresholding rules from 100 replicates. 

We compare the performance of our testing procedures to those under the horseshoe (HS), the horseshoe+ (HS+), and the Dirichlet-Laplace (DL) priors.  In the HS and HS+ models, the sparsity parameter $\tau$ is the variance rescaling parameter in (\ref{globallocal}), while in the DL model, the sparsity parameter $\tau$ is the hyperparameter in the Dirichlet prior, $\mathcal{D}(\tau, \ldots, \tau)$. For each of these models, we estimate $\tau$ using either the ES estimator (\ref{EBsparsity}), $\widehat{\tau}^{ES}$, the REML estimator (\ref{remlNBP}), $\widehat{\tau}^{REML}$, or by placing priors on $\tau$, $\tau \sim \mathcal{U}(1/n, 1)$ or $\tau \sim \mathcal{TC} (0, 1; 1/n, 1)$. Implementation for the HS prior is available in the \textsf{R} package \texttt{horseshoe} \footnote{For the method $\tau \sim \mathcal{U}(1/n,1)$, we slightly modify the code in the \texttt{HS.normal.means} function in the \texttt{horseshoe} \textsf{R} package.}, while the methods for the HS+ and DL priors are available in our package \texttt{NormalBetaPrime}. 

 \begin{figure}[t!]
\centering
\includegraphics[width=\textwidth]{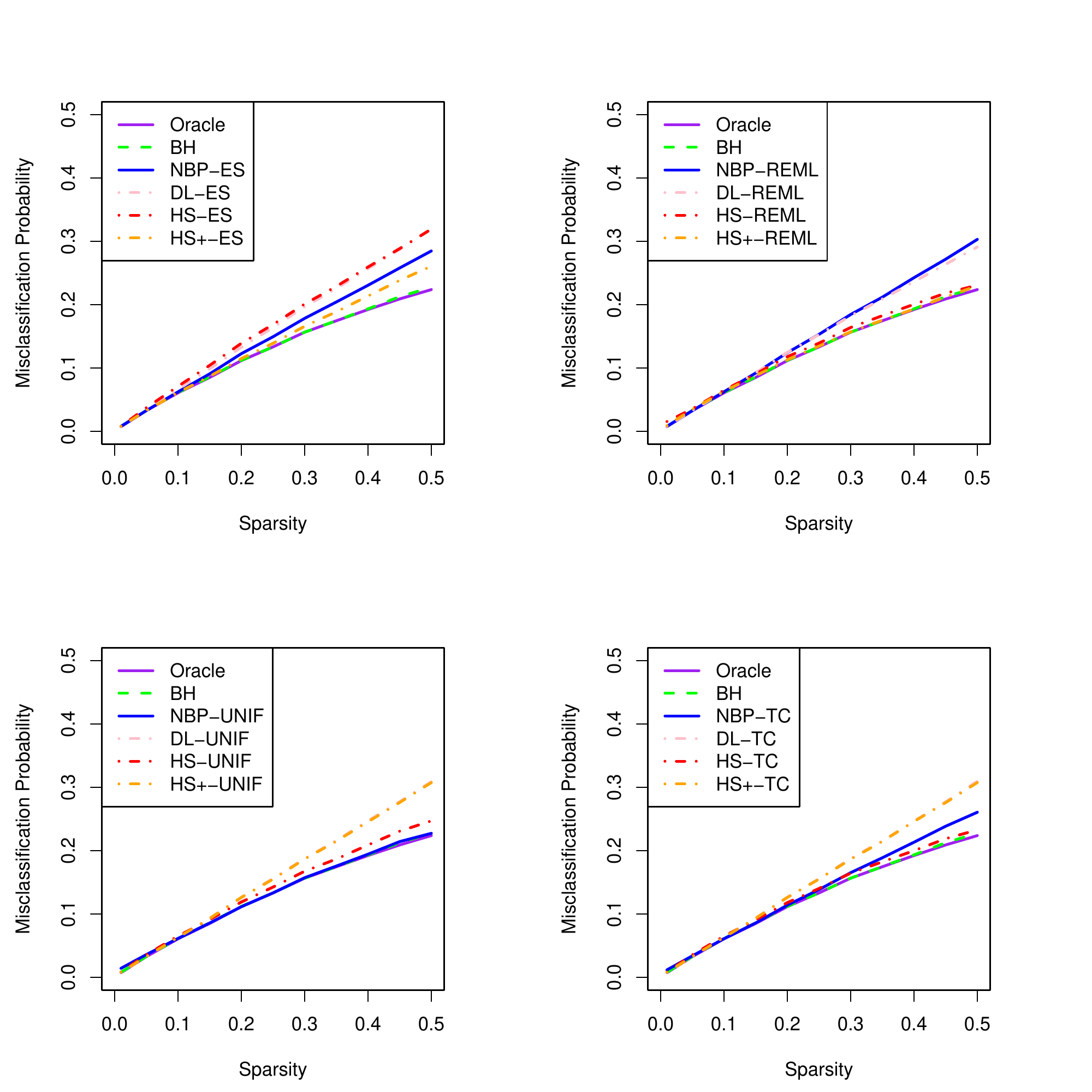} 
\caption{Estimated misclassification probabilities for the NBP, HS, HS+, and DL models when the different estimators for the sparsity parameter are used: estimated sparsity (ES), REML, $\mathcal{U}(1/n, 1)$, and $\mathcal{TC}(0, 1; 1/n, 1)$. The different models are compared to the Bayes Oracle (BO) and Benjamini-Hochberg (BH) procedures.}
\label{fig:2}
\end{figure}

Figure \ref{fig:2} plots the estimated misclassification probabilities (MP) against the true sparsity level $p$ for each of the models, along with the MP's for the Bayes Oracle (BO) and the Benjamini-Hochberg procedure (BH). Recall that the Bayes Oracle rule, defined in (\ref{BayesOracle}), is the decision rule that minimizes the expected number of misclassified signals (\ref{twogroupsrisk}) when $(p, \psi)$ are known. The Bayes Oracle therefore serves as the lower bound to the MP. For the Benjamini-Hochberg rule, we use $\alpha = 1/ \log n = 0.1887$.  \cite{BogdanChakrabartiFrommletGhosh2011} theoretically established for this choice of $\alpha$, the BH procedure is ABOS.

Figure \ref{fig:2} illustrates that all the different models perform very similarly to the Bayes Oracle in sparse situations ($p$ in the range of 0.01 to 0.30), regardless of whether the sparsity parameter is estimated by empirical Bayes or by hierarchical Bayes. Our numerical experiments thus corroborate our theoretical findings that the NBP prior (\ref{NBPhier}) is well-behaved under sparsity. If the ES estimator (\ref{EBsparsity}) is used, then the HS+-ES prior performs the best, with the NBP-ES model following closely behind. If the REML estimator (\ref{remlNBP}) is used, then the NBP-REML model performs well under sparsity but not as well as the other methods in more dense situations. Under the truncated Cauchy prior, the NBP-TC model performs the second best behind HS-TC in dense situations. Finally, under the uniform prior, the NBP-UNIF outperforms all the other models and behaves very similarly to the Bayes Oracle across \textit{all} sparsity levels.  Based on our empirical results, the NBP prior displays the best overall performance when $a$ is endowed with a uniform prior, $a \sim \mathcal{U}(1/n, 1)$.

 \begin{figure}[t!]
\centering
\includegraphics[width=.85\textwidth]{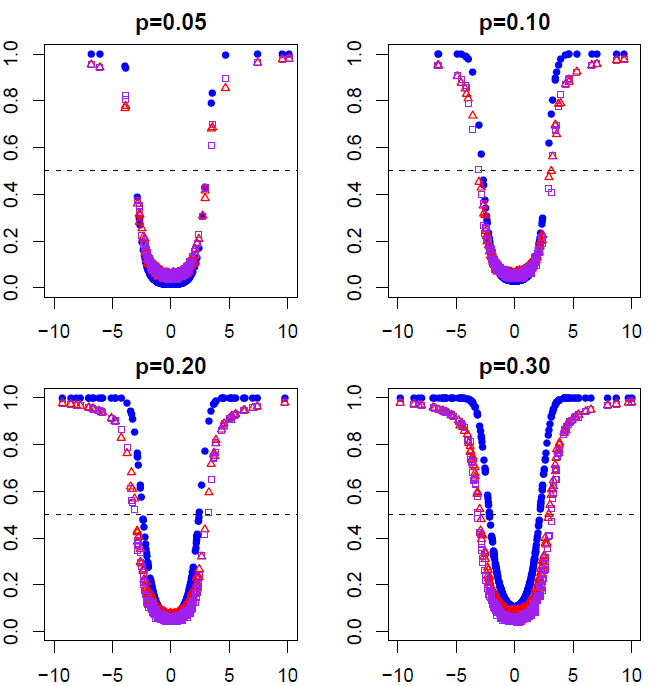} 
\caption{Comparison between the posterior inclusion probabilities $\omega_i(X_i) =  \pi(\nu_i = 1 | X_i)$ and the posterior shrinkage weights $\mathbb{E}(1-\kappa_i | \widehat{a}^{ES}, X_i)$, $\mathbb{E}(1-\kappa_i | \widehat{a}^{REML}, X_i)$. The solid circles are the posterior inclusion probabilities, while the empty triangles correspond to NBP-ES and the empty squares correspond to NBP-REML. }
\label{fig:3}

\end{figure}
 \begin{figure}[h!]
	\centering
	\includegraphics[width=.85\textwidth]{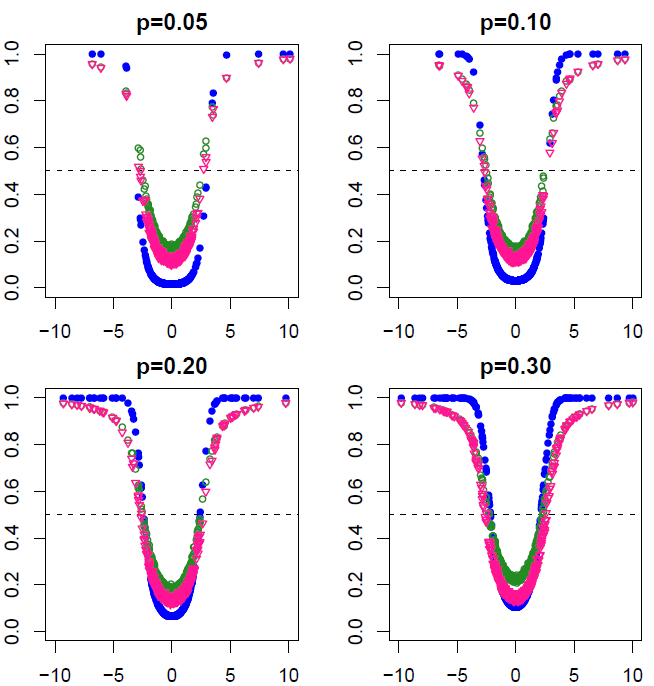} 
	\caption{Comparison between the posterior inclusion probabilities $\omega_i(X_i) =  \pi(\nu_i = 1 | X_i)$ and the posterior shrinkage weights $\mathbb{E}(1-\kappa_i | X_1, \ldots, X_n)$ under the hierarchical Bayes approaches. The solid circles are the posterior inclusion probabilities, while the empty circles correspond to NBP-UNIF and the empty upside-down triangles correspond to NBP-TC. }
	\label{fig:4}
\end{figure}

Taking different choices of $p \in \{0.05, 0.10, 0.20, 0.30 \}$, we plot in Figures \ref{fig:3} and Figure \ref{fig:4} the theoretical posterior inclusion probabilities $\omega_i(X_i) = P(\nu_i = 1 | X_i)$ for the two-groups model (\ref{thetaspikeslab}) given by
\begin{equation*}
\omega_i (X_i) = \pi(\nu_i = 1 | X_i) = \left\{  \left( \frac{1-p}{p} \right) \sqrt{1+ \psi^2} e^{-\frac{X_i^2}{2} \frac{\psi^2}{1+ \psi^2}} + 1 \right\}^{-1},
\end{equation*}
along with the shrinkage weights $\mathbb{E}(1-\kappa_i | \widehat{a}^{ES}, X_i)$, $\mathbb{E}(1-\kappa_i | \widehat{a}^{REML}, X_i)$, and $\mathbb{E}( 1 - \kappa_i | X_1, \ldots, X_n)$ for the NBP-ES, NBP-REML, NBP-UNIF, and NBP-TC models. These plots shows that for small values of the sparsity level $p$, the shrinkage weights are in close proximity to the posterior inclusion probabilities. This offers empirical support for the use of these posterior shrinkage weights  as an approximation to the corresponding posterior inclusion probabilities $\omega_i(X_i)$ in sparse situations.

\begin{figure}[t!]
	\centering
	\includegraphics[width=.64\textwidth]{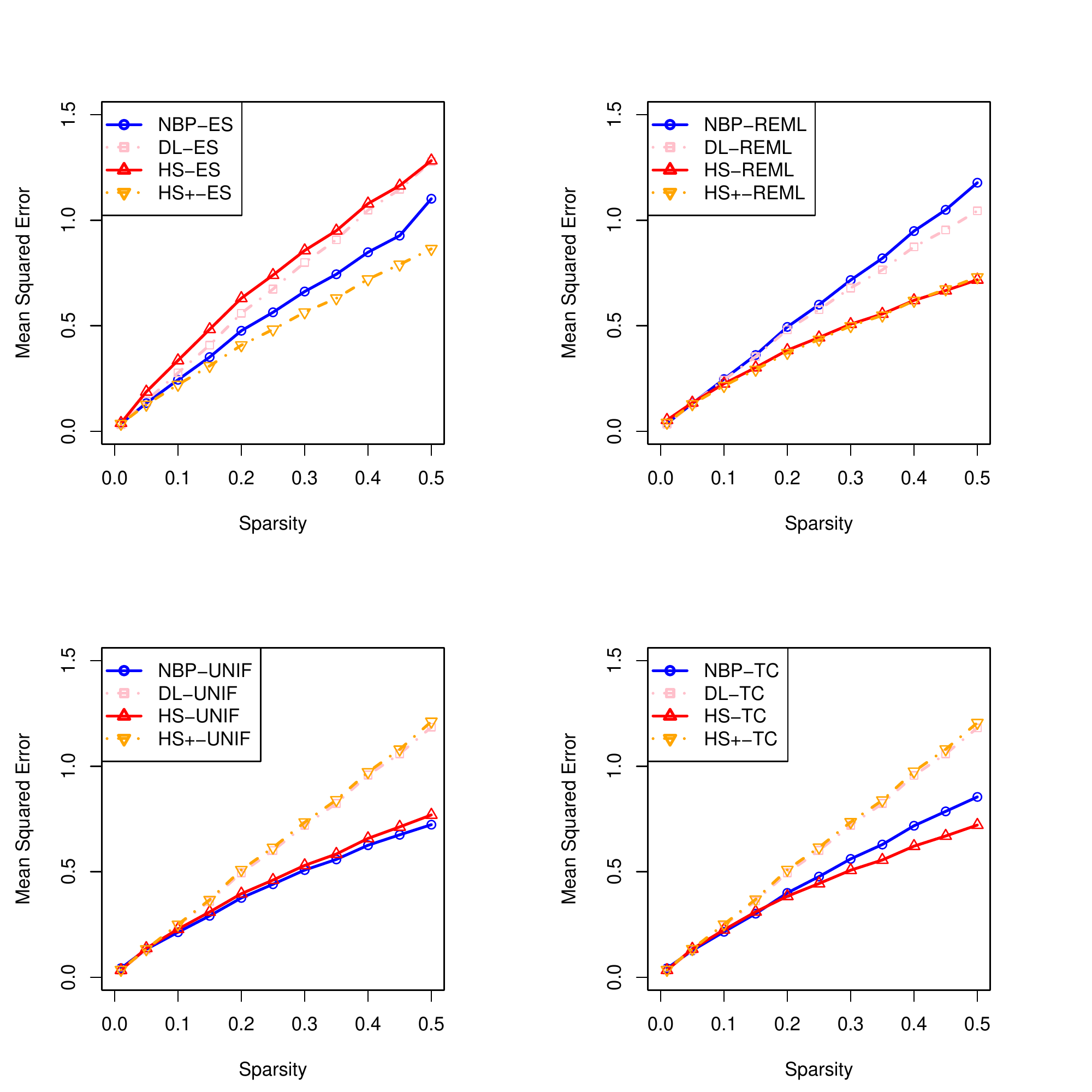} 
	\caption{Mean squared error for the NBP, HS, HS+, and DL models when the different estimators for the sparsity parameter are used: estimated sparsity (ES), REML, $\mathcal{U}(1/n, 1)$, and $\mathcal{TC}(0, 1; 1/n, 1)$. The different models are compared to the Bayes Oracle (BO) and Benjamini-Hochberg (BH) procedures.}
	\label{fig:5}
\end{figure}

\begin{figure}[h!]
	\centering
	\includegraphics[width=.64\textwidth]{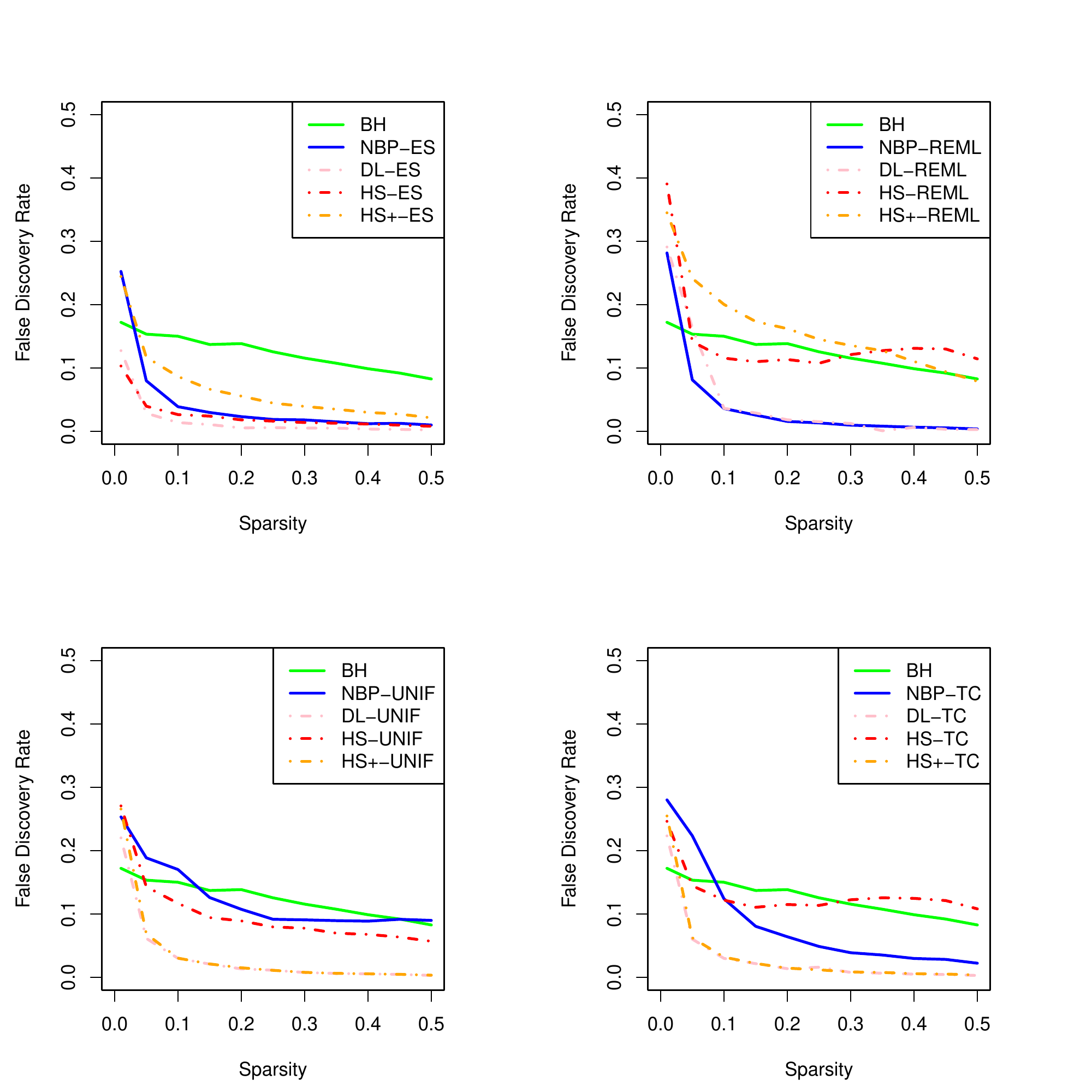} 
	\caption{False discovery rates for the NBP, HS, HS+, and DL models when the different estimators for the sparsity parameter are used: estimated sparsity (ES), REML, $\mathcal{U}(1/n, 1)$, and $\mathcal{TC}(0, 1; 1/n, 1)$. The different models are compared to the Bayes Oracle (BO) and Benjamini-Hochberg (BH) procedures.}
	\label{fig:6}
\end{figure}

\subsection{Estimation and False Discovery Rate (FDR) Control}

While our focus has been on designing a test procedure with the NBP prior which has the Bayes Oracle property, practitioners may also be interested in estimation of the underlying $\bm{\theta}$ or in false discovery rate (FDR) control. Let $\Delta_i$ and $\Omega_i$ be defined as
\begin{equation*}
\begin{array}{c}
\Delta_i \equiv \{ H_{0i} \textrm{ is rejected when } H_{0i} \textrm{ is true} \}, \\
\Omega_i \equiv \{ H_{0i} \textrm{ is rejected when } H_{1i} \textrm{ is true} \}.
\end{array}
\end{equation*}
The (empirical) FDR is defined as
\begin{align} \label{FDRs}
\textrm{FDR} = \frac{ \sum_{i=1}^{n} I (  \Delta_i ) }{ \max \{ 1, \sum_{i=1}^{n} I ( \Delta_i) +  \sum_{i=1}^{n} I( \Omega_i ) \} },
\end{align}
and the goal of frequentist FDR control is to design a test such that $\mathbb{E}(FDR) \leq \alpha$ for a prespecified $\alpha \in (0,1)$. Both estimation and FDR control are separate procedures than the ones considered in this paper and indeed may give conflicting results in terms of `optimality.' For example, \cite{SongCheng2018} proved that any estimator $\bm{\widehat{\theta}}$ which asymptotically has FDR of zero cannot simultaneously obtain the minimax estimation rate. Similarly, a procedure which has the Bayes Oracle property is intended to minimize the \textit{total} expected number of misclassified tests (false positives plus false negatives). It is thus conceivable that a test which has high FDR could still be ABOS, provided that the number of false negatives is very low. Conversely, a test which has very low FDR may still have a very high misclassification probability (MP) if the test results in a high number of false negatives.

Thresholding rules (\ref{thresholdingrule}), (\ref{thresholdingruleEB}), (\ref{thresholdingruleREML}), and (\ref{thresholdingruleFB}) are explicitly designed to minimize the expected \textit{total} number of misclassified tests. Nevertheless, it is worth investigating the estimation quality under the NBP prior and the extent to which these tests control the FDR in our simulation study. To assess the estimation of $\bm{\theta}$ under the NBP prior, we compute the mean squared error (MSE) about the posterior median, i.e. $\textrm{MSE} = (1/n) \sum_{i=1}^{n} ( \widehat{\theta_i}^{\textrm{med}} - \theta_{0i})^2$, for the NBP-ES, NBP-REML, NBP-UNIF, and NBP-TC models, averaged across 100 replications. We compare the performance to the respective DL, HS, and HS+ models. Our results are plotted in Figure \ref{fig:5}. Figure \ref{fig:5} shows that the hierarchical Bayes approaches give the best estimation quality for the NBP prior, with the NBP-UNIF prior outperforming all other methods.

We also plot the FDR (\ref{FDRs}) for all our models in Figure \ref{fig:6}. Figure \ref{fig:6} shows that testing rules (\ref{thresholdingruleEB}), (\ref{thresholdingruleREML}), and (\ref{thresholdingruleFB}) under the NBP, DL, HS, and HS+ priors all control FDR well. For most of the sparsity levels, the FDRs under these shrinkage priors are lower than the FDR under BH. In particular, Figure \ref{fig:6} shows that tests under the plug-in ES and REML estimators give FDR close to zero in dense settings. However, as illustrated in Figures \ref{fig:2} and \ref{fig:5}, NBP-REML has the highest \textit{total} misclassification rate and estimation error, indicating that NBP-REML misses a large proportion of actual signals in dense settings. Based on our numerical studies, we recommend the hierarchical Bayes NBP prior with $a \sim \mathcal{U}(1/n, 1)$ as the `default' implementation for the NBP model. Figure \ref{fig:6} shows that the FDR under the NBP-UNIF model compares favorably to that of the BH procedure. In addition, NBP-UNIF mimics the Bayes Oracle performance the closest and has the lowest estimation error. 

If a more conservative test is desired, then we recommend using the NBP-ES model. The NBP-ES model performs slightly worse than NBP-UNIF in terms of MP and MSE, but it has lower FDR. At present, designing theoretically rigorous tests with frequentist FDR control using scale-mixture shrinkage priors (\ref{scalemixture}) is still an open problem.

\section{Analysis of a Prostate Cancer Data Set} \label{DataAnalysis}
We demonstrate practical application of the NBP prior using a popular prostate cancer data set introduced by \cite{Singhetal2002}. In this data set, there are gene expression values for $n = 6033$ genes for $m = 102$ subjects, with $m_1 = 50$ normal control subjects and $m_2 = 52$ prostate cancer patients. We aim to identify genes that are significantly different between control and cancer patients. We first conduct a two-sample t-test for each gene and then transform the test statistics $(t_1, ..., t_n)$ to z-scores using the inverse normal cumulative distribution function (CDF) transform $\Phi^{-1}(F_{t_{100}}(t_i))$, where $F_{t_{100}}$ denotes the CDF for the Student's t distribution with 100 degrees of freedom. 

With z-scores $(z_1, ..., z_n)$, it is clear that $z_i$ follows a standard normal distribution under the null hypothesis, i.e. $H_{0i}: z_i \sim \mathcal{N}(0,1), i = 1, \ldots, n$. This allows us to implement the NBP  prior on the z-scores to conduct simultaneous testing of $H_{0i}: \theta_i = 0$ vs. $H_{1i}: \theta_i \neq 0$, $i = 1, ..., n,$ to identify genes that are significantly associated with prostate cancer. Additionally, we can also estimate $\bm{\theta} = (\theta_1, ..., \theta_n)$ under model (\ref{X=theta+eps}) using the posterior mean. As argued by \cite{Efron2010}, $| \theta_i |$ can be interpreted as the effect size of the $i$th gene for prostate cancer. \cite{Efron2010} first analysed this model for this particular data set by obtaining empirical Bayes estimates $\widehat{\theta}_i^{Efron}$, $i = 1, ...,n$,  based on the two-groups model (\ref{thetaspikeslab}). In our analysis, we use the posterior means $\widehat{\theta}_i, i = 1, ..., n,$ to estimate the strength of association.

We implement the NBP-UNIF model and use classification rule (\ref{thresholdingruleFB}) to identify significant genes. For comparison, we also fit this model for the DL-UNIF, HS-UNIF, and HS+-UNIF priors, and benchmark these models to the Benjamini-Hochberg (BH) procedure with FDR $\alpha= 0.10$. The NBP-UNIF model selects 165 out of the 6033 genes as significant, in comparison to 60 genes under the BH procedure. All 60 genes selected by the BH procedure are included in the 166 genes determined to be significant by the NBP prior. The HS-UNIF and HS+-UNIF priors select 55 and 38 genes respectively as significant, while the DL prior selects 102 genes as significant.

Table \ref{Table:3} shows the top 10 genes selected by \cite{Efron2010} and their estimated effect size on prostate cancer. We compare \cite{Efron2010}'s empirical Bayes posterior mean estimates with the posterior mean estimates under the NBP-UNIF, DL-UNIF, HS-UNIF, and HS+-UNIF priors. Our results confirm the tail robustness of the NBP prior. All of the scale-mixture shrinkage priors shrink the estimated effect size for significant genes less aggressively than Efron's procedure. On this particular dataset, the NBP model shrinks large signals the least of all the methods considered when the sparsity parameter $a$ is endowed with a prior, $a \sim \mathcal{U}(1/n, 1)$.

Figure \ref{fig:6} illustrated that the hierarchical Bayes model with a uniform prior tends to give higher FDR than the models where the sparsity parameter is estimated with the estimated sparsity (ES) plug-in estimator $\widehat{a}^{ES}$ (\ref{EBsparsity}). In some applications, it may be better to have tests which are more conservative. With this in mind, we repeat our analysis using (\ref{EBsparsity}) as the sparsity parameter and classification rule (\ref{thresholdingruleEB}). In this case, the NBP-ES model selected just 72 genes, including all 60 genes selected by the BH procedure. The DL-ES and HS-ES models were also more conservative, selecting 39 and 4 genes respectively. The HS+-ES model selected 50 genes as significant.

\begin{table}[t!]  
	\centering
	\caption{The z-scores and the effect size estimates for the top 10 genes selected by \cite{Efron2010} by the NBP-UNIF, DL-UNIF, HS-UNIF, and HS+-UNIF models and the two-groups empirical Bayes model by \cite{Efron2010}.}  
	\begin{tabular}{c c c c c c c}
		Gene & z-score & $\widehat{\theta}_i^{NBP}$ & $\widehat{\theta}_i^{DL}$ & $\widehat{\theta}_i^{HS}$ &  $\widehat{\theta}_i^{HS+}$ & $\widehat{\theta}_i^{Efron}$ \\ 
		\hline
		610 & 5.29 & 4.87 & 4.61 & 4.87 & 4.87 & 4.11 \\
		1720 & 4.83 & 4.39 & 4.09 & 4.30 & 4.37 & 3.65 \\ 
		332 & 4.47 & 3.97 & 3.62 & 3.85 & 3.73 & 3.24 \\
		364 & -4.42 & -3.94 & -3.56 & -3.81 & -3.85 & -3.57 \\
		914 & 4.40 & 3.85 & 3.54 & 3.74 & 3.71 & 3.16 \\
		3940 & -4.33 & -3.80 & -3.49 & -3.53 & -3.68 & -3.52 \\  
		4546 & -4.29 & -3.74 & -3.39 & -3.58 & -3.70 & -3.47 \\
		1068 & 4.25 & 3.69 & 3.31 & 3.41 & 3.35 & 2.99 \\
		579 & 4.19 & 3.60 & 3.32 & 3.38 & 3.43 & 2.92 \\
		4331 & -4.14 & -3.54 & -3.14 & -3.23 & -3.19 & -3.30 \\
		\hline
	\end{tabular}
	\label{Table:3}
\end{table}

\section{Concluding Remarks and Future Work}
In this paper, we have studied a scale-mixture shrinkage prior with the beta prime prior (\ref{TPBN}) as the scale parameters for multiple testing under sparsity. By appropriately estimating the sparsity parameter in the normal-beta prime prior and thresholding the posterior shrinkage weight, the NBP can be used to identify signals in sparse normal mean vectors. We have investigated these testing rules within the decision theoretic framework of \cite{BogdanChakrabartiFrommletGhosh2011} and established that the NBP prior has the Bayes Oracle property.

Our results also suggest that scale-mixture shrinkage priors of the most general form (\ref{scalemixture}) can asymptotically attain the exact optimal Bayes risk for multiple testing. In the future, we hope to derive general sufficient conditions under which shrinkage priors (\ref{scalemixture}) are asymptotically Bayes optimal under sparsity. We would also like to provide theoretical justification for the use of the restricted marginal maximum likelihood (REML) and hierarchical Bayes methods presented in Section \ref{DataAdaptiveMethods}. Previously, \cite{VanDerPasSzaboVanDerVaart2017} showed that these adaptive methods lead to near-minimax \textit{estimation} under the horseshoe prior. Our results suggest that these methods are also optimal for multiple testing and that they are appropriate to use for general shrinkage priors besides the horseshoe.

Finally, there has been a rapid growth in the `frequentist Bayes' theory field in recent years, but the literature on frequentist assessment of Bayesian multiple testing procedures is only now emerging. In a recent preprint, \cite{CastilloRoquain2018} show that thresholding the posterior under a point-mass spike-and-slab prior at level $\alpha \in (0, 1)$ asymptotically gives frequentist false discovery rate (FDR) control of level $\alpha$ (up to a multiplicative constant) for sparse normal means. We conjecture that thresholding rules based on the posterior shrinkage weight under the NBP prior (\ref{NBPhier}) -- and under general scale-mixture shrinkage priors (\ref{scalemixture}) -- can also be constructed for frequentist FDR control.

\section*{Acknowledgments}

The authors would like to thank Dr. Anirban Bhattacharya and Dr. Xueying Tang for sharing their codes, which were modified to generate Figures \ref{fig:1}-\ref{fig:6}. We are grateful to two anonymous reviewers and the Associate Editors whose thoughtful comments and suggestions helped to greatly improve this paper.

\section*{Disclosure statement}

The authors declare that we have no conflicts of interest in the authorship or publication of this contribution.

\section*{Supplementary Data}

The Supplementary Materials document contains the proofs for the propositions and theorems in Sections \ref{ConcentrationProperties}, \ref{NBPTestingNonAdaptive}, and \ref{NBPTestingAdaptive}, as well as the technical details for implementing our model. Code to implement our model is available in the \textsf{R} package \texttt{NormalBetaPrime}, which also contains the prostate cancer data set analysed in Section \ref{DataAnalysis}.

\appendix

\section{Proofs for Section 2.1} \label{App:A}

\begin{proof}[Proof of Proposition \ref{Prop:1}]
	As noted by Proposition 1 in \cite{ArmaganClydeDunson2011}, the beta prime density (\ref{TPBN}) can be rewritten as a product of independent gamma and inverse gamma densities. We thus reparametrize model (\ref{NBPhier}) for a single observation $\theta$ as follows:
	\begin{equation} \label{NBPhiersingle}
	\begin{array}{c}
	\theta | \lambda_i \xi \sim \mathcal{N}(0, \lambda \xi),  \\
	\lambda \sim \mathcal{G} (a, 1),  \\
	\xi \sim \mathcal{IG} (b, 1).
	\end{array}
	\end{equation} 
	From (\ref{NBPhiersingle}), we see that the joint distribution of the prior is proportional to
	\begin{align*}
	\pi( \theta, \lambda, \xi ) & \propto  \left( \lambda \xi \right)^{-1/2} \exp \left( - \frac{\theta^2}{2 \lambda \xi} \right)  \lambda^{a - 1} \exp \left( - \lambda \right) \exp \left( - \frac{1}{\xi} \right) \xi^{-b - 1}  \\
	&  = \lambda^{a - 3/2} \exp(- \lambda) \xi^{-b - 3/2}  \exp \left( - \left( \frac{\theta^2}{2 \lambda} + 1 \right) \frac{1}{\xi} \right).
	\end{align*}
	Thus, 
	\begin{align*}
	\pi (\theta, \lambda) & \propto \lambda^{a - 3/2} \exp(- \lambda) \displaystyle \int_{\xi = 0}^{\infty} \xi^{-b - 3/2} \exp \left( - \left( \frac{\theta^2}{2 \lambda} + 1 \right) \frac{1}{\xi} \right) d \xi \\
	& \propto \left( \frac{\theta^2}{2 \lambda} + 1 \right)^{- (b + 1/2)} \lambda^{a - 3/2} e^{- \lambda},
	\end{align*}
	and thus, the marginal density of $\theta$ is proportional to
	\begin{equation}  \label{thetamarginal}
	\pi (\theta) \propto \displaystyle \int_{0}^{\infty} \left( \frac{\theta^2}{2 \lambda} + 1 \right)^{- (b + 1/2)} \lambda^{a - 3/2} e^{- \lambda} d \lambda.
	\end{equation} 
	As $|\theta| \rightarrow 0$, the expression in (\ref{thetamarginal}) is bounded below by
	\begin{equation} \label{thetalowerbound}
	C \displaystyle \int_{0}^{\infty} \lambda^{a - 3/2} e^{- \lambda}  d \lambda,
	\end{equation} 
	where $C$ is a constant that depends on $a$ and $b$. The integral expression in (\ref{thetalowerbound}) clearly diverges to $\infty$ for any $0 < a \leq 1/2$. Therefore, (\ref{thetamarginal}) diverges to infinity as $| \theta | \rightarrow 0$, by the monotone convergence theorem.
\end{proof}

\begin{proof}[Proof of Theorem \ref{Th:1}]
	From (\ref{kappadensity}), the posterior distribution of $\kappa_i$ under $\textrm{NBP}_n$ is proportional to
	\begin{equation} \label{NBPnkappa}
	\pi(\kappa_i | X_i) \propto \exp \left( - \frac{\kappa_i X_i^2}{2} \right) \kappa_i^{b - 1/2} (1 - \kappa_i)^{a_n - 1}, \hspace{.3cm} \kappa_i \in (0, 1).
	\end{equation} 
	Hence,
	\begin{align*}
	\mathbb{E} ( 1 - \kappa_i | X_i ) & = \frac{ \displaystyle \int_{0}^{1} \kappa_i^{b - 1/2} (1 - \kappa_i)^{a_n} \exp \left( - \frac{\kappa_i X_i^2}{2} \right) d \kappa_i }{ \displaystyle \int_{0}^{1} \kappa_i^{b - 1/2} (1 - \kappa_i)^{a_n - 1} \exp \left( - \frac{\kappa_i X_i^2}{2} \right) d \kappa_i }   \\
	& \leq \frac{ e^{X_i^2/2} \displaystyle \int_{0}^{1} \kappa_i^{b - 1/2} (1 - \kappa_i)^{a_n} d \kappa_i }{ \displaystyle \int_{0}^{1} \kappa_i^{b - 1/2} ( 1 - \kappa_i)^{a_n - 1} d \kappa_i } \\
	& = e^{X_i^2/2} \frac{ \Gamma (a_n + 1 ) \Gamma (b + 1/2) }{\Gamma (a_n + b + 3/2) } \times \frac{ \Gamma (a_n + b + 1/2)}{ \Gamma (a_n) \Gamma (b + 1/2)  } \\
	& = e^{X_i^2/2} \left( \frac{a_n }{a_n + b + 1/2} \right).
	\end{align*}
\end{proof}

\begin{proof}[Proof of Theorem \ref{Th:2}]
	Note that since $b \in ( \frac{1}{2}, \infty)$, $\kappa_i^{b - 1/2}$ is increasing in $\kappa_i$ on $(0, 1)$. Additionally, since $a_n \in (0, 1 )$, $(1 - \kappa_i)^{a_n - 1}$ is increasing in $\kappa_i$ on $(0, 1)$. Using these facts, we have
	\begin{align*}
	\Pr ( \kappa_i < \epsilon | X_i ) & \leq \frac{ \displaystyle \int_{0}^{\epsilon} \exp \left( - \frac{\kappa_i X_i^2}{2} \right) \kappa_i^{b - 1/2} (1 - \kappa_i)^{a_n - 1} d \kappa_i}{ \displaystyle \int_{\epsilon}^{1} \exp \left( - \frac{\kappa_i X_i^2}{2} \right) \kappa_i^{b - 1/2} (1 - \kappa_i)^{a_n - 1} d \kappa_i }  
	\end{align*}
	\begin{align*}
	& \leq \frac{ e^{X_i^2/2} \displaystyle \int_{0}^{\epsilon} \kappa_i^{b - 1/2} (1 - \kappa_i)^{a_n - 1} d \kappa_i}{ \displaystyle \int_{\epsilon}^{1} \kappa_i^{b - 1/2} (1 - \kappa_i)^{a_n - 1} d \kappa_i }  \\
	& \leq \frac{ e^{X_i^2/2} (1 - \epsilon)^{a_n - 1} \displaystyle \int_{0}^{\epsilon} \kappa_i^{b - 1/2} d \kappa_i }{ \epsilon^{b - 1/2} \displaystyle \int_{\epsilon}^{1} (1 - \kappa_i)^{a_n - 1} d \kappa_i }  \\
	& = \frac{ e^{X_i^2/2}  (1 - \epsilon)^{a_n - 1} \left( b + \frac{1}{2} \right)^{-1} \epsilon^{b+1/2} }{ a_n^{-1} \epsilon^{b - 1/2}  (1 - \epsilon)^{a} }    \\
	& =  e^{X_i^2/2} \frac{ a_n \epsilon }{ \left( b + 1/2 \right) (1 - \epsilon)}.
	\end{align*}
\end{proof}

\begin{proof}[Proof of Theorem \ref{Th:3}]
	Letting $C$ denote the normalizing constant, we have
	\begin{align*}
	\displaystyle\int_{0}^{\eta} \pi (\kappa_i | X_i) d \kappa_i & = C \displaystyle \int_{0}^{\eta} \exp \left( - \frac{\kappa_i X_i^2}{2} \right) \kappa_i^{b - 1/2 } (1 - \kappa_i)^{a_n - 1} d \kappa_i \\
	& \geq C \displaystyle \int_{0}^{\eta \delta} \exp \left( -  \frac{\kappa_i X_i^2}{2} \right) \kappa_i^{b - 1/2} (1 - \kappa_i)^{a_n - 1} d \kappa_i \\
	& \geq C \exp \left( - \frac{\eta \delta}{2} X_i^2 \right) \displaystyle \int_{0}^{\eta \delta} \kappa_i^{b - 1/2} d \kappa_i \\
	& = C \exp \left( -\frac{\eta \delta}{2} X_i^2 \right) \left( b + \frac{1}{2} \right)^{-1} ( \eta \delta )^{b+\frac{1}{2}}. \numbereqn \label{pikappa0n}
	\end{align*}
	Also, since $b \in (\frac{1}{2}, \infty)$, $\kappa_i^{b - 1/2}$ is increasing in $\kappa_i$ on $(0, 1)$.
	\begin{align*}
	\displaystyle \int_{\eta}^{1} \pi (\kappa_i | X_i ) d \kappa_i & = C \displaystyle \int_{\eta}^{1} \exp \left( - \frac{\kappa_i X_i^2}{2} \right) \kappa_i^{b - 1/2} (1 - \kappa_i)^{a_n - 1} d \kappa_i \\
	& \leq C \exp \left( - \frac{\eta X_i^2}{2} \right) \displaystyle \int_{\eta}^{1} \kappa_i^{b - 1/2} (1 - \kappa_i)^{a_n - 1} d \kappa_i \\
	& \leq C \exp \left( - \frac{\eta X_i^2}{2} \right) \displaystyle \int_{\eta}^{1} (1 - \kappa_i)^{a_n - 1} d \kappa_i \\
	& = C \exp \left( - \frac{ \eta X_i^2}{2} \right) a_n^{-1} (1 - \eta)^{a_n}. \numbereqn \label{pikappan1}
	\end{align*}
	Combining (\ref{pikappa0n}) and (\ref{pikappan1}), we have
	\begin{align*}
	\Pr ( \kappa_i > \eta | X_i) & \leq \frac{ \displaystyle \int_{\eta}^{1} \pi (\kappa_i | X_i) d \kappa_i }{ \displaystyle \int_{0}^{\eta} \pi(\kappa_i | X_i ) d \kappa_i } \leq  \frac{ \left( b + \frac{1}{2} \right) (1 - \eta)^{a_n} }{ a_n ( \eta \delta)^{b + \frac{1}{2}} } \exp \left( - \frac{\eta (1 - \delta)}{2} X_i^2 \right).
	\end{align*}
\end{proof}

\section{Proofs for Section 3.3} \label{App:B}
Our proof methods follow those of \cite{DattaGhosh2013, GhoshTangGhoshChakrabarti2016, GhoshChakrabarti2017}, except our arguments rely on control of the sequence of hyperparameters $a_n$, rather than on specifying a rate or an estimate for a rescaling parameter $\tau$, as in the class of priors (\ref{globallocal}). Moreover, we make explicit use of Theorems 2.1-2.3 in the present manuscript in our proofs.

\begin{proof}[Proof of Theorem \ref{Theorem:4}]
	By Theorem \ref{Th:1}, the event $\left\{ \mathbb{E} ( 1 - \kappa_i | X_i) > \frac{1}{2} \right\}$ implies the event
	\begin{align*}
	& \left\{ e^{X_i^2/2} \left( \frac{a_n }{a_n + b + 1/2} \right) > \frac{1}{2} \right\} \hspace{.2cm} \Leftrightarrow \hspace{.2cm} \left\{ X_i^2 > 2 \log \left( \frac{a_n + b + 1/2}{2 a_n} \right) \right\}.
	\end{align*}
	Therefore, noting that under $H_{0i}$, $X_i \sim \mathcal{N}(0, 1)$ and using Mill's ratio, i.e. $P( |Z| > x) \leq \frac{2 \phi (x)}{x}$, we have 
	\begin{align*} \label{typeIupperbound}
	t_{1i} & \leq \Pr \left( X_i^2 > 2 \log \left( \frac{a_n + b + 1/2}{2 a_n} \right) \bigg| H_{0i} \textrm{ is true } \right) \\
	& = \Pr \left( |Z| > \sqrt{2 \log \left( \frac{a_n + b + 1/2}{2 a_n } \right)} \right) \\
	& \leq \frac{ 2 \phi \left(  \sqrt{2 \log \left( \frac{a_n + b + 1/2}{2 a_n } \right)} \right)}{  \sqrt{2 \log \left( \frac{a_n + b + 1/2}{2 a_n } \right)}} \\
	& = \frac{2 \sqrt{2} a_n }{\sqrt{\pi} (a_n + b + 1/2)} \left[ \log \left( \frac{a_n + b + 1/2}{2 a_n } \right) \right]^{-1/2}. \numbereqn
	\end{align*}
\end{proof}
\begin{proof}[Proof of Theorem \ref{Theorem:5}]
	By definition, the probability of a Type I error for the $i$th decision is given by
	\begin{equation*}
	t_{1i} = \Pr \left[ \mathbb{E} (1 - \kappa_i | X_i) > \frac{1}{2} \bigg| H_{0i} \textrm{ is true} \right].
	\end{equation*}
	Fix $\xi \in (0, 1/2)$. By Theorem \ref{Th:3},
	\begin{equation*}
	\mathbb{E} (\kappa_i | X_i) \leq \xi + \frac{ \left( b + \frac{1}{2} \right) (1 - \xi)^{a_n} }{ a_n ( \xi \delta)^{b + \frac{1}{2}} } \exp \left( - \frac{\xi (1 - \delta)}{2} X_i^2 \right).
	\end{equation*}
	Hence,
	\begin{equation*}
	\left\{ \mathbb{E}(1 - \kappa_i | X_i) > \frac{1}{2} \right\} \supseteq \left\{ \frac{ \left( b + \frac{1}{2} \right) (1 - \xi)^{a_n} }{ a_n ( \xi \delta)^{b + \frac{1}{2}} } \exp \left( - \frac{\xi (1 - \delta)}{2} X_i^2 \right) < \frac{1}{2} - \xi \right\}. 
	\end{equation*}
	Thus, using the definition of $t_{1i}$ and noting that under $H_{0i}$, $X_i \sim \mathcal{N}(0,1)$, as $n \rightarrow \infty$,
	
	\begin{align*}
	t_{1i} & \geq \Pr \left(  \frac{ \left( b + \frac{1}{2} \right) (1 - \xi)^{a_n} }{ a_n ( \xi \delta)^{b + \frac{1}{2}} }  \exp \left( - \frac{\xi (1 - \delta)}{2} X_i^2 \right) < \frac{1}{2} - \xi \hspace{.1cm} \bigg| \hspace{.1cm} H_{0i} \textrm{ is true} \right) \\
	& = \Pr \left( X_i^2 > \frac{2}{\xi (1 - \delta)} \left[ \log \left( \frac{\left(b + \frac{1}{2} \right) \left( 1 - \xi \right)^{a_n} }{a_n (\xi \delta)^{b + \frac{1}{2}} \left( \frac{1}{2} - \xi \right) } \right)  \right] \right) \\
	&  = 2 \Pr \left( Z > \sqrt{ \frac{2}{\xi (1 - \delta)} \left[ \log \left( \frac{\left(b + \frac{1}{2} \right) \left( 1 - \xi \right)^{a_n} }{a_n (\xi \delta)^{b + \frac{1}{2}} \left( \frac{1}{2} - \xi \right) } \right)  \right] } \right) \\
	& = 2 \left(1 - \Phi \left( \sqrt{ \frac{2}{\xi (1 - \delta)} \left[ \log \left( \frac{\left(b + \frac{1}{2} \right) \left( 1 - \xi \right)^{a_n} }{a_n (\xi \delta)^{b + \frac{1}{2}} \left( \frac{1}{2} - \xi \right) } \right)  \right]  } \right) \right),
	\end{align*}
	where for the second to last inequality, we used the fact that $a_n \rightarrow 0$ as $n \rightarrow \infty$, and the fact that both $\xi$ and $\xi \delta \in (0, \frac{1}{2})$, so that the $\log ( \cdot )$ term in final equality is greater than zero for sufficiently large $n$.
\end{proof}

Before proving the asymptotic upper bound on Type II error in Theorem \ref{Theorem:6}, we first prove a lemma that bounds the quantity $\mathbb{E}(\kappa_i | X_i)$ from above for a single $X_i$.    
\begin{lemma}
	\label{Lemma:1}
	Suppose we observe $\bm{X} \sim \mathcal{N} ( \bm{\theta}, \bm{I}_n)$ and we place an $\textrm{NBP}_n$ prior (\ref{NBPn}) on $\bm{\theta}$, with and $a_n \in (0, 1)$ where $a_n \rightarrow 0$ as $n \rightarrow \infty$, and fixed $b \in (1/2, \infty)$. Fix constants $\eta \in (0, 1)$, $\delta \in (0, 1)$, and $d > 2$. Then for a single observation $x$ and any $n$, the posterior shrinkage coefficient $\mathbb{E}(\kappa | x)$ can be bounded above by a measurable, non-negative real-valued function $h_n(x)$, given by
	\begin{equation} \label{hnx}
	h_n(x) =  \left\{ 
	\begin{array}{ll}
	C_{n, \eta} \left[ x^2 \int_{0}^{x^2} t^{b-1/2} e^{-t/2} dt \right]^{-1} + \frac{ \left( b + \frac{3}{2} \right)^{-1} (1 - \eta)^{a_n}}{a_n (\eta \delta)^{b+3/2}} \exp \left( - \frac{\eta(1 - \delta)}{2} x^2 \right), & \textrm{if } |x| > 0, \\
	1, & \textrm{if } x = 0,\\ 
	\end{array}
	\right. 
	\end{equation}
	where $C_{n, \eta} = (1 - \eta)^{a_n - 1} \Gamma \left( b+\frac{3}{2} \right) 2^{b+3/2}$. For any $\rho > \frac{2}{\eta (1-\delta)}$, $h_n(x)$ also satisfies
	\begin{equation} \label{hnxlimsup}
	\displaystyle \lim_{n \rightarrow \infty} \displaystyle \sup_{|x| > \sqrt{ \rho \log \left( \frac{1}{a_n} \right)}} h_n(x) = 0.
	\end{equation}
\end{lemma}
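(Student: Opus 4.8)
The plan is to work directly with the posterior density of $\kappa$ in (\ref{NBPnkappa}) and write $\mathbb{E}(\kappa \mid x) = N/D$, where $N = \int_0^1 \kappa^{b+1/2}(1-\kappa)^{a_n-1}e^{-\kappa x^2/2}\,d\kappa$ and $D = \int_0^1 \kappa^{b-1/2}(1-\kappa)^{a_n-1}e^{-\kappa x^2/2}\,d\kappa$. I would split the numerator at the threshold $\eta$, writing $N = N_1 + N_2$ with $N_1 = \int_0^\eta$ and $N_2 = \int_\eta^1$, and bound $N_1/D$ and $N_2/D$ separately, using a different convenient lower bound for $D$ in each case. The two resulting terms will match the two summands of $h_n(x)$; the case $x=0$ is trivial since $\kappa \in (0,1)$ forces $\mathbb{E}(\kappa\mid 0) \le 1$.

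For the first term, on $(0,\eta)$ I would bound $(1-\kappa)^{a_n-1} \le (1-\eta)^{a_n-1}$ and extend the range to $(0,\infty)$, so that $N_1 \le (1-\eta)^{a_n-1}\int_0^\infty \kappa^{b+1/2}e^{-\kappa x^2/2}\,d\kappa = (1-\eta)^{a_n-1}\Gamma(b+\tfrac{3}{2})2^{b+3/2}(x^2)^{-(b+3/2)} = C_{n,\eta}(x^2)^{-(b+3/2)}$. For the denominator I would discard $(1-\kappa)^{a_n-1}\ge 1$ over all of $(0,1)$ and substitute $t = \kappa x^2$ to obtain $D \ge (x^2)^{-(b+1/2)}\int_0^{x^2} t^{b-1/2}e^{-t/2}\,dt$. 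Dividing yields $N_1/D \le C_{n,\eta}\big[x^2\int_0^{x^2}t^{b-1/2}e^{-t/2}\,dt\big]^{-1}$, the first summand of $h_n$.

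For the second term I would essentially re-run the proof of Theorem \ref{Th:3} with $b$ replaced by $b+1$. Bounding $\kappa^{b+1/2}\le 1$ and $e^{-\kappa x^2/2}\le e^{-\eta x^2/2}$ on $(\eta,1)$ gives $N_2 \le e^{-\eta x^2/2}(1-\eta)^{a_n}/a_n$, while restricting $D$ to $(0,\eta\delta)$ and using $\kappa^{b-1/2}\ge \kappa^{b+1/2}$, $(1-\kappa)^{a_n-1}\ge 1$, and $e^{-\kappa x^2/2}\ge e^{-\eta\delta x^2/2}$ gives $D \ge e^{-\eta\delta x^2/2}(\eta\delta)^{b+3/2}/(b+\tfrac{3}{2})$. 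The ratio then collapses to a constant multiple of $(1-\eta)^{a_n}a_n^{-1}(\eta\delta)^{-(b+3/2)}\exp(-\tfrac{\eta(1-\delta)}{2} x^2)$, matching the second summand of $h_n$ (the exponent combines as $-\tfrac{\eta}{2} + \tfrac{\eta\delta}{2} = -\tfrac{\eta(1-\delta)}{2}$).

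Finally, for (\ref{hnxlimsup}) I would note that each summand of $h_n$ is decreasing in $x^2$, so its supremum over $\{x^2 > \rho\log(1/a_n)\}$ is attained at the endpoint $x^2 = \rho\log(1/a_n)$. For the first summand, as $n\to\infty$ the incomplete-gamma factor $\int_0^{\rho\log(1/a_n)}t^{b-1/2}e^{-t/2}\,dt \to \Gamma(b+\tfrac{1}{2})2^{b+1/2}$ and $C_{n,\eta}\to(1-\eta)^{-1}\Gamma(b+\tfrac{3}{2})2^{b+3/2}$ stay bounded, while the factor $x^2 = \rho\log(1/a_n)\to\infty$ drives the term to $0$. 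For the second summand, setting $x^2 = \rho\log(1/a_n)$ turns the exponential into $a_n^{\eta(1-\delta)\rho/2}$, so the term is a bounded constant times $a_n^{\eta(1-\delta)\rho/2-1}$. The crux — and the only place the hypothesis $\rho > 2/(\eta(1-\delta))$ enters — is that this exponent is strictly positive, so the term vanishes as $a_n\to 0$. I expect this balancing of exponential decay against the $a_n^{-1}$ blow-up to be the main obstacle, since it is exactly what pins down the admissible range of $\rho$ and, downstream, the sharp constant in the Type II error bound of Theorem \ref{Theorem:6}.
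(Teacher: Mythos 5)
Your proof is correct and follows essentially the same route as the paper's: the same split of $\mathbb{E}(\kappa \mid x)$ at the threshold $\eta$, the same bounds on each piece (including the deliberately weakened denominator bound via $\kappa^{b-1/2} \geq \kappa^{b+1/2}$ to produce the $(\eta\delta)^{b+3/2}$ factor), and the same endpoint evaluation at $x^2 = \rho \log(1/a_n)$ to verify (\ref{hnxlimsup}), with the condition $\rho > 2/(\eta(1-\delta))$ entering exactly where you say it does. One remark: your second summand carries the factor $\left(b+\tfrac{3}{2}\right)$ multiplied rather than $\left(b+\tfrac{3}{2}\right)^{-1}$ as written in the lemma's statement of $h_n$ --- this matches what the paper's own proof derives, so the inverse in the statement appears to be a typo and is immaterial to the limit conclusion.
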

\begin{proof}[Proof of Lemma \ref{Lemma:1}]
	We first focus on the case where $|x| > 0$. Fix $\eta \in (0, 1), \delta \in (0, 1)$, and observe that
	\begin{equation} \label{EVkappa}
	\mathbb{E}(\kappa | x) = \mathbb{E} (\kappa 1 \{ \kappa < \eta \} | x) + \mathbb{E} ( \kappa 1 \{ \kappa \geq \eta \} | x).
	\end{equation}
	We consider the two terms in (\ref{EVkappa}) separately. To bound the first term, we have from (9) and the fact that $(1-\kappa)^{a_n-1}$ is increasing in $\kappa \in (0,1)$ for $a_n \in (0,1)$ that
	\begin{align*}
	\mathbb{E}(\kappa 1\{ \kappa < \eta \})  & = \frac{\int_{0}^{\eta} \kappa \cdot \kappa^{b-1/2} (1-\kappa)^{a_n-1} e^{-\kappa x^2/2} d \kappa}{\int_{0}^{1} \kappa^{b-1/2} (1-\kappa)^{a_n-1} e^{-\kappa x^2/2} d \kappa }  \\
	& \leq  (1-\eta)^{a_n-1} \frac{\int_{0}^{\eta} \kappa^{b+1/2} e^{-\kappa x^2/2} d \kappa}{\int_{0}^{1} \kappa^{b-1/2} e^{-\kappa x^2/2} d \kappa}  
	\end{align*}
	\begin{align*}
	& = (1-\eta)^{a_n -1} \frac{1}{x^2} \frac{\int_{0}^{\eta x^2} t^{b+1/2} e^{-t/2} dt}{\int_{0}^{x^2} t^{b-1/2} e^{-t/2} dt} \\
	& \leq (1 - \eta)^{a_n-1} \frac{1}{x^2} \frac{\int_{0}^{\infty} t^{b+1/2} e^{-t/2} dt}{\int_{0}^{x^2} t^{b-1/2} e^{-t/2} dt} \\
	& =  C(n) \left[ x^2  \int_{0}^{x^2} t^{b-1/2} e^{-t/2} dt \right]^{-1}  \\
	& := h_1 (x) \hspace{.3cm} (\textrm{say}), \numbereqn \label{h1x}
	\end{align*}
	where we use a change of variables $t = \kappa x^2$ in the second equality, and $C(n) = (1-\eta)^{a_n-1} \Gamma \left( b+\frac{3}{2} \right) 2^{b+3/2} $.
	
	To bound the second term in (\ref{EVkappa}) from above, we follow the same steps as the proof of Theorem \ref{Th:3}, except we replace $\kappa_i^{b-1/2}$ in the numerators of the integrands with $\kappa_i^{b+1/2}$ to obtain an upper bound,
	\begin{equation} \label{h2x}
	\frac{ \left( b+ \frac{3}{2} \right) (1 - \eta)^{a_n}}{a_n (\eta \delta)^{b+3/2}} \exp \left( - \frac{\eta(1 - \delta)}{2} x^2 \right) := h_2(x) \hspace{.3cm} (\textrm{say}).
	\end{equation}
	
	Combining (\ref{EVkappa})-(\ref{h2x}), we set $h_n(x) = h_1(x) + h_2(x)$ for any $|x| > 0$, and we easily see that for any $x \neq 0$ and fixed $n$, $\mathbb{E}(\kappa | x) \leq h_n (x)$. On the other hand, if $x = 0$, then
	\begin{align*} 
	\mathbb{E} (\kappa | x) & = \frac{\int_{0}^{1} \kappa^{b+1/2} (1-\kappa)^{a_n-1} d \kappa}{\int_{0}^{1} \kappa^{b-1/2} (1-\kappa)^{a_n-1} d \kappa } = \frac{b+1/2}{a_n + b + 1/2} \leq 1, 
	\end{align*}
	so we can set $h_n(x) = 1$ when $x = 0$. Therefore, $\mathbb{E}(\kappa | x)$ is bounded above by the function $h_n(x)$ in (\ref{hnx}) for all $x \in \mathbb{R}$. 
	
	Now, observe from (\ref{h1x}) that for fixed $n$, $h_1(x)$ is strictly decreasing in $|x|$. Therefore, 
	\begin{equation*}
	\displaystyle \sup_{|x| > \sqrt{\rho \log \left( \frac{1}{a_n} \right)}} h_1(x) \leq C_{n, \eta} \left[ \rho \log \left( \frac{1}{a_n} \right) \displaystyle \int_{0}^{\rho \log \left( \frac{1}{a_n} \right)} t^{b-1/2} e^{-t/2} dt \right]^{-1},
	\end{equation*}
	for any fixed $n$ and $\rho > 0$. Since $a_n \rightarrow 0$ as $n \rightarrow \infty$, this implies that
	\begin{equation} \label{limsuph1x}
	\displaystyle \lim_{n \rightarrow \infty} \displaystyle \sup_{|x| > \sqrt{\rho \log \left( \frac{1}{a_n} \right)}} h_1(x) = 0.
	\end{equation}
	Letting $K \equiv K(b, \eta, \delta) =  \left( b + \frac{3}{2} \right) / (\eta \delta)^{b+3/2}$, we have from (\ref{h2x}) and the fact that $0 < a_n < 1$ for all $n$ and $a_n \rightarrow 0$ as $n \rightarrow 0$ that
	
	\begin{align*}
	\displaystyle \lim_{n \rightarrow \infty} h_2 \left( \sqrt{\rho \log \left( \frac{1}{a_n} \right)} \right) & = K \displaystyle \lim_{n \rightarrow \infty} \frac{(1 - \eta)^{a_n}}{a_n} \sqrt{ \rho \log \left( \frac{1}{a_n} \right) } e^{-\frac{\eta(1-\delta)}{2} \rho \log \left( \frac{1}{a_n} \right)} 
	\end{align*}
	\begin{align*}
	& \leq K \sqrt{\rho} \displaystyle \lim_{n \rightarrow \infty} \frac{1}{a_n} \sqrt{ \log \left( \frac{1}{a_n} \right)} e^{-\frac{\eta (1 - \delta)}{2} \log(a_n^{-\rho})}  \\
	& = K \sqrt{\rho} \displaystyle \lim_{n \rightarrow \infty} \sqrt{ \log \left( \frac{1}{a_n} \right) } (a_n)^{\frac{\eta(1-\delta)}{2} \left( \rho - \frac{2}{\eta (1-\delta)} \right)} \\
	& = \left\{ \begin{array}{ll}0 & \textrm{if } \rho > \frac{2}{\eta(1-\delta)}, \\ \infty & \textrm{otherwise,} \end{array} \right. \end{align*}
	from which it follows that
	\begin{equation} \label{limsuph2x}
	\displaystyle \lim_{n \rightarrow \infty} \displaystyle \sup_{|x| > \sqrt{\rho \log \left( \frac{1}{a_n} \right)}} h_2(x) = \left\{ \begin{array}{ll}0 & \textrm{if } \rho > \frac{2}{\eta(1-\delta)}, \\ \infty & \textrm{otherwise.} \end{array} \right. 
	\end{equation} 
	Combining (\ref{limsuph1x}) and (\ref{limsuph2x}), it is clear that
	\begin{equation*}
	\displaystyle \lim_{n \rightarrow \infty} \displaystyle \sup_{|x| > \sqrt{\rho \log \left( \frac{1}{a_n} \right)}} h_n(x) = \left\{ \begin{array}{ll}0 & \textrm{if } \rho > \frac{2}{\eta(1-\delta)}, \\ \infty & \textrm{otherwise,} \end{array} \right. 
	\end{equation*}
	that is, $h_n(x)$ satisfies (\ref{hnxlimsup}).
\end{proof}
\begin{proof}[Proof of Theorem \ref{Theorem:6}]
	Fix $\eta \in (0,1)$ and $\delta \in (0, 1)$, and choose any $\rho > \frac{2}{\eta(1-\delta)}$. By Lemma \ref{Lemma:1}, we have that the event $\{ \mathbb{E}( \kappa_i | X_i ) \geq 0.5 \}$ implies $\{ h_n(X_i) \geq 0.5 \}$, where $h_n(x)$ is as defined in (\ref{hnx}). Therefore,
	\begin{align*} \label{TypeIIbound1}
	t_{2i} & = \Pr [ \mathbb{E} (\kappa_i | X_i) \geq 0.5 \big| H_{1i} \textrm{ is true} ] \\
	& \leq \Pr ( h_n(X_i) \geq 0.5 \big| H_{1i} \textrm{ is true} ) \\
	& = \Pr \left( h_n(X_i) \geq 0.5, |X_i| > \sqrt{ \rho \log \left( \frac{1}{a_n} \right)} \hspace{.1cm} \bigg| H_{1i} \textrm{ is true} \right) + \\
	& \qquad \Pr \left( h_n(X_i) \geq 0.5, |X_i| \leq \sqrt{ \rho \log \left( \frac{1}{a_n} \right)} \hspace{.1cm} \bigg| H_{1i} \textrm{ is true} \right) \\
	& \leq \Pr \left( h_n(X_i) \geq 0.5 \bigg| |X_i| > \sqrt{ \rho \log \left( \frac{1}{a_n} \right)}, H_{1i} \textrm{ is true} \right) + \\
	& \qquad \Pr \left( |X_i| \leq \sqrt{ \rho \log \left( \frac{1}{a_n} \right)} \hspace{.1cm} \bigg| H_{1i} \textrm{ is true} \right) \numbereqn
	\end{align*}
	We will consider the two terms in (\ref{TypeIIbound1}) separately. Recall that $h_n(x)$ from (\ref{hnx}) is a measurable and nonnegative. We also see that (\ref{hnx}) is decreasing in $|x|$, and thus,
	
	\begin{equation*}
	\mathbb{E} \left( h_n(X_i) \bigg| |X_i| > \sqrt{ \rho \log \left( \frac{1}{a_n} \right)}, H_{1i} \textrm{ is true} \right) 
	\end{equation*}
	is well-defined and bounded for sufficiently large $n$. By Markov's inequality, we have for sufficiently large $n$,
	\begin{align*}
	& \Pr \left( h_n(X_i) \geq 0.5 \bigg| |X_i| > \sqrt{ \rho \log \left( \frac{1}{a_n} \right)}, H_{1i} \textrm{ is true} \right) \\
	& \qquad \leq 2 \mathbb{E} \left( h_n(X_i) \bigg| |X_i| > \sqrt{ \rho \log \left( \frac{1}{a_n} \right)}, H_{1i} \textrm{ is true} \right) \\
	& \qquad \leq 2 \left( \displaystyle \sup_{|X_i| > \sqrt{\rho \log \left( \frac{1}{a_n} \right)}} h_n(X_i) \right),
	\end{align*}
	from which it follows, by Lemma \ref{Lemma:1}, that
	\begin{equation} \label{TypeIIbound2}
	\displaystyle \lim_{n \rightarrow \infty} \Pr \left( h_n(X_i) \geq 0.5 \bigg| |X_i| > \sqrt{ \rho \log \left( \frac{1}{a_n} \right)}, H_{1i} \textrm{ is true} \right) = 0.
	\end{equation}
	By assumption, $\lim_{n \rightarrow \infty} \frac{a_n}{p_n} \in (0, \infty)$. Thus, by the third and fourth conditions of Assumption \ref{Assumption:1}, we have $\lim_{n \rightarrow \infty} \log(\frac{1}{a_n}) / \psi_n^2 = C/2$. To see this, note that $\frac{1-p_n}{p_n} \sim \frac{1}{p_n}$. Combining this with the third and fourth conditions implies that $\frac{2 \log(1/p_n)}{\psi_n^2} \rightarrow C$, and then we use our assumption that $a_n/p_n \rightarrow d, d > 0$. Thus, for all sufficiently large $n$,
	\begin{align*} \label{TypeIIbound3}
	& \Pr \left( |X_i| \leq \sqrt{ \rho \log \left( \frac{1}{a_n} \right)} \hspace{.1cm} \bigg| H_{1i} \textrm{ is true} \right) = \Pr \left( |Z| \leq \sqrt{\rho} \sqrt{ \frac{\log \left( \frac{1}{a_n} \right) }{1+\psi_n^2} }\right) \\
	& \qquad = \Pr \left( |Z| \leq \sqrt{\rho} \sqrt{ \frac{\log \left( \frac{1}{a_n} \right) }{\psi_n^2} } (1+o(1)) \right) \textrm{ as } n \rightarrow \infty \\
	& \qquad = \Pr \left( |Z| \leq \sqrt{ \frac{\rho C}{2}} ( 1+o(1)) \right) \textrm{ as } n \rightarrow \infty \\
	& \qquad = \left[ 2 \Phi \left(\sqrt{ \frac{\rho C}{2}}\right) - 1 \right] \left(1+o(1) \right) \textrm{ as } n \rightarrow \infty. \numbereqn
	\end{align*}
	Combining (\ref{TypeIIbound1})-(\ref{TypeIIbound3}), we thus have
	\begin{equation*}
	t_{2i} \leq \left[ 2 \Phi \left(\sqrt{ \frac{\rho C}{2}}\right) - 1 \right](1+o(1)),
	\end{equation*}
	as $n \rightarrow \infty$.
\end{proof}

\begin{proof}[Proof of Theorem \ref{Theorem:7}]
	By definition, the probability of a Type II error for the $i$th decision is given by 
	\begin{equation*}
	t_{2i} = P \left( \mathbb{E}(1 - \kappa_i) \leq \frac{1}{2} \bigg| H_{1i} \textrm{ is true } \right). 
	\end{equation*}
	For any $n$, we have by Theorem \ref{Th:1} that
	\begin{equation*}
	\left\{ e^{X_i^2/2} \left( \frac{a_n }{a_n + b + 1/2}  \right) \leq \frac{1}{2} \right\} \subseteq \left\{ \mathbb{E}(1 - \kappa_i | X_i) \leq \frac{1}{2} \right\}.
	\end{equation*}
	Therefore,
	\begin{align*}
	t_{2i} & = \Pr \left( \mathbb{E} (1 - \kappa_i | X_i) \leq \frac{1}{2} \bigg| H_{1i} \textrm{ is true} \right) \\
	& \geq \Pr \left( e^{X_i^2/2} \left( \frac{a_n }{a_n + b + 1/2}  \right) \leq \frac{1}{2} \bigg| H_{1i} \textrm{ is true} \right)  \\
	& = \Pr \left( X_i^2 \leq 2 \log \left( \frac{a_n + b + 1/2}{2 a_n }  \right) \bigg|  H_{1i} \textrm{ is true} \right). \numbereqn \label{TypeIIlowerbound1}
	\end{align*}
	Since $X_i \sim N(0, 1 + \psi^2)$ under $H_{1i}$, we have by the second condition in Assumption \ref{Assumption:1} that $\displaystyle \lim_{n \rightarrow \infty} \frac{\psi_n^2}{1 + \psi_n^2} \rightarrow 1$. From (\ref{TypeIIlowerbound1}), we have for sufficiently large $n$,
	
	\begin{align*}
	t_{2i} & \geq \Pr \left( |Z| \leq \sqrt{ \frac{2 \log \left( \frac{a_n + b + 1/2 }{2 a_n}  \right)}{\psi^2}} (1 + o(1)) \right) \textrm{ as } n \rightarrow \infty \\
	& \geq \Pr \left( |Z| \leq \sqrt{ \frac{ \log \left( \frac{1}{2 a_n}  \right)}{\psi^2}} (1 + o(1)) \right) \textrm{ as } n \rightarrow \infty \\
	& = \Pr ( |Z| \leq \sqrt{C} ) (1 + o(1)) \textrm{ as } n \rightarrow \infty \\
	& = 2 [ \Phi (\sqrt{C}) - 1] (1 + o(1)) \textrm{ as } n \rightarrow \infty, 
	\end{align*}
	where we used the assumption that $\lim_{n \rightarrow \infty} \frac{a_n}{p_n} \in (0, \infty)$ and Assumption \ref{Assumption:1}.
\end{proof}

\begin{proof}[Proof of Theorem \ref{Th:8}]
	Fix $\eta \in (0,1), \delta \in (0, 1),$ and $\xi \in (0, 1/2)$, and choose $\rho > \frac{2}{\eta (1 - \delta)}$. Since the $\kappa_i$'s, $i = 1, ..., n$ are \textit{a posteriori} independent, the Type I and Type II error probabilities $t_{1i}$ and $t_{2i}$ are the same for every test $i, i = 1, ..., n$. By Theorems \ref{Theorem:4} and \ref{Theorem:5}, we have for large enough $n$,
	
	\begin{align*}
	& 2 \left(1 - \Phi \left( \sqrt{ \frac{2}{\xi (1 - \delta)} \left[ \log \left( \frac{\left(b + \frac{1}{2} \right) \left( 1 - \xi \right)^{a_n} }{a_n (\xi \delta)^{b + \frac{1}{2}} \left( \frac{1}{2} - \xi \right) } \right)  \right]  } \right) \right) \leq t_{1i} \\
	& \leq \frac{2 \sqrt{2} a_n}{\sqrt{\pi} (a_n + b + 1/2)} \left[ \log \left( \frac{a_n + b + 1/2}{2 a_n } \right) \right]^{-1/2}.
	\end{align*}
	Taking the limit as $n \rightarrow \infty$ of all the terms above, we have 
	\begin{equation} \label{typeIlim}
	\displaystyle \lim_{n \rightarrow \infty} t_{1i} = 0
	\end{equation}
	for the $i$th test, under the assumptions on the hyperparameters $a_n$ and $b$.
	
	By Theorems \ref{Theorem:4} and \ref{Theorem:5}, we also have 
	\begin{equation} \label{typeIIlim}
	\left[ 2 \Phi(\sqrt{C}) - 1 \right](1+ o(1)) \leq t_{2i} \leq \left[ 2 \Phi \left( \sqrt{\frac{ \rho C}{2}} \right) - 1 \right] (1 + o(1)).
	\end{equation}
	Therefore, we have by (\ref{typeIlim}) and (\ref{typeIIlim}) that as $n \rightarrow \infty$, the asymptotic risk (\ref{twogroupsrisk}) of the classification rule (\ref{thresholdingrule}), $R_{NBP}$, can be bounded as follows:
	\begin{equation} \label{NBPriskbounds}
	np(2 \Phi (\sqrt{C}) - 1)(1+ o(1)) \leq R_{NBP} \leq np \left[ 2 \Phi \left( \sqrt{ \frac{\rho C}{ 2 }} \right) - 1 \right](1+o(1).
	\end{equation}
	Therefore, from (16) and (\ref{NBPriskbounds}), we have as $n \rightarrow \infty$,
	\begin{equation} \label{riskratiobounds}
	1 \leq \lim \inf_{n \rightarrow \infty} \frac{R_{NBP}}{R_{Opt}^{BO}} \leq \lim \sup_{n \rightarrow \infty} \frac{R_{NBP}}{R_{Opt}^{BO}} \leq \frac{2 \Phi \left( \sqrt{\frac{ \rho C }{2}} \right) - 1}{2 \Phi (\sqrt{C}) - 1}.
	\end{equation} 
	The testing rule (18) does not depend on how $\eta \in (0,1), \delta \in (0,1)$ and $\rho > 2/(\eta ( 1-\delta))$ are chosen, and thus, the ratio $R_{NBP}/R_{Opt}^{BO}$ is also free of these constants. By continuity of $\Phi$, we can take the infimum over all $\rho$'s in the rightmost term in (\ref{riskratiobounds}), and the inequalities remain valid. The infimum of $\rho$ is obviously 2, and so from (\ref{riskratiobounds}), we have
	\begin{equation} \label{riskratiobounds2}
	1 \leq \lim \inf_{n \rightarrow \infty} \frac{R_{NBP}}{R_{Opt}^{BO}} \leq \lim \sup_{n \rightarrow \infty} \frac{R_{NBP}}{R_{Opt}^{BO}} \leq \frac{2 \Phi ( \sqrt{C}) - 1}{2 \Phi (\sqrt{C}) - 1}.
	\end{equation} 
	We clearly see from (\ref{riskratiobounds2}) that classification rule (\ref{thresholdingrule}) under the $\textrm{NBP}_n$ prior (10) is ABOS, i.e.
	\begin{equation*}
	\frac{ R_{NBP}}{R_{Opt}^{BO}} \rightarrow 1 \textrm{ as } n \rightarrow \infty.
	\end{equation*}
\end{proof}

\section{Proofs for Section 3.4} \label{App:C}
Our proofs in this section follow from the proof of Theorem 10 of \cite{GhoshTangGhoshChakrabarti2016}, as well as Theorems \ref{Th:2} through Theorem \ref{Th:3} established in this paper.

\begin{proof}[Proof of Theorem \ref{Theorem:9}]
	Under thresholding rule (\ref{thresholdingruleEB}), the probability of a Type I error for the $i$th decision is given by
	\begin{align*}\label{EBtypeIerror}
	& \widetilde{t}_{1i} = \Pr \left( \mathbb{E}(1 - \kappa_i | X_i, \widehat{a}_n^{ES} ) > \frac{1}{2} \bigg| H_{0i} \textrm{ is true} \right) \\
	& \qquad = \Pr \left( \mathbb{E}(1 - \kappa_i | X_i, \widehat{a}_n^{ES} ) > \frac{1}{2}, \widehat{a}_n^{ES} \leq 2 \alpha_n \bigg| H_{0i} \textrm{ is true} \right)  \\
	& \qquad \qquad + \Pr \left( \mathbb{E}(1 - \kappa_i | X_i, \widehat{a}_n^{ES} ) > \frac{1}{2}, \widehat{a}_n^{ES} > 2 \alpha_n \bigg| H_{0i} \textrm{ is true} \right), \numbereqn
	\end{align*}
	where $\alpha_n$ is defined in (\ref{alphan}). To obtain an upper bound on $\widetilde{t}_{1i}$, we consider the two terms in (\ref{EBtypeIerror}) separately. By Theorem 2.1, we see that $\mathbb{E}(1 - \kappa_i | X_i)$ is nondecreasing in $a_n$. Thus, $\mathbb{E}(1 - \kappa_i | X_i, \widehat{a}_n^{ES} ) \leq \mathbb{E}(1 - \kappa_i | X_i, 2 \alpha_n)$ whenever $\widehat{a}_n^{ES} \leq 2 \alpha_n$. We have
	\begin{align*} \label{EBtypeIupperbound1}
	& \Pr \left( \mathbb{E}(1 - \kappa_i | X_i, \widehat{a}_n^{ES} ) > \frac{1}{2}, \widehat{a}_n^{ES} \leq 2 \alpha_n \bigg| H_{0i} \textrm{ is true} \right) \\
	& \qquad \leq \Pr \left( \mathbb{E} (1-\kappa_i | X_i, 2 \alpha_n ) > \frac{1}{2} \bigg| H_{0i} \textrm{ is true} \right) \\
	& \qquad \leq \frac{4 \alpha_n}{\sqrt{\pi} (  2 \alpha_n + b + 1/2)} \left[ \log \left( \frac{ 2 \alpha_n + b + 1/2}{4 \alpha_n} \right) \right]^{-1/2} (1 + o(1)). \numbereqn
	\end{align*}
	For the second term in (\ref{EBtypeIerror}), we have
	\begin{align*} \label{EBtypeIupperbound2}
	& \Pr \left( \mathbb{E}(1 - \kappa_i | X_i, \widehat{a}_n^{ES} ) > \frac{1}{2}, \widehat{a}_n^{ES} > 2 \alpha_n \bigg| H_{0i} \textrm{ is true} \right) \\
	& \qquad \leq \Pr ( \widehat{a}_n^{ES} > 2 \alpha_n | H_{0i} \textrm{ is true} ) \\
	& \qquad \leq \frac{1/\sqrt{\pi}}{n^{c_1/2} \sqrt{\log n}} + e^{-(2 \log 2 - 1) \alpha_n (1+o(1))}, \numbereqn
	\end{align*}
	where the last inequality follows from the proof of Theorem 10 in \cite{GhoshTangGhoshChakrabarti2016}. Thus, since $\alpha_n \sim 2 \beta p_n$ by (\ref{alpharelation}), we combine (\ref{EBtypeIupperbound1}) and (\ref{EBtypeIupperbound2}) to obtain an upper bound on $\widetilde{t}_{1i}$,
	\begin{align*}
	& \widetilde{t}_{1i} \leq \frac{4 \alpha_n}{\sqrt{\pi} ( 2 \alpha_n + b + 1/2)} \left[ \log \left( \frac{ 2 \alpha_n + b + 1/2}{4 \alpha_n} \right) \right]^{-1/2} (1 + o(1))  \\
	& \qquad \qquad + \frac{1 / \sqrt{\pi}}{n^{c_1/2} \sqrt{\log n}} + e^{-2 (2 \log 2 - 1) \beta np_n(1+o(1))}.
	\end{align*}
	To obtain the lower bound, note that by (\ref{EBtypeIerror}), we immediately have
	\begin{equation} \label{EBtypeIlowerbound}
	\widetilde{t}_{1i} \geq \Pr \left( \mathbb{E}(1- \kappa_i | X_i, \widehat{a}_n^{ES}) > \frac{1}{2}, \widehat{a}_n^{ES} \leq 2 \alpha_n \bigg| H_{0i} \textrm{ is true} \right).
	\end{equation}
	By the proof for Theorem \ref{Theorem:5}, we have that for fixed $\xi \in (0, 1/2)$ and $\delta \in (0, 1)$, 
	\begin{equation} \label{upperboundEkappa} 
	\mathbb{E} (\kappa_i | X_i) \leq \xi + \frac{ \left( b + \frac{1}{2} \right) (1 - \xi)^{a_n} }{ a_n ( \xi \delta)^{b + \frac{1}{2}} } \exp \left( - \frac{\xi (1 - \delta)}{2} X_i^2 \right).
	\end{equation}
	The right-hand side of (\ref{upperboundEkappa}) is a nonincreasing function in $a_n$. Thus, whenever $\widehat{a}_n^{ES} \leq 2 \alpha_n$, we have
	\begin{align*}
	\left\{ \mathbb{E}(1 -\kappa_i | X_i, \widehat{a}_n ) > \frac{1}{2}, \widehat{a}_n^{ES} \leq 2 \alpha_n \right\} \supseteq \left\{ \frac{\left(b + \frac{1}{2} \right) (1 - \xi)^{2 \alpha_n}}{2 \alpha_n (\xi \delta)^{b+1/2}} \exp \left( - \frac{\xi(1-\delta)}{2} X_i^2 \right) < \frac{1}{2} - \xi \right\},
	\end{align*}
	from which, by Theorem 3.2 and (\ref{EBtypeIlowerbound}), we automatically attain the lower bound,
	\begin{equation*}
	\widetilde{t}_{1i} \geq 1 - \Phi \left( \sqrt{ \frac{2}{\xi (1-\delta)} \left[ \log \left( \frac{ \left(b+\frac{1}{2} \right) (1 - \xi)^{2 \alpha_n}}{2 \alpha_n ( \xi \delta)^{b+1/2}} \right) \right] } \right) ( 1 + o(1)) \textrm{ as } n \rightarrow \infty.
	\end{equation*}
\end{proof}
\begin{proof}[Proof of Theorem \ref{Theorem:10}]
	Fix $\gamma \in  (0, 1/c_2)$. Decompose the probability of a Type II error under (\ref{thresholdingruleEB}) as 
	\begin{align*} \label{EBtypeIIerror}
	& \widetilde{t}_{2i} = \Pr \left( \mathbb{E}(\kappa_i | X_i, \widehat{a}_n^{ES} ) \geq \frac{1}{2} \bigg| H_{1i} \textrm{ is true} \right) 
	\end{align*}
	\begin{align*}
	& =  \Pr \left( \mathbb{E}(\kappa_i | X_i, \widehat{a}_n^{ES} ) \geq \frac{1}{2}, \widehat{a}_n^{ES} \leq \gamma \alpha_n \bigg| H_{1i} \textrm{ is true} \right) + \\ 
	& \qquad \qquad \Pr \left( \mathbb{E}(\kappa_i | X_i, \widehat{a}_n^{ES} ) \geq \frac{1}{2}, \widehat{a}_n^{ES} > \gamma \alpha_n \bigg| H_{1i} \textrm{ is true} \right). 
	\numbereqn
	\end{align*}
	
	To obtain an upper bound on $\widetilde{t}_{2i}$, we consider the two terms in (\ref{EBtypeIIerror}) separately. For the first term in (\ref{EBtypeIIerror}), we have
	\begin{align*} \label{EBtypeIIupperbound1}
	& \Pr \left( \mathbb{E}(\kappa_i | X_i, \widehat{a}_n^{ES} ) \geq \frac{1}{2}, \widehat{a}_n^{ES} \leq \gamma \alpha_n \bigg| H_{1i} \textrm{ is true} \right) \\
	& \qquad \leq \Pr ( \widehat{a}_n^{ES} \leq \gamma \alpha_n | H_{1i} \textrm{ is true} ) \\
	& \qquad \leq \frac{(1- c_2 \gamma)^{-2} (1- \alpha_n)}{n \alpha_n} ( 1+ o(1))  \\
	& \qquad \rightarrow 0 \textrm{ as } n \rightarrow \infty, \numbereqn
	\end{align*}
	where the last two steps follow from the proof of Theorem 11 in \cite{GhoshTangGhoshChakrabarti2016}. 
	
	We now focus on bounding the second term in (\ref{EBtypeIIerror}). By Theorem \ref{Th:1}, $\mathbb{E}(1-\kappa_i | X_i)$ is nondecreasing in $a_n$, and so $\mathbb{E}(\kappa_i | X_i)$ is nonincreasing  in $a_n$. Thus, for sufficiently large $n$, we have  $\mathbb{E}(\kappa_i | X_i, \widehat{a}_n^{ES} ) \leq \mathbb{E} ( \kappa_i | X_i, \gamma \alpha_n )$ for $\widehat{a}_n^{ES} > \gamma \alpha_n$ and that  
	\begin{equation*}
	\left\{ \mathbb{E} (\kappa_i | X_i, \gamma \alpha_n) \geq 0.5 | H_{1i} \textrm{ is true} \right\} \subseteq \left\{ h_n(X_i, \gamma \alpha_n) \geq 0.5 | H_{1i} \textrm{ is true} \right\},
	\end{equation*}
	where $h_n(X_i, \gamma \alpha_n)$ denotes that we substitute $a_n$ with $\gamma \alpha_n$ in (\ref{hnx}). Using the same arguments as in the proof of Theorem \ref{Theorem:6}, along with the fact that $\alpha_n \sim 2 \beta p_n$ (by (\ref{alpharelation})), we obtain as an upper bound for the second term in (\ref{EBtypeIIerror}),
	\begin{align*} \label{EBtypeIIupperbound2}
	& \Pr \left( \mathbb{E}(\kappa_i | X_i, \widehat{a}_n^{ES} ) \geq \frac{1}{2}, \widehat{a}_n^{ES} > \gamma \alpha_n \bigg| H_{1i} \textrm{ is true}  \right) \\
	& \qquad \leq \Pr\left( \mathbb{E}(\kappa_i | X_i, \gamma \alpha_n) \geq \frac{1}{2} | H_{1i} \textrm{ is true} \right) \\
	& \qquad \leq \left[2 \Phi \left( \sqrt{\frac{\rho C}{2}} \right) - 1 \right](1+ o(1)) \textrm{ as } n \rightarrow \infty.  \numbereqn
	\end{align*}
	From (\ref{EBtypeIIerror})-(\ref{EBtypeIIupperbound2}), an upper bound on the probability of Type II error under (\ref{thresholdingruleEB}) is
	\begin{equation*}
	\widetilde{t}_{2i} \leq \left[ 2 \Phi \left( \sqrt{\frac{\rho C}{2}} \right) - 1 \right] (1+o(1)) \textrm{ as } n \rightarrow \infty.
	\end{equation*}
	To obtain a lower bound on $\widetilde{t}_{2i}$, we note that by (\ref{EBtypeIIerror}), 
	\begin{align*} \label{EBtypeIIlower1}
	& \widetilde{t}_{2i} \geq \Pr \left( \mathbb{E}(\kappa_i | X_i, \widehat{a}_n^{ES}) \geq \frac{1}{2}, \widehat{a}_n^{ES} > \gamma \alpha_n \bigg| H_{1i} \textrm{ is true} \right)  \\
	& \geq \Pr \left( \mathbb{E}( \kappa_i | X_i, \widehat{a}_n^{ES} ) \geq \frac{1}{2} \right) - \Pr( \widehat{a}_n^{ES} \leq \gamma \alpha_n)  \\
	& \rightarrow 2 \left[  \Phi(\sqrt{C}) - 1 \right] (1+o(1)) - o(1),
	\end{align*}
	where we use the result in Theorem \ref{Theorem:7}, the fact that $\mathbb{E}(\kappa_i | X_i )$ is nondecreasing in $a_n$, and the fact that $\Pr( \widehat{a}_n^{ES} \leq \gamma \alpha_n)$ is asymptotically vanishing (by (\ref{EBtypeIIupperbound1})) to arrive at the final inequality.
\end{proof}

\section{Sampling from the NBP Model} \label{App:D}
\subsection{No Prior on the Hyperparameter $a$}
Suppose that there is no prior placed on the hyperparameter $a$. By the reparametrization of $\sigma_i^2 = \lambda_i \xi_i, i = 1, \ldots, n,$  given in (\ref{NBPhiersingle}) and letting $\kappa_i = 1/(1+ \lambda_i \xi_i)$, the full conditional distributions for (\ref{NBPhier}) are 
\begin{equation} \label{nbpconditionalposteriors}
\begin{array}{ccl}
\theta_i \hspace{.1cm} \big| \hspace{.1cm} \textrm{rest} & \sim & \mathcal{N} \big( (1-\kappa_i)X_i, 1-\kappa_i \big), i = 1, ..., n, \\
\lambda_i \hspace{.1cm} \big| \hspace{.1cm} \textrm{rest} & \sim & \mathcal{GIG} \left( \frac{\theta_i^2}{\xi_i}, 2, a - \frac{1}{2} \right), i = 1, ..., n, \\
\xi_i \hspace{.1cm} \big| \hspace{.1cm} \textrm{rest} & \sim & \mathcal{IG} \left( b+\frac{1}{2}, \frac{\theta_i^2}{2 \lambda_i} + 1 \right), i = 1, ..., n,
\end{array}
\end{equation}
where $\mathcal{GIG}(c, d, p)$ denotes a generalized inverse Gaussian (giG) density with $f(x; c, d, p) \propto x^{(p-1)} e^{-(c/x+dx)/2}$. Therefore, the NBP model (\ref{NBPhier}) -- and consequently, thresholding rules (\ref{thresholdingrule}) and (\ref{thresholdingruleEB}) -- can be implemented straightforwardly with Gibbs sampling utilizing the full conditionals in (\ref{nbpconditionalposteriors}). Moreover, since the full conditionals are independent, we can update the $\theta_i$'s, $\lambda_i$'s, and $\xi_i$'s efficiently using block updates.

\subsection{Uniform Prior on the Hyperparameter $a$}
In the case that a prior is placed on $a$, the steps for sampling from the full conditionals for $(\theta_i, \lambda_i, \xi_i), i = 1, \ldots, n$ from (\ref{nbpconditionalposteriors}) remain the same. However, we now also need to sample from the full conditional of $a$. When $a \sim \mathcal{U}(1/n, 1)$, the full conditional for $a$ is proportional to
\begin{equation} \label{conditionalaunif}
\pi(a | \textrm{rest} ) \propto \left( \frac{\Gamma(a+b)}{\Gamma(a)} \right)^{n} \left( \displaystyle \prod_{i=1}^{n} (\sigma_i^2)^{a-1} (1+\sigma_i^2)^{-a-b} \right) \mathbb{I} \{ 1/n \leq a \leq 1 \},
\end{equation}
where $\sigma_i^2 = \lambda_i \xi_i$. Using (\ref{conditionalaunif}), we update $a$ using a Metropolis-Hastings random walk. For our proposal distribution, we use a truncated normal density on the interval $[1/n, 1]$. If $a$ is the current value of the chain, a new value $a^*$ will be generated from the proposal distribution,
\begin{equation} \label{truncatednormaldist}
q(a^* | a) =\frac{\phi \left( \frac{a^* - a}{\omega} \right)}{\omega \left( \Phi \left( \frac{1-a}{\omega} \right) -\Phi \left( \frac{1/n-a}{\omega} \right) \right)} \mathbb{I} \{ 1/n \leq a^* \leq 1 \},
\end{equation}
where $\phi(\cdot)$ and $\Phi(\cdot)$ denote the standard normal probability density function (pdf) and cumulative distribution function (cdf) respectively, and $\omega > 0$ is a scaling parameter that is properly calibrated to control the Metropolis-Hastings acceptance rate.  Given a candidate state $a^*$ drawn from $q(a^* | a)$, it then follows from (\ref{conditionalaunif}) and (\ref{truncatednormaldist}) that $a^*$ is accepted with probability,

\begin{equation*}
\min \left\{ 1, \left( \frac{\Gamma (a^*+b) \Gamma(a)}{\Gamma(a+b) \Gamma (a^*)} \right)^{n} \displaystyle \left( \prod_{i=1}^{n} \left( \frac{\sigma_i^2}{1+\sigma_i^2} \right)^{a^*-a} \right) \left[ \frac{\Phi \left( \frac{1-a}{\omega} \right) -\Phi \left( \frac{1/n-a}{\omega} \right)}{\Phi \left( \frac{1-a^*}{\omega} \right) -\Phi \left( \frac{1/n-a^*}{\omega} \right)} \right] \right\},
\end{equation*}
where $\sigma_i^2, i=1, \ldots, n,$ is taken as the product of the $\lambda_i$ and  $\xi_i$ from the most recent Gibbs sampling updates for $(\lambda_i, \xi_i), i = 1, \ldots, n$.  We tune $\omega$ so that the acceptance rate is between 20 and 40 percent.

\subsection{Truncated Cauchy Prior on the Hyperparameter $a$}
If we place a truncated Cauchy prior on $a$ where $a \in [1/n, 1]$, i.e. $\pi(a) = [\textrm{arctan}(1) -\textrm{arctan}(1/n)]^{-1} (1+a)^{-1} \mathbb{I} \{ 1/n < a < 1 \}$, the full conditional for $a$ is proportional to
\begin{equation} \label{conditionalatruncatedcauchy}
\pi(a | \textrm{rest} ) \propto \left( \frac{\Gamma(a+b)}{ (1+a) \Gamma(a)} \right)^{n}  \left( \displaystyle \prod_{i=1}^{n} (\sigma_i^2)^{a-1} (1+\sigma_i^2)^{-a-b} \right)  \mathbb{I} \{ 1/n \leq a \leq 1 \}.
\end{equation}
As before, we use Metropolis-Hastings to update $a$. We use the truncated normal density $q(a^* |a )$ from (\ref{truncatednormaldist}) as the proposal distribution. Given a candidate state $a^*$ drawn from $q(a^* | a)$ in (\ref{truncatednormaldist}), it follows from (\ref{conditionalatruncatedcauchy}) that $a^*$ is accepted with probability,
\begin{equation*}
\min \left\{ 1, \left( \frac{(1+a) \Gamma (a^*+b) \Gamma(a)}{(1+a^*) \Gamma(a+b) \Gamma (a^*)} \right)^{n} \displaystyle \left( \prod_{i=1}^{n} \left( \frac{\sigma_i^2}{1+\sigma_i^2} \right)^{a^*-a} \right) \left[ \frac{\Phi \left( \frac{1-a}{\omega} \right) -\Phi \left( \frac{1/n-a}{\omega} \right)}{\Phi \left( \frac{1-a^*}{\omega} \right) -\Phi \left( \frac{1/n-a^*}{\omega} \right)} \right]\right\},
\end{equation*}
where $\sigma_i^2, i=1, \ldots, n,$ is taken as the product of the $\lambda_i$ and  $\xi_i$ from the most recent Gibbs sampling updates for $(\lambda_i, \xi_i), i = 1, \ldots, n$.  We tune $\omega$ so that the acceptance rate is between 20 and 40 percent.

\subsection{Convergence of the MCMC Algorithm}
To assess the convergence and the mixing of the MCMC algorithms for the hierachical Bayes approaches described in Sections D.2 and D.3, we consider two chains with different starting values: 1) $\theta^{(0)}_i = -15, i= 1, \ldots, n,$ and 2) $\theta^{(0)}_i = 15, i= 1, \ldots, n$. In our simulation studies, the true $\bm{\theta}_0$ was generated from
\begin{align*}
\theta_{0i} \overset{iid}{\sim} (1-p) \delta_{0} + p \mathcal{N}(0, \psi^2), i =1, \ldots, n,
\end{align*}
with $\psi = \sqrt{2 \log (500) } = 3.53$. Thus, these initial values for $\theta_{i}^{(0)}, i = 1, \ldots, n$, are all far away from a `typical' value of $\theta_{0i}$. We found that in both cases, the MCMC algorithms still converged very rapidly (usually within 100 iterations), giving very similar posterior estimates for $\bm{\theta}$ after discarding the first 5000 iterations as burnin.

To illustrate this, we plot in Figure \ref{MCMCplots} the history plots for one noncoefficient coefficient ($\theta_{0i} = 7.225$) and one null coefficient $(\theta_{0i} = 0$) when the sparsity level is $p=0.2$. For the nonnull coefficient, we see that the chains mix well and rapidly converge to a stationary distribution centered around the true value of $\theta_{0i}$. For the null coefficient, the chains rapidly converge to a stationary distribution centered around zero.

\begin{figure}[h!]
	\centering
	\includegraphics[width=.75\textwidth]{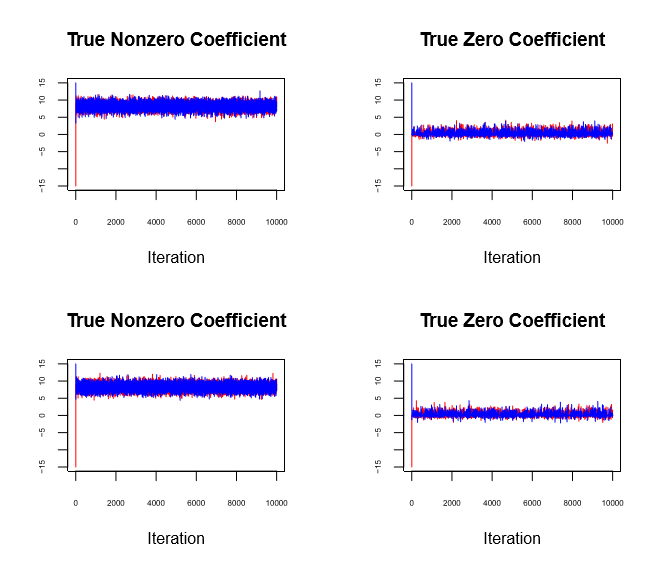} 
	
	\caption{History plots of the 10,000 draws from the MCMC algorithm for the NBP-UNIF (top panel) and NBP-TC models (bottom panel) for a single $\theta_{0i}$. The plots on the left are for a $\theta_{0i}$ whose true value is equal to 7.225, and the plots on the right are for a $\theta_{0i}$ whose true value is equal to 0.}
	\label{MCMCplots}
\end{figure}

\end{document}